\documentclass{article}
\usepackage{booktabs}
\usepackage{pdflscape}
\usepackage[english]{babel}
\usepackage[letterpaper,top=2cm,bottom=2cm,left=3cm,right=3cm,marginparwidth=1.75cm]{geometry}
\usepackage{booktabs}
\usepackage{amsmath}
\usepackage{amsthm}
 \usepackage{amssymb}
\usepackage{graphicx}
\usepackage[colorlinks=true, allcolors=blue]{hyperref}
\usepackage{multirow}
\usepackage{tikz}
\usepackage{subcaption}
\usetikzlibrary{positioning,arrows.meta}
\usepackage{float}
\usepackage{natbib}
\usepackage{authblk}
\usepackage{bbm}
\newtheorem{theorem}{Theorem}

\newtheorem{proposition}{Proposition}
\newtheorem{remark}{Remark}
\newtheorem{corollary}{Corollary}
\newtheorem{assumption}{Assumption}
\newtheorem{lemma}{Lemma}
\usepackage{tabularx}
\usepackage{makecell}
\usepackage{comment}
\usepackage{longtable}
\usepackage[dvipsnames]{xcolor}
\definecolor{blue}{HTML}{1F77B4}
\definecolor{orange}{HTML}{FF7F0E}
\definecolor{green}{HTML}{2CA02C}
\definecolor{red}{HTML}{D62728}
\definecolor{purple}{HTML}{9467BD}
\definecolor{brown}{HTML}{8C564B}
\definecolor{pink}{HTML}{E377C2}
\definecolor{grey}{HTML}{7F7F7F}
\definecolor{yellow}{HTML}{BCBD22}
\definecolor{cyan}{HTML}{17BECF}
\definecolor{turquoise}{HTML}{3FE0D0}
\definecolor{royalblue}{HTML}{0038A8} 
\definecolor{darkpurple}{HTML}{5B2C83}

\definecolor{royalblue}{named}{black} 

\usepackage[algo2e,ruled,vlined]{algorithm2e}
\definecolor{algoColorKeyword}{named}{blue}
\definecolor{algoColorComment}{named}{olive}

\SetKwComment{Comment}{$\triangleright$\ }{}
\DontPrintSemicolon

\usepackage{enumitem}
\setlist{leftmargin=*, topsep=0.5em, parsep=0pt, itemsep=1em, labelindent=0pt, align=left}

\usepackage{orcidlink}
\definecolor{orcidlogocol}{HTML}{2E7E9F}

\title{A deep learning approach for pricing convertible bonds with path-dependent reset and call provisions}

\author[1,2]{\orcidlinki{\textcolor{black}{Qinwen Zhu}}{0000-0001-5659-3309}}
\author[3]{\orcidlinki{\textcolor{black}{Wen Chen}}{0000-0002-7268-0979}\thanks{Corresponding author, wenchen@bnbu.edu.cn}} 
\author[3]{\orcidlinki{\textcolor{black}{Nicolas Langren\'e}}{0000-0001-7601-4618}} 

\affil[1]{\normalsize School of Financial Technology, Shanghai Lixin University of Accounting and Finance, Shanghai, P.R. China.} 
\affil[2]{\normalsize Shanghai Science and Technology Industry Research Center, Shanghai, P.R. China.}
\affil[3]{\normalsize Guangdong Provincial/Zhuhai Key Laboratory of Interdisciplinary Research and Application for Data Science, Beijing Normal-Hong Kong Baptist University, Zhuhai, P.R. China.} 

\begin{document}
\maketitle

\begin{abstract}
 This paper develops a deep learning framework for pricing convertible bonds with path-dependent downward reset and issuer call provisions governed by rolling-window triggers. We formulate the valuation problem as a path-dependent partial differential equation (PPDE) that captures both the historical stock-price path and the evolution of the conversion price. Model-specific PPDEs are derived under GBM, CEV, and Heston dynamics. \textcolor{royalblue}{Under suitable conditions, we establish the existence and uniqueness of a piecewise viscosity solution linked by contractual transmission conditions at monitoring dates.} For computation, we construct a fixed-grid backward dynamic programming scheme and approximate its conditional expectations using neural networks, with $L^2$ convergence to the exact fixed-grid recursion as the approximation errors vanish. An application to the China CITIC Bank Convertible Bond produces stable prices across the three models and close agreement with the LSMC benchmark, but outperforms in dimensional scaling. \textcolor{royalblue}{The results show that contractual provisions have a greater valuation effect than the choice of underlying dynamics. The call provision reduces the bond value by truncating upside gains, whereas the downward reset provision increases it under the benchmark specification because improved conversion terms dominate the effect of earlier redemption. Delta and Gamma obtained by automatic differentiation of smooth local network approximations closely agree with central finite-difference estimates.} The framework provides a flexible approach to pricing and sensitivity analysis for convertible bonds with complex path-dependent provisions.
\par\medskip
\textbf{Keywords}: convertible bonds, path-dependent provisions, path-dependent partial differential equation, artificial neural networks, least squares Monte Carlo.
\end{abstract}

\section{Introduction}

Convertible bonds (CBs) are hybrid financial instruments that give bondholders the right, but not the obligation, to convert the bond into a fixed number of the issuer's common shares at a predetermined conversion price under specific terms, either prior to maturity or  forced redemption by the issuer. Until conversion or maturity, holders retain the cash flow rights of a conventional bond, including periodic coupon payments and principal repayment at maturity. As a typical financial instrument, CBs combine the downside protection of conventional bonds with the upside potential of the issuer's equity. This asymmetric payoff structure limits investors' losses when the underlying stock performs poorly, while allowing them to benefit from stock-price
appreciation through voluntary conversion. From a corporate financing perspective, CBs reduce initial equity dilution relative to direct equity issuance and lower coupon costs compared to ordinary bonds, making them a widely used financing tool globally. 

Most real-world CBs embed two critical path-dependent provisions: an issuer call provision, which allows the issuer to redeem the bonds early at a contractually specified call price once a rolling-window price trigger is met, and a downward conversion price reset, in which the conversion price is reduced if the stock underperforms over a rolling window, partially protecting investors on the downside. These clauses introduce strong nonlinearity and path-dependence. The reset feature is particularly prevalent in Chinese CBs, yet poses severe challenges to traditional pricing methods. 

The valuation of CBs has been an important topic in financial engineering due to this inherent complexity. Existing methods fall into three broad categories: closed-form solutions, partial differential equation (PDE)-based numerical methods, and simulation-based approaches. 
An early closed-form approach is Ingersoll's contingent-claim pricing model \citep{ingersoll1977contingent}, in which the author extended the Black-Scholes option pricing framework to CBs pricing with dilution effects. \citet{ingersoll1977contingent} provides a structural approach for convertible bond pricing with embedded options, which can be viewed as an option determined by the value of the firm. Subsequent studies \citep{lewis1991convertible, nyborg1996use,zhu2006closed} developed closed-form solutions under optimal conversion strategies only at maturity \citep{nyborg1996use} and any time before maturity \citep{zhu2006closed}. These closed-form expressions cannot accommodate realistic path-dependent features.

Second, PDE-based methods formulate CB pricing as a system of variational inequalities, solved numerically via lattice-based methods or finite difference methods \citep{brennan1977convertible,brennan1980analyzing,tsiveriotis1998valuing,ayache2003valuation}. \citet{brennan1977convertible}~proposed PDEs governing the value of the CB with boundary conditions and incorporated stochastic interest rate \citep{brennan1980analyzing}. \citet{tsiveriotis1998valuing}~developed a coupled PDE system for the equity component under the standard Black-Scholes model and the debt component accounting for credit risk. \citet{ayache2003valuation}~unified the aforementioned coupled model into a single PDE framework. \citet{lau2004anatomy}~considered soft call and conversion condition with delayed calls.
 \citet{lin2020numerically}~used an ADI-based method to solve the PDEs with stochastic volatility and interest rates. \citet{lim2026structured}~proposed a structured framework for resettable convertible bonds.

Following the foundational binomial tree approach in option pricing \citep{cox1979option} and the bivariate tree method for two-factor models \citep{hull1994numerical}, lattice extensions such as binomial trees \citep{ammann2003} and trinomial trees \citep{xu2009tree} have also been used for CB pricing. \citet{ma2020valuation}~proposed a hybrid willow tree method to consider soft call and put provisions. Recently, \citet{costabile2026combining} combined lattice with regression techniques for path-dependent soft call and put provisions. Other PDE approaches, such as the finite element method, have also been used for CB pricing \citep{baroneadesi2003two}. However, these methods are not well suited to jointly handle early exercise and reset features,  stochastic volatility, and the rolling‑window path dependence that characterise Chinese CBs due to high dimensionality.

Third, simulation-based methods, particularly the least‑squares Monte Carlo (LSMC) method \citep{carriere1996valuation,longstaff2001valuing}, are more flexible for realistic dynamics of state variables and CB's path dependence features under the dynamic programming framework \citep{ammann2008,kimura2006monte,yang2010note}. \citet{li2025pricing}~considered call/put and reset clauses, approximated the conditional expectation approximation using quadratic basis functions under the LSMC framework while using constant volatility and Markovian trigger conditions. 

Recently, a few articles applied machine learning methods to convertible bonds pricing. \citet{zhu2024pricing} developed a learning based Monte Carlo simulation method to replace ordinary least squares regression without considering contractual provisions. Neural-network-based data-driven models such as financial time-series synthetic data generation have also been used to simulate the returns of underlying stocks. \citet{tan2022deep} simulated stock returns using time-series generative adversarial networks (GANs) to capture their fat-tail properties for valuation. \citet{li2025pricing} combined GANs with transformers to generate stock returns. So far, the existing learning-based approaches do not focus on optimal decision policies estimation, ignore joint provisions, and remain low-dimensional.  

Despite extensive development of theoretical CB pricing models, two critical gaps still remain. Empirical evidence consistently documents significant pricing discrepancies in real markets \citep{ammann2003,zabolotnyuk2010}. Using samples of pure convertible bonds and real transaction data, recent studies show that convertible bonds are systematically underpriced, with mispricing driven by factors such as stochastic volatility, liquidity and market conditions \citep{ammann2003}. In particular, pricing errors are more pronounced during market stress periods and for bonds with higher credit risk and lower liquidity \citep{batten2018pricing, li2025convertible}.

Incorporating embedded features such as call and reset clauses is also difficult, especially when these features are path-dependent. Earlier work, such as \citet{nyborg1996use}, only considered non-callable bond for simplicity. 
The call provision is a typical feature granting the issuer the right to redeem before maturity under certain conditions, while the investor has the right to convert the CB into equities upon and before calling. Hard call provisions, representing a fixed non-callable period during which the issuer cannot redeem the bond, have been considered \citep{tsiveriotis1998valuing}. Soft call provisions further require the stock price to meet certain triggering conditions: having the stock price to exceed a predefined trigger price \citep{brennan1977convertible, ayache2003valuation}, or to remain above a threshold for a specific period of time, e.g., 15 days, or for a fraction of a window, e.g., 15 days out of 30 days, also called $m$-out-of-$n$ soft call provision under a moving window setting, also known as Parisian feature. 

Early classical frameworks, including \citet{brennan1977convertible, ayache2003valuation, lau2004anatomy}, focus only on path-independent soft call/put provisions. While common in CB markets, path-dependent $m$-out-of-$n$ Parisian-style soft-call provisions are difficult to handle with traditional numerical methods. For instance, \citet{lau2004anatomy} noted that the computational cost of their finite-difference method approach grows exponentially with the length of the moving window. \citet{liu2020} used a Monte Carlo method to approximate the call activation probability, triggering the provision once the probability exceeds 50\%. Both methods have to resort to crude approximations which yield biased bond prices. In \citet{ma2020valuation}, the soft call activation probability was estimated using Brownian bridge simulations.
The recent article~\citet{costabile2026combining} improved the probability estimation of call activation at each time step, using a proxy function constructed by the Brownian bridge estimation under the classical LSMC framework with polynomial basis functions \citep{longstaff2001valuing} of the stock price and interest rate regressors. 
Another stream of literature concentrates exclusively on conversion price reset clauses. For example, \citet{kimura2006monte} priced non-callable convertible bonds with reset downwards clauses by estimating the conversion probability. \citet{zhu2018puttable} used an integral equation approach for resettable convertible bonds. \citet{cen2024pricing}~used decomposition and PDE methods for American-style resettable CBs. \citet{costabile2026combining}~considered the $m$-out-of-$n$ soft call and put provisions. 
Few models jointly incorporate both call and put clauses and downward resets at the same time. \citet{feng2016valuing} proposed a decomposition-based model, which requires explicit classification of price paths into discrete regimes. \citet{li2025pricing}~considered call/put and reset clauses and estimated the solution under the classical LSMC framework. 

These limitations are especially acute for Chinese CBs. Compared to conventional CBs in Europe and the United States \citep{costabile2026combining}, Chinese CBs exhibit distinctive path-dependent call and reset clauses \citep{li2025pricing}, making bond values highly sensitive to the historical trajectory of the stock price and trigger events. In particular, market expectations regarding potential conversion price revisions and issuer-forced redemptions can materially affect valuation. Consequently, this strong path-dependent feature makes them less amenable to conventional low-dimensional pricing methods. Moreover, the sensitivity of model outputs to inputs, such as volatility dynamics and credit spreads, further contributes to the discrepancies between model-implied values and market prices. Therefore, advanced valuation models must carefully incorporate both path dependence and the dynamics of state variables.
\textcolor{royalblue}{Motivated by these challenges, in this paper, we develop a deep learning-based numerical framework for pricing CBs with path-dependent downward reset and call provisions. \textcolor{royalblue}{The framework combines model-specific PPDE formulations under GBM, CEV, and Heston dynamics with a piecewise viscosity characterisation across monitoring dates. For computation, we develop a fixed-grid backward dynamic programming scheme and use DNNs to approximate its conditional expectations. We establish $L^2$ convergence to the exact recursion. We evaluate the framework using the China CITIC Bank Convertible Bond \citep{citicbank2019prospectus} across all three model specifications.}}

In summary, the three key contributions of our paper are the following:
\begin{enumerate}
    \item Theoretically, we are the first to formulate CB pricing with joint downward resets and path‑dependent calls as a PPDE, capturing historical price dependence and conversion price dynamics in a unified framework. \textcolor{royalblue}{We further establish a piecewise viscosity characterisation with comparison and monotone transmission across monitoring dates.}
    \item\textcolor{royalblue}{Numerically, we develop a backward DNN algorithm for the fixed-grid dynamic programming recursion and establish $L^2$ convergence as the network approximation errors vanish. The DNN prices closely align with the LSMC benchmarks. Taking advantage of the differentiable network representation, Delta and Gamma can be directly computed via automatic differentiation over smooth local approximations, avoiding costly repeated bump-and-revalue calculations. These Greek estimates match central finite differences results.}
    \item  In the empirical analysis, using the China CITIC Bank convertible bond as an example, we show that contractual features play a more important role in valuation than the specification of the underlying dynamics. \textcolor{royalblue}{The call provision reduces the bond value, whereas introducing the downward reset provision to the callable bond increases it. Under the benchmark parameters, the benefit from improved conversion terms dominates the effect of the lower call threshold and the associated increased likelihood of early redemption.}
\end{enumerate}

The paper proceeds as follows. Section~\ref{sec:PPDE} develops the path-dependent pricing framework and derives the PPDEs under the GBM, CEV, and Heston models. Section~\ref{sec:deepapp} presents the discrete-time recursion, its piecewise viscosity characterisation, the backward DNN algorithm, and the fixed-grid convergence analysis. Section~\ref{sec:numerical} reports the numerical results, including the LSMC benchmark, contract-feature analysis, parameter sensitivities, and Greeks. Section~\ref{sec:conclusion} concludes. Appendices~\ref{sec:a1} and~\ref{appen:a2} provide the theoretical proofs. Additional appendices present supplementary analysis of dividends and anti-dilution adjustments, Greeks across evaluation times, regularisation, alternative network architectures, and notation.

\section{A path-dependent pricing framework}\label{sec:PPDE}

This section formulates the pricing of convertible bonds with downward reset and call provisions as a path-dependent problem, leading to a path-dependent partial differential equation (PPDE) representation, and derives the corresponding equations under GBM, CEV, and Heston dynamics.

\subsection{PPDE representation}

A convertible bond can be viewed as a hybrid security that combines a straight bond component with embedded equity conversion option. At any time $t$, the payoff of the contract depends on the current stock price $S_t$ and the prevailing conversion price $H_t$. If held to maturity $T$, the bondholder receives the maximum between the face value of the bond $B$ and the conversion value $(B/H_T) S_T$, resulting in the terminal payoff
\[
\max\!\left(B,\frac{B}{H_T}S_T\right).
\]
In practice, most convertible bonds incorporate two critical path-dependent provisions: an issuer call and a downward reset clauses. The call provision grants the issuer the right to redeem the bond once certain path-dependent trigger conditions are met, which effectively caps the upside potential for the bondholders. The redemption call price $K_t$ is normally the par value $B$ plus accrued interest.
On the other hand, the downward reset provision allows a reduction in the conversion price $H_t$ following a prolonged poor stock price performance, which improves the conversion value and offer investors partial downside protection.

A distinctive feature of these contractual provisions is that the triggering conditions depend on how the underlying stock price evolves over time, rather than only its current level. In practice, the triggers are typically determined by rolling-windows rules: for instance, whether the stock price has stayed above or below a specific proportion of the conversion price for a consecutive number of trading days. As a result, the value of the contract is determined not only by the current state variables $(S_t, H_t)$, but also by the historical path of the stock price.
As a representative example, the China CITIC Bank convertible bond (code: 113021) is one of the largest bank-issued convertible bonds in the Chinese market, with an issuance size of approximately RMB 40 billion. It is a standard large-sized convertible security with a fixed conversion price and six-year maturity (from 4 March 2019 to 3 March 2025), which allows bondholders to convert the bond into a predetermined number of shares during the conversion period. In addition to this standard conversion feature, the contract incorporates typical path-dependent adjustment and redemption mechanisms that are widely adopted in the Chinese convertible bond market \citep{mcguinness2010listing, zhang2023financing}.

In particular, for this CITIC Bank CB, the conversion reset is activated when the stock price stays below 80\% of the current conversion price for at least 15 trading days within any consecutive 30 trading days. On the other side, when the stock price exceeds 130\% of the conversion price for at least 15 trading days within a 30-day rolling window, the issuer has the right to call the bond before maturity. These path-dependent provisions are designed to balance the interests of investors and issuers: the reset clause provides downside protection by adjusting the conversion price downward, while the call provision enables the issuer to force conversion when the equity option becomes sufficiently deep in the money. To simplify the analysis and focus on the joint 
path-dependent effects of call and reset provisions, we adopt the following assumptions.

\begin{assumption}\label{assump1}
For the pricing framework developed in this paper, we impose the following assumptions:
\begin{enumerate}
  
    \item  \textcolor{royalblue}{Following the resettable convertible-bond model of
    \citet{kimura2006monte}, the underlying stock pays no dividends. We also abstract from rights issues, share placements, and the associated corporate-action-based anti-dilution adjustments
    \citep{woronoff2005antidilution} so that analysis can focus on the
    path-dependent reset and call provisions.}\footnote{\textcolor{royalblue}{The implications of dividends and
    corporate-action-based anti-dilution adjustments, together with a
    possible extension of the pricing framework, are discussed in
    Appendix~\ref{app:antidilution}.}}
    
     \item \textcolor{royalblue}{For analytical tractability, the benchmark model assumes that voluntary early conversion prior to the triggering of the call provision condition is suboptimal and therefore excludes the holder's stopping
    decision from the ordinary continuation region. This is a modelling simplification rather than a contractual restriction. Once the call condition is satisfied, the issuer has the right to redeem the bond and the holder has the right to convert it into shares.}
    
    \item \textcolor{royalblue}{A reset-eligible state is determined by a rolling-window condition
    under which the closing stock price falls below a specified proportion of the prevailing conversion price. For analytical tractability, the revised conversion price is set equal to the stock price at the threshold. The benchmark abstracts from the approval procedure and the contractual floors based on trading-price averages, the latest audited net asset value per share, and the par value per share.}
    
    \item Credit risk is omitted in the baseline model to isolate and emphasise the impact of contractual path dependence. Also, as the vast majority of Chinese CBs come with guarantees provided by high-credit-rated banks, credit risks associated with China’s CBs can be reasonably excluded.
    
    \item Under the risk-neutral measure, the underlying stock price follows an arbitrage-free stochastic process. We consider three specifications: GBM, CEV, and the Heston models.
  \item \textcolor{royalblue}{
Let $0=t_0<t_1<\cdots<t_N=T$ denote the contractual monitoring
dates. On each interval $(t_i,t_{i+1})$, the value functional is
non-anticipative and is interpreted in the viscosity sense using
$C^{1,2}$ test functionals in the sense of Dupire
\citep{ContFournie2013}. No $C^{1,2}$ regularity of the value
functional is imposed across the monitoring dates.
}
    \item \textcolor{royalblue}{If the call and reset conditions are satisfied simultaneously, we evaluate the call condition first to ensure that the contractual boundary conditions are mutually exclusive.}
\end{enumerate}
\end{assumption}

Under these assumptions, the contractual provisions explicitly depend on the historical evolution of the underlying stock price, making the valuation problem inherently path-dependent. This motivates us to formulate the pricing problem within a path-dependent partial differential equation (PPDE) framework presented in the following theorem.

\begin{theorem}[\textcolor{blue}{PPDE representation for convertible bonds with reset and call provisions}]
\label{thm:ppde_cb}
 Let $\omega_t=\{S_u:0\le u\le t\}$ be the historical path of the underlying stock price $S_u$ up to time $t$, and let $H_t$ be the current conversion price. 
\textcolor{royalblue}{The convertible bond value is represented by the functional}
\begin{equation}\label{eq:valuefunc}
V(t,\omega_t,H_t),
\end{equation}
\textcolor{royalblue}{which is interpreted as a viscosity solution in the
continuation region \eqref{eq:C}.}
Under the risk-neutral measure $\mathbb Q$, let $r_t$ be the risk-free interest rate, $\sigma(t,\omega_t,H_t)$ be the volatility function and $W=(W_t)_{t\geq 0}$ be a standard Brownian motion. The stock price process follows:
\begin{equation}
dS_t = r_t S_t\,dt + \sigma(t,\omega_t,H_t)S_t\,dW_t .
\end{equation}
We then define the two path-dependent counting processes $N_t^{\mathrm{reset}}$ and $N_t^{\mathrm{call}}$ for the rolling-window triggers of reset and call, respectively, as follows:
\begin{align}
N_t^{\mathrm{reset}}(\omega_t;H_t)
&=
\sum_{i=1}^{30}
\mathbbm{1}_{\{S_{t-i}<a H_t\}},\\
N_t^{\mathrm{call}}(\omega_t;H_t)
&=
\sum_{i=1}^{30}
\mathbbm{1}_{\{S_{t-i}\ge b H_t\}},
\end{align}
where $a\in (0,1)$ and $b>1$ denote the downward revision ratio and the call threshold ratio, respectively.
The continuation region where neither provision is triggered is defined as
\begin{equation}\label{eq:C}
\mathcal C
=
\left\{
(t,\omega_t,H_t):
N_t^{\mathrm{reset}}(\omega_t;H_t)<N_r
\;\text{and}\;
N_t^{\mathrm{call}}(\omega_t;H_t)<N_c
\right\}.
\end{equation}
where $N_r$ and $N_c$ are the contractual minimum number of trigger days required for reset and call activation. In the following specific model settings (GBM, CEV, and Heston), the path-dependent functional $V(t,\omega_t,H_t)$ admits a Markovian representation by augmenting the state space with sufficient statistics of the path. In such cases, we write the value function as $V(t,S_t,H_t)$ or $V(t,S_t,v_t,H_t)$ for notational convenience.

Let $B$ be the bond face value, $C$ be the annual coupon rate, $K_t$ the redemption call price at time $t$ which is a par value $B$ plus accrued interest, and $\Psi(\omega_t,H_t)$ be the conversion price adjustment function triggered by the downward reset provision. Then, the convertible bond value function $V(t,\omega_t,H_t)$ satisfies the following path-dependent partial differential equation with boundary and terminal conditions:
\begin{equation}\label{eq:ppde}
\left\{
\begin{aligned}
&\partial_t V(t,\omega_t,H_t)
+\mathcal{A}V(t,\omega_t,H_t)
-r_t V(t,\omega_t,H_t)
+B\cdot C = 0,
&& (t,\omega_t,H_t)\in\mathcal{C}, \\[6pt]
&V(t,\omega_t,H_t)
=
\max\left(
\frac{B}{H_t}S_t,\;
\min\bigl(V^{\mathrm{cont}}(t,\omega_t,H_t),K_t\bigr)
\right),
&& N_t^{\mathrm{call}}(\omega_t;H_t)\ge N_c, \\[6pt]
&V(t,\omega_t,H_t)
=
V\bigl(t,\omega_t,\Psi(\omega_t,H_t)\bigr),
&& N_t^{\mathrm{reset}}(\omega_t;H_t)\ge N_r, \\[6pt]
&V(T,\omega_T,H_T)
=
\max\!\left(B,\frac{B}{H_T}S_T\right).
\end{aligned}
\right.
\end{equation}
\textcolor{royalblue}{The call and reset conditions above are understood according to a mutually exclusive partition of the state space.}\footnote{\textcolor{royalblue}{In the exceptional case of simultaneous activation,
we adopt a call-first convention according to Assumption~\ref{assump1}:
\[
\underbrace{
\{N^{\mathrm{call}}<N_c\;\text{and}\; 
N^{\mathrm{reset}}<N_r\}
}_{\text{continuation}}
\cup
\underbrace{
\{N^{\mathrm{call}}\ge N_c\}
}_{\text{call}}
\cup
\underbrace{
\{N^{\mathrm{call}}<N_c\;\text{and}\;
N^{\mathrm{reset}}\ge N_r\}
}_{\text{reset}}.
\]}}
In equation~\eqref{eq:ppde}, $\mathcal A$ denotes the infinitesimal generator associated with the underlying price dynamics and $\Psi(\omega_t,H_t)$ denotes the reset mapping that specifies the updated conversion price once the reset condition is triggered. In practice, the revised conversion price is determined by recent stock price averages and other contractual constraints. \textcolor{royalblue}{Then, $\partial_t V$ denotes the horizontal derivative, while
$\partial_{\omega}V$ and $\partial_{\omega\omega}V$ denote the first- and second-order vertical derivatives, respectively.}
Finally, the continuation value \[
V^{\mathrm{cont}}(t,\omega_t,H_t)
:=
\mathbb{E}^{\mathbb Q}\!\left[
e^{-\int_t^{t+\Delta} r_s ds}
V(t+\Delta,\omega_{t+\Delta},H_{t+\Delta})
+
\int_t^{t+\Delta} e^{-\int_t^u r_s ds} BC\,du
\;\middle|\;
\omega_t,H_t
\right]
\]
represents the bond value when the conversion option is not exercised although the call provision has been triggered, and $\Delta>0$ is a small time increment.\footnote{Corresponding to the discretisation step size $h$ in the dynamic programming scheme.} \textcolor{royalblue}{\textcolor{royalblue}{
    The functional It\^{o} formula is applied only locally in the continuation region $\mathcal C$, where the conversion price $H$ remains unchanged. At the reset boundary, the change in $H$ is imposed through the nonlocal switching condition
    $
    V(t,\omega_t,H)
    =
    V\!\left(t,\omega_t,\Psi(\omega_t,H)\right),
    $
    rather than being included in the local differential operator. Consequently, no functional $C^{1,2}$ regularity is required across the reset boundary. If a classical solution does not exist, the PPDE is interpreted in the viscosity sense in $\mathcal C$, subject to the call, reset, and terminal conditions
    in \eqref{eq:ppde}.
    }}
\end{theorem}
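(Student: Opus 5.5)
We would prove Theorem~\ref{thm:ppde_cb} by starting from the risk-neutral pricing representation of the bond as a discounted expectation of its contractual cash flows, and then reading off each line of \eqref{eq:ppde} locally in the corresponding cell of the mutually exclusive partition (continuation, call, reset). Concretely, we first define $V(t,\omega_t,H_t)$ as the conditional $\mathbb Q$-expectation of discounted coupons $B\cdot C$ plus the terminal or early-termination payoff. The early termination is governed by the first time the call counter $N^{\mathrm{call}}$ reaches $N_c$. When that happens, the issuer's redemption at $K_t$ is compared with the holder's conversion right $(B/H_t)S_t$, and $V^{\mathrm{cont}}$ is the one-step-ahead continuation value. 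The conversion price $H$ is a pure-jump process that changes only when $N^{\mathrm{reset}}$ reaches $N_r$ outside the call region. By Assumption~\ref{assump1}(2), voluntary conversion is excluded in $\mathcal C$, so no obstacle term appears there.

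Next we establish the dynamic programming principle. For $t<s$ with the path staying in $\mathcal C$ on $[t,s]$ (which is well defined, since the counters change only at the monitoring dates $t_i$), we have
\[
V(t,\omega_t,H_t)=\mathbb E^{\mathbb Q}\!\Big[e^{-\int_t^s r_u du}V(s,\omega_s,H_t)+\int_t^s e^{-\int_t^u r_\ell d\ell}BC\,du\,\Big|\,\omega_t,H_t\Big],
\]
with $H$ frozen. This follows from the tower property and the flow property of the non-anticipative SDE. Given this identity, the boundary lines are immediate. The terminal condition is the payoff at $T$. On the call cell, the issuer minimises, giving $\min(V^{\mathrm{cont}},K_t)$, and the holder then maximises against $(B/H_t)S_t$. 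On the reset cell, the contract is by definition identical to one with conversion price $\Psi(\omega_t,H_t)$, which gives $V(t,\omega_t,H_t)=V(t,\omega_t,\Psi(\omega_t,H_t))$. The call-first convention makes these assignments unambiguous.

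For the interior equation, we argue on each open interval $(t_i,t_{i+1})$ using the functional It\^o formula of Dupire/Cont--Fourni\'e, as allowed by Assumption~\ref{assump1}(6). Suppose $V$ is $C^{1,2}$ in the horizontal/vertical sense. Applying the formula to $e^{-\int_t^s r}V$, comparing with the DPP, dividing by $s-t$ and letting $s\downarrow t$ gives $\partial_tV+\mathcal AV-rV+BC=0$, with $\mathcal A V=r_tS_t\partial_\omega V+\tfrac12\sigma^2S_t^2\partial_{\omega\omega}V$. If $V$ is not regular, we use the same DPP with $C^{1,2}$ test functionals touching $V$ from above or below at $(t,\omega_t)$. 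The standard argument then yields the sub- and supersolution inequalities, so $V$ is a viscosity solution in $\mathcal C$. For the GBM, CEV and Heston specifications, the rolling window is finite, so the path dependence reduces to a finite vector of lagged prices and indicators plus $H$. This justifies the Markovian notation $V(t,S_t,H_t)$ or $V(t,S_t,v_t,H_t)$; for Heston, $\mathcal A$ picks up the variance terms.

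The main obstacle is the viscosity step. It requires enough continuity of $V$ in the path, uniformly on each interval, for the test-functional argument to work. We would obtain this from Lipschitz/growth bounds on $\sigma$ together with the fact that, strictly between monitoring dates, the indicators in $N^{\mathrm{reset}}$ and $N^{\mathrm{call}}$ do not change. Within each interval $V$ therefore behaves like a functional with continuous path dependence. Across the dates $t_i$ we impose nothing, which is exactly why we work interval by interval.
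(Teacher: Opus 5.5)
Your argument is essentially the paper's: you apply the functional It\^o formula in $\mathcal C$ with $H$ frozen, get a vanishing drift of the discounted cum-coupon value, and read the call, reset and terminal lines off the contract under the call-first partition; you derive the vanishing drift from the DPP, whereas the paper asserts the local-martingale property of $M_t$ under $\mathbb Q$, which is the same statement, and your viscosity sketch (which the paper's proof of this theorem only asserts) is what it carries out in Steps 2, 4 and 5 of the proof of Theorem~\ref{thm:ppde_cb_viscosity}. One caution on your ``main obstacle'': $V$ is not continuous in the path uniformly on an interval, because flipping a retained window observation across $aH$ or $bH$ changes a later trigger count and $V(t_{i+1}^-,\cdot)=\mathcal J_{i+1}V(t_{i+1}^+,\cdot)$ jumps at the thresholds (Lipschitz bounds on $\sigma$ do not repair this), but the paper's argument does not need such continuity, since it freezes $(H_i,\mathcal P_i)$ as parameters, uses mean-tangency test functionals, and handles threshold boundaries through semicontinuous envelopes.
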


\begin{proof}
    A detailed proof is given in Appendix~\ref{app:proof1}.
\end{proof}

\begin{remark}
    The above formulation captures the essential contractual features of Chinese convertible bonds. The call provision introduces an
    early termination condition, while the downward reset clause
    introduces a regime-switching structure that endogenously updates the conversion price upon trigger activation.
    Since both trigger events depend on the historical stock price
    path, the valuation problem naturally leads to a PPDE framework. \textcolor{royalblue}{The PPDE applies between consecutive monitoring dates. The reset and call conditions are imposed through contractual transmission rules at the monitoring dates as formalised in Theorem~\ref{thm:ppde_cb_viscosity}.}
\end{remark}

\begin{remark}
    Regarding the continuation value situation, when the stock price exceeds the conversion price, bondholders have an incentive to delay conversion until forced redemption in order to benefit from further upside. Converting immediately would forfeit the remaining fixed-income component and the potential for additional gains.
\end{remark}

\subsection{Model-specific PPDE formulations}

Theorem~\ref{thm:ppde_cb} establishes a general PPDE representation for CBs with typical features in the Chinese CB market, namely reset and call provisions. We now derive the explicit forms of the pricing equations under three commonly used stock price dynamics, the GBM, CEV, and the Heston model via specifying the infinitesimal generator associated for each dynamics. In doing so, the general PPDE~\eqref{eq:ppde} can be specialised into three concrete valuation equations shown below.

\subsubsection{GBM model}
Suppose that the stock price
process $S=\{S_t\}_{t\ge0}$ follows a geometric Brownian motion
\begin{equation}
dS_t = r S_t\, dt + \sigma S_t\, dW_t, \qquad S_0>0,
\end{equation}
under the risk-neutral probability measure $\mathbb{Q}$, where
$r>0$ denotes the constant risk-free interest rate, $\sigma>0$ is
the constant volatility parameter, and $W=\{W_t\}_{t\ge0}$ is a standard
Brownian motion defined on a filtered probability space
$(\Omega,\mathcal{F},\{\mathcal{F}_t\}_{t\geq0},\mathbb{Q})$.
Based on Theorem~\ref{thm:ppde_cb}, the PPDE representation under the GBM dynamics can be derived as follows.
\begin{corollary}[\textcolor{blue}{PPDE under the GBM model}]
\label{cor:gbm_ppde}
Under the GBM stock price dynamics, the value function
$V(t,S,H_t)$ of a convertible bond with downward reset
and call provisions satisfies
\begin{equation}\label{eq:gbm}
\left\{
\begin{aligned}
&\partial_t V
+
r
S\partial_S V
+
\frac12\sigma^2 S^2\partial_{SS}V
-rV
+B\cdot C
=0,
&& (t,S,H_t)\in\mathcal C, \\[6pt]
&V(t,S,H_t)
=
\max\left(
\frac{B}{H_t}S,\;
\min\bigl(V^{\mathrm{cont}}(t,S,H_t),K_t\bigr)
\right),
&& N_t^{\mathrm{call}}\ge N_c, \\[6pt]
&V(t,S,H_t)
=
V\!\left(t,S,\Psi(\omega_t,H_t)\right),
&& N_t^{\mathrm{reset}}\ge N_r, \\[6pt]
&V(T,S,H_T)=
\max\!\left(B,\frac{B}{H_T}S_T\right),
\end{aligned}
\right.
\end{equation}
where
\[
V^{\mathrm{cont}}(t,S,H_t)
=
\mathbb{E}^{\mathbb Q}\!\left[
e^{-\int_t^{t+\Delta} r_u\,du}\,
V(t+\Delta,S_{t+\Delta},H_{t+\Delta})
+
\int_t^{t+\Delta} e^{-\int_t^s r_u\,du}\, B C \, ds
\;\middle|\;
S_t=S,\; H_t
\right].
\]
\end{corollary}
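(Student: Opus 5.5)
The corollary follows by specialising Theorem~\ref{thm:ppde_cb} to GBM, so the plan is to check that GBM satisfies the theorem's hypotheses and then compute the generator $\mathcal A$. Taking $r_t\equiv r$ and $\sigma(t,\omega_t,H_t)\equiv\sigma$, the SDE of Theorem~\ref{thm:ppde_cb} becomes $dS_t=rS_t\,dt+\sigma S_t\,dW_t$. It has a unique strong, positive solution $S_t=S_0\exp((r-\tfrac12\sigma^2)t+\sigma W_t)$. The call, reset and terminal conditions, the call-first partition, and the definition of $V^{\mathrm{cont}}$ then carry over verbatim. The only change is that $\omega_t$ is replaced by the current value $S$ together with the rolling-window information, and the discount factor becomes $e^{-r\Delta}$.

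\textbf{Step 1: Markovian reduction.} The trigger functionals $N_t^{\mathrm{reset}}$ and $N_t^{\mathrm{call}}$ depend on $\omega_t$ only through the finite vector of lagged prices $(S_{t-1},\dots,S_{t-30})$ and on $H_t$. Hence $V(t,\omega_t,H_t)$ depends on the path only through $S_t$, $H_t$ and this window. Between consecutive monitoring dates $(t_i,t_{i+1})$ the window entries and $H$ are frozen, and the counts $N^{\mathrm{reset}},N^{\mathrm{call}}$ stay below their thresholds inside $\mathcal C$. I would therefore treat them as parameters, in line with the statement's convention of writing $V(t,S,H_t)$. Consequently, on each interval, the functional $V$ restricted to $\mathcal C$ is an ordinary function of $(t,S)$ indexed by the frozen discrete data.

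\textbf{Step 2: Identification of the generator.} For such a state-dependent functional, the Dupire vertical derivatives reduce to ordinary derivatives: $\partial_\omega V=\partial_S V$ and $\partial_{\omega\omega}V=\partial_{SS}V$. The horizontal derivative reduces to $\partial_t V$, because freezing the path does not change the lagged window inside an interval. Apply the functional It\^o formula locally in $\mathcal C$, exactly as in the proof of Theorem~\ref{thm:ppde_cb}, using $d\langle S\rangle_t=\sigma^2S_t^2\,dt$. This gives $\mathcal A V=rS\partial_SV+\tfrac12\sigma^2S^2\partial_{SS}V$. Substituting into the first line of \eqref{eq:ppde} yields the first line of \eqref{eq:gbm}. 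Discounting then makes $e^{-rt}V(t,S_t,H)+\int_0^t e^{-ru}BC\,du$ a local martingale in $\mathcal C$, which confirms consistency with the coupon accrual term.

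\textbf{Step 3: Viscosity interpretation.} Where $V$ is not $C^{1,2}$, I would verify that the reduction preserves the viscosity notion. A smooth test function $\varphi(t,S)$ touching $V$ from above or below lifts to a $C^{1,2}$ test functional in Dupire's sense that depends only on the current value. The inequalities of Theorem~\ref{thm:ppde_cb} then translate directly into the sub- and supersolution inequalities for the GBM operator. I expect this step to be the main, though mild, obstacle. One must argue that the class of test functionals used in Assumption~\ref{assump1} can be restricted to state-dependent ones without losing the inequalities. This works because the jets of $V$ involve only the $S$-direction while the window data are frozen on each interval.
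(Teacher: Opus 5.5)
Your proposal is correct and follows essentially the same route as the paper. The paper's proof applies It\^o's formula to $V(t,S,H_t)$ under the GBM dynamics with $(dS_t)^2=\sigma^2S_t^2\,dt$, reads off $\mathcal A V=rS\,\partial_S V+\tfrac12\sigma^2S^2\,\partial_{SS}V$, and substitutes this generator into the PPDE of Theorem~\ref{thm:ppde_cb}. Your Markovian reduction (freezing the window and $H$ between monitoring dates, as in Theorem~\ref{thm:ppde_cb_viscosity}) and your lifting of state-dependent test functions to Dupire test functionals add justification that the paper leaves implicit, but they do not change the argument.
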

\begin{proof}
    See Appendix~\ref{app:proofGBM}.
\end{proof}

\subsubsection{CEV model}

Suppose that the stock price process $S=\{S_t\}_{t\ge0}$ follows the
CEV model
\begin{equation}
dS_t = r S_t\, dt + \sigma S_t^{\gamma}\, dW_t, \qquad S_0>0,
\end{equation}
under the risk-neutral probability measure $\mathbb{Q}$, where
$r>0$ denotes the constant risk-free interest rate, $\sigma>0$
is the volatility parameter, $\gamma$ represents the elasticity
coefficient, and $W=\{W_t\}_{t\ge0}$ is a standard Brownian motion
defined on a filtered probability space
$(\Omega,\mathcal{F},\{\mathcal{F}_t\}_{t\geq0},\mathbb{Q})$.
Based on Theorem~\ref{thm:ppde_cb}, the PPDE representation under
the CEV dynamics is derived as follows.

\begin{corollary}[\textcolor{blue}{PPDE under the CEV model}]
\label{cor:cev_ppde}

Under the CEV stock price dynamics, the value function
$V(t,S,H_t)$ of a convertible bond with downward reset
and call provisions satisfies 
\begin{equation}\label{eq:cev}
\left\{
\begin{aligned}
&\partial_t V
+
rS\partial_S V
+
\frac12\sigma^2 S^{2\gamma}\partial_{SS}V
-rV
+B\cdot C
=0,
&& (t,S,H_t)\in\mathcal C, \\[6pt]
&V(t,S,H_t)=\max\left(
\frac{B}{H_t}S,\;
\min\bigl(V^{\mathrm{cont}}(t,S,H_t),K_t\bigr)
\right),
&& N_t^{\mathrm{call}}\ge N_c, \\[6pt]
&V(t,S,H_t)=V\!\left(t,S,\Psi(\omega_t,H_t)\right),
&& N_t^{\mathrm{reset}}\ge N_r, \\[6pt]
&V(T,S,H_T)=
\max\!\left(B,\frac{B}{H_T}S_T\right),
\end{aligned}
\right.
\end{equation}
where
\[
V^{\mathrm{cont}}(t,S,H_t)
=
\mathbb{E}^{\mathbb Q}\!\left[
e^{-\int_t^{t+\Delta} r_u\,du}\,
V(t+\Delta,S_{t+\Delta},H_{t+\Delta})
+
\int_t^{t+\Delta} e^{-\int_t^s r_u\,du}\, B C \, ds
\;\middle|\;
S_t=S,\; H_t
\right].
\]
\end{corollary}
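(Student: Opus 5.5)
The corollary follows from Theorem~\ref{thm:ppde_cb} once we specialise the volatility functional and identify the generator $\mathcal A$ for the CEV dynamics. The argument mirrors the GBM case in Corollary~\ref{cor:gbm_ppde}. Only the diffusion coefficient changes; the contractual boundary, reset and terminal conditions carry over verbatim.

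The first step is to cast the CEV dynamics in the form assumed in Theorem~\ref{thm:ppde_cb}. Take $r_t\equiv r$ and define the state-dependent volatility
\[
\sigma(t,\omega_t,H_t):=\sigma S_t^{\gamma-1}.
\]
Then $dS_t=rS_t\,dt+\sigma(t,\omega_t,H_t)S_t\,dW_t$ coincides with the CEV SDE. This volatility depends on the path only through the current value $S_t$, so it is non-anticipative as the theorem requires.

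We also need well-posedness of the SDE on $[0,T]$ with a strictly positive (or absorbed) solution so that the trigger counts are well defined. For this I would invoke the standard CEV facts: the coefficients are locally Lipschitz on $(0,\infty)$, and for $\gamma\le1$ (or $\gamma>1$, where the process is a strict local martingale after discounting) a unique strong solution exists. If needed, I would restrict to the parameter range used in the paper and treat $0$ as absorbing.

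The second step identifies the generator. Within the continuation region $\mathcal C$, $H_t$ is frozen and the rolling-window counts are constant between monitoring dates. The path enters $V$ only through the augmented Markov state, namely $S_t$ together with the finite window record, which is unchanged by the local dynamics. For a test functional of the form $\varphi(t,S_t,H_t)$, the vertical derivatives reduce to ordinary partial derivatives in $S$. The functional Itô formula of Dupire/Cont–Fournié therefore gives
\[
\mathcal A\varphi=rS\,\partial_S\varphi+\tfrac12\sigma^2S^{2\gamma}\partial_{SS}\varphi .
\]
Substituting this into the first line of \eqref{eq:ppde} yields the first line of \eqref{eq:cev}. The martingale argument, which says that discounted value plus accumulated coupons is a $\mathbb Q$-martingale in $\mathcal C$, is taken from the proof of Theorem~\ref{thm:ppde_cb}. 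It needs only $d\langle S\rangle_t=\sigma^2S_t^{2\gamma}dt$.

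The final step observes that the call, reset and terminal conditions in \eqref{eq:ppde} involve no dynamics beyond the conditional expectation defining $V^{\mathrm{cont}}$. With constant $r$ and CEV transitions, that expectation is exactly the displayed one, so these conditions transfer unchanged.

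The main obstacle is justifying the application of the functional Itô formula, or the viscosity interpretation, near $S=0$ and for growth at infinity. The coefficient $S^{2\gamma}$ is not Lipschitz at $0$ when $\gamma<1/2$, and for $\gamma>1$ the discounted stock may fail to be a true martingale. I would handle this by localisation. Stop at $\tau_n=\inf\{t:S_t\notin(1/n,n)\}$, apply Itô on $[t,t+\Delta]\wedge\tau_n$, and pass to the limit using the bound $V\le \max(B,BS/H)+BCT$. For $\gamma\le1$ this bound is integrable, since $\mathbb E S_t<\infty$. In the viscosity setting we need only test functionals, so degeneracy at $0$ is harmless; we simply note that the CEV coefficients are continuous.
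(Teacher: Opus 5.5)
Your core argument is the paper's proof. You write the CEV diffusion as $\sigma(t,\omega_t,H_t)S_t$ with $\sigma(t,\omega_t,H_t)=\sigma S_t^{\gamma-1}$, which the paper does implicitly. You then apply It\^o's formula with $d\langle S\rangle_t=\sigma^2S_t^{2\gamma}\,dt$ to read off $\mathcal AV=rS\partial_SV+\frac12\sigma^2S^{2\gamma}\partial_{SS}V$, substitute into Theorem~\ref{thm:ppde_cb}, and carry the call, reset and terminal conditions over unchanged. The paper stops at this formal level, so your well-posedness and localisation remarks go beyond it.

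One step in those additional remarks fails as written. The bound $V\le\max(B,BS/H)+BCT$ is false, because the bond value includes the time value of the conversion option and later resets lower $H$. The paper's own CEV estimate in Table~\ref{tab:price2} is about $128.8$, far above $\max(100,100\cdot 6.40/7.45)+100\cdot 0.003\cdot 6=101.8$.

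You do not actually need such a bound. Since $V$ is defined as a conditional expectation, $e^{-rs}V(s,S_s,H)+\int_0^s e^{-ru}BC\,du$ is a true martingale between monitoring dates by the tower property. For fixed $n$, optional stopping at the bounded time $(t+\Delta)\wedge\tau_n$ combined with It\^o's formula gives $\mathbb E\int_t^{(t+\Delta)\wedge\tau_n}e^{-r(s-t)}\bigl(\partial_sV+\mathcal AV-rV+BC\bigr)\,ds=0$. Dividing by $\Delta$ and letting $\Delta\to0$ then identifies the drift without ever sending $n\to\infty$.

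This argument uses only integrability of the payoff. So the strict-local-martingale issue you raise for $\gamma>1$ affects uniqueness of linear-growth solutions, not the direction claimed in the corollary.

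If you prefer to keep your limiting route, you need two correct inputs. The first is a linear-growth bound of the form $V(t,S,H)\le c\,(1+S/H)$. The second is the class (D) property of $e^{-rt}S_t$ for $\gamma\le1$. Integrability $\mathbb E S_t<\infty$ alone does not give the uniform integrability of $\{S_{(t+\Delta)\wedge\tau_n}\}_n$ that the limit requires.
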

\begin{proof}
    See Appendix~\ref{appenCEV}.
\end{proof}

\subsubsection{Heston stochastic volatility model}

Suppose that the stock price process $S=\{S_t\}_{t\ge0}$ follows
the Heston stochastic volatility model
\begin{equation}\label{eq:heston_model}
\begin{aligned}
dS_t &= r S_t\, dt + \sqrt{v_t}\, S_t\, dW_t^S, \qquad S_0>0,\\
dv_t &= \kappa(\theta-v_t)\, dt + \eta\sqrt{v_t}\, dW_t^v, \qquad v_0>0,
\end{aligned}
\end{equation}
under the risk-neutral probability measure $\mathbb{Q}$, where $v_t$ is the stochastic variance of the stock price at time $t\geq 0$,
$r>0$ is the constant risk-free interest rate, $\kappa>0$
is the mean-reversion rate, $\theta>0$ represents the long-term
variance level, $\eta>0$ is the volatility of volatility, and $\rho\in[-1,1]$ is the correlation coefficient between the stock price and variance. The two standard Brownian motions $W^S=\{W_t^S\}$ and $W^v=\{W_t^v\}$ satisfy the correlation condition
\[
dW_t^S dW_t^v = \rho\, dt.
\]
Based on Theorem~\ref{thm:ppde_cb}, the PPDE representation under
the Heston stochastic volatility dynamics is derived as follows.

\begin{corollary}[\textcolor{blue}{PPDE under the Heston model}]
\label{cor:heston_ppde}

Under the Heston stochastic volatility dynamics, the value function
$V(t,S,v_t,H_t)$ of a convertible bond with downward reset
and call provisions satisfies

\begin{equation}\label{eq:heston}
\left\{
\begin{aligned}
&\partial_t V
\!+\!
rS\partial_S V
\!+\!
\kappa(\theta-v)\partial_v V
\!+\!
\frac12 vS^2\partial_{SS}V
\!+\!
\frac12\eta^2 v\partial_{vv}V
\!+\!
\rho\eta vS\partial_{Sv}V
\!-rV
\!+\!B\cdot C
=0,
&& (t,S,v,H_t)\in\mathcal C, \\[6pt]
&V(t,S,v,H_t)=\max\left(
\frac{B}{H_t}S,\;
\min\bigl(V^{\mathrm{cont}}(t,S,v,H_t),K_t\bigr)
\right),
&& N_t^{\mathrm{call}}\ge N_c, \\[6pt]
&V(t,S,v,H_t)=V\!\left(t,S,v,\Psi(\omega_t,H_t)\right),
&& N_t^{\mathrm{reset}}\ge N_r, \\[6pt]
&V(T,S,v,H_T)=
\max\!\left(B,\frac{B}{H_T}S_T\right),
\end{aligned}
\right.
\end{equation}
where
\[
V^{\mathrm{cont}}(t,S,v,H_t)
=
\mathbb{E}^{\mathbb Q}\!\!\left[
e^{-\!\int_t^{t+\Delta}\!r_udu}\,
V(t\!+\!\Delta,S_{t+\Delta},v_{t+\Delta},H_{t+\Delta})
+\!\!
\int_t^{t+\Delta}\!\!\!\!\! e^{-\!\int_t^s \!r_udu}\, B C ds
\middle|
S_t\!=S, v_t\!=v, H_t
\right]
\]
and $v_t$ denotes the instantaneous variance process. For notational simplicity, we write $v$ in place of $v_t$ in (\ref{eq:heston}).

\end{corollary}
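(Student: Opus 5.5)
The plan is to specialise Theorem~\ref{thm:ppde_cb} by computing the infinitesimal generator $\mathcal A$ of the two-dimensional diffusion $(S_t,v_t)$ in \eqref{eq:heston_model}. The boundary, reset and terminal conditions are structural and independent of the dynamics, so they carry over unchanged.

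The first step is to justify the Markovian reduction. Under Heston dynamics the volatility $\sqrt{v_t}$ is no longer a functional of $(t,\omega_t,H_t)$ alone. I would therefore enlarge the state to $(t,\omega_t,v_t,H_t)$. Inside the continuation region $\mathcal C$ the conversion price $H$ is frozen. The rolling-window counters $N^{\mathrm{reset}}_t$ and $N^{\mathrm{call}}_t$ depend only on the discretely monitored past closes and change only at monitoring dates. Hence on each open interval $(t_i,t_{i+1})$ the path enters only through quantities that are constant there. These quantities can be absorbed as parameters, and the value function can be written as $V(t,S,v,H)$, consistent with the convention stated in Theorem~\ref{thm:ppde_cb}.

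The second step applies the (functional) It\^o formula to the discounted value process plus accumulated coupons,
\[
M_t = e^{-\int_0^t r\,du}V(t,S_t,v_t,H)+\int_0^t e^{-\int_0^s r\,du}BC\,ds,
\]
for a $C^{1,2}$ test function on $(t_i,t_{i+1})$. Using $d\langle S\rangle_t=v_tS_t^2dt$, $d\langle v\rangle_t=\eta^2v_tdt$ and $d\langle S,v\rangle_t=\rho\eta v_tS_tdt$, the drift equals $e^{-rt}$ times
\[
\partial_tV+rS\partial_SV+\kappa(\theta-v)\partial_vV+\tfrac12vS^2\partial_{SS}V+\tfrac12\eta^2v\partial_{vv}V+\rho\eta vS\partial_{Sv}V-rV+BC.
\]
Risk-neutral pricing makes $M$ a $\mathbb Q$-martingale in $\mathcal C$, so this drift must vanish, which gives the first line of \eqref{eq:heston}. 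Where classical regularity fails, the same computation applied to smooth test functions touching $V$ from above or below gives the viscosity inequalities, exactly as in Theorem~\ref{thm:ppde_cb}.

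The third step transfers the call, reset and terminal conditions from \eqref{eq:ppde} by substituting the augmented state. For the continuation value, conditioning on $\omega_t,H_t$ is replaced by conditioning on $S_t=S$, $v_t=v$ and $H_t$, using the Markov property of $(S,v)$.

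The main obstacle is the first step. One must argue rigorously that conditioning on $(S_t,v_t,H_t)$ together with the frozen window statistics suffices, and that $V$ is smooth enough in $v$ near the degenerate boundary $v=0$. I would handle the boundary by assuming the Feller condition $2\kappa\theta\ge\eta^2$, so that $v_t>0$ almost surely. Otherwise I would rely on the viscosity interpretation, which needs no boundary condition at $v=0$ beyond the degenerate equation itself.
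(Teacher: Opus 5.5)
Your proposal is correct and takes essentially the paper's route. You apply It\^o's formula to $V(t,S,v,H)$ with $d\langle S\rangle_t=v_tS_t^2\,dt$, $d\langle v\rangle_t=\eta^2v_t\,dt$ and $d\langle S,v\rangle_t=\rho\eta v_tS_t\,dt$, read off the Heston generator, and substitute it into Theorem~\ref{thm:ppde_cb}, carrying the call, reset and terminal conditions over unchanged; the discounted-martingale step you repeat is already part of that theorem's proof. Your extra care about augmenting the state with $v_t$, freezing the window statistics between monitoring dates, and treating the boundary $v=0$ under the Feller condition is not in the paper's appendix proof, but it is consistent with Theorem~\ref{thm:ppde_cb_viscosity} and Remark~\ref{rem:heston_viscosity}.
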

\begin{proof}
    See Appendix~\ref{appenheston}. Here, the value function depends on the latent variance state $v$, while the continuation region $\mathcal C$ is written in accordance with its earlier definition, which is expressed in terms of $(t,S,H_t)$.
\end{proof}

\section{Deep learning approximation in the backward recursion}\label{sec:deepapp}

In this section, we establish the theoretical foundation of the proposed deep learning-based numerical scheme for solving this path-dependent pricing problem. Starting from a discrete-time dynamic programming formulation, we use neural networks to approximate the conditional expectations involved in the backward recursion. \textcolor{royalblue}{We then analyse how the neural-network approximation errors propagate and accumulate through the backward recursion on a fixed time grid.} We also provide detailed implementation, including the neural network approximation, training procedure, and overall numerical workflow, to ensure the method is reproducible.

\subsection{Fixed-grid convergence analysis}

Let $T>0$ denote the contract maturity, $N$ the total number of time steps and $h = T/N$ the uniform time mesh size. We define the discrete time grid as $t_i = ih$, $i=0,1,\dots,N$. The vector of state variables is defined as

\begin{equation}\label{eq:X}
 X_i = (S_i, H_i, \mathcal{P}_i),   
\end{equation}
where
\begin{itemize}
\item $S_i := S_{t_i}$ denotes the underlying stock price at time $t_i$ under the risk-neutral measurer $\mathbb{Q}$.

\item $H_i$ denotes the conversion price of the convertible bond at time $t_i$. This quantity evolves over time due to the downward reset provision, and is updated according to a contractually specified rule whenever the reset condition is triggered.

\item $\mathcal{P}_i$ represents a collection of path-dependent summary statistics constructed from the historical price trajectory $\{S_{t_j}\}_{j \le i}$. It encodes all information needed to evaluate the trigger conditions, including the rolling-window counts of the form
\[
N_i^{\mathrm{reset}}(x) = \sum_{j=1}^{30} \mathbbm{1}_{\{S_{i-j}(x) < aH\}},
\quad
N_i^{\mathrm{call}}(x) = \sum_{j=1}^{30} \mathbbm{1}_{\{S_{i-j}(x) \ge b H\}},
\]
where $a\in(0,1)$ and $b>1$, as well as any additional auxiliary variables, such as moving averages. Here $x$ denotes a generic realisation (deterministic value) of $X_i$.
\end{itemize}
This finite-dimensional state variable $X_i$ provides a Markovian representation of the original path-dependent pricing problem, which allows us to apply dynamic programming and neural network-based regression methods. Based on the rolling-window counting processes, we define the indicator functions for call and reset triggers as follows,
\[
C_i^{\mathrm{trig}}(x) := \mathbbm{1}_{\{N^{\mathrm{call}}_i(x) \ge N_c\}}, 
\qquad
R_i^{\mathrm{trig}}(x) := \mathbbm{1}_{\{N^{\mathrm{reset}}_i(x) \ge N_r\}}.
\]
where $N_c$ and $N_r$ are the contractual threshold days for call and reset activation, respectively. Consistent with the state-space partition established in Assumption \ref{assump1}, we adopt a call-first priority convention. The call provision takes precedence over the reset provision, meaning that the call condition is verified first when both triggers are activated simultaneously. The reset rule is incorporated into the state dynamics through the pathwise update of $H_i$ whenever the reset condition is triggered. 
We next define the one-step backward operator $\mathcal{T}_i$ as follows: 
\[
\mathcal{T}_i \phi(x)
=
\begin{cases}
\displaystyle
\max\left(
\frac{B}{H_i(x)}S_i(x),\;
\min\left(
\mathbb{E}\!\left[e^{-r_i h}\phi(X_{i+1}) + h\,B \cdot C \mid X_i = x \right],
\, K_i
\right)
\right),
& \text{if } C^{\mathrm{trig}}_i(x)=1, \\[10pt]
\displaystyle
\mathbb{E}\!\left[e^{-r_i h}\phi(X_{i+1}) + h\,B \cdot C \mid X_i = x \right],
& \text{otherwise},
\end{cases}
\]
where $\phi(\cdot) $ denotes a measurable function on the state space, representing the value function at the next time step $t_{i+1}$. 

The operator $\mathcal{T}_i$ defines the one-step backward recursion of the convertible bond value and integrates all three key contractual features. First, when the bond call condition is satisfied, i.e., $C_i(x)=1$, the issuer has the right to exercise the call option and redeem the bond at price $K_i$, while bondholders retain the right to convert the bond into equity. When $K_i<V^{\mathrm{cont}}_i(x)$, the issuer exercises the call option and the bond value is immediately determined by the call price $K_i$, corresponding to an optimal stopping event. Otherwise, the bondholder may either continue holding the bond or convert to stocks.
Second, when the reset condition is triggered, i.e., $R_i^{\mathrm{trig}}(x)=1$, the conversion price is adjusted according to the reset rule. Finally, in the continuation region $\mathcal{C}$ where neither condition is met, the bond value is equal to the discounted conditional expected next-step value plus coupon payments, consistent with the standard risk-neutral valuation principle. The priority of call provision ensures that the stopping region dominates the switching region when both conditions are simultaneously activated.

We now introduce the exact discrete-time value function $V_i^h$ generated by the backward recursion
\[
V_i^h = \mathcal{T}_i V_{i+1}^h, \quad \text{for}\quad i=0,1,\dots,N-1,\quad \text{and}\quad V_N^h = g(X_N),
\]
where $h$ is the mesh size, $g (\cdot)$ denotes the terminal payoff function at maturity $T$, and the value at each earlier time step is obtained by applying the operator $\mathcal{T}_i$ to the next-step value $V_{i+1}^h$. This recursion characterises the discrete-time solution through a backward dynamic programming scheme. 

{\color{royalblue}
\begin{theorem}[Piecewise viscosity characterisation of the convertible-bond value]
\label{thm:ppde_cb_viscosity}
Let $0=t_0<t_1<\cdots<t_N=T$, with $t_i=ih$ and $h=T/N$, be the uniform grid on which the contractual reset and call conditions are monitored.\footnote{\textcolor{royalblue}{In Theorems~\ref{thm:ppde_cb_viscosity} and~\ref{thm:conver}, $h$ refers to the same fixed time step size of the uniform time grid used for both contractual monitoring and the dynamic programming recursion. Our analysis does not consider the limit $h\to0$.}} Between two monitoring
dates, $H_i$ and $\mathcal P_i$ in \eqref{eq:X} remain fixed and the stock price path evolves
according to
\begin{equation}\label{eq:unified_state}
dS_t
=r_tS_t\,dt
+\sigma(t,\omega_t,H_i)S_t\,dW_t,
\qquad t\in(t_i,t_{i+1}).
\end{equation}
For a stopped stock-price path $\omega_t=(\omega_s)_{0\leq s\leq t}$, define
the $t$-norm by $\|\omega\|_t:=\sup_{0\leq s\leq t}|\omega_s|$ and set
\begin{equation*}
\mu(t,\omega,H):=r_t\omega_t,
\qquad
\nu(t,\omega,H):=\sigma(t,\omega,H)\omega_t.
\end{equation*}
The interval pathwise generator acting on
$\varphi\in C^{1,2}$ is defined by
\begin{equation}
\mathcal A\varphi
:=
\mu(t,\omega,H)\partial_\omega\varphi
+\frac12\nu^2(t,\omega,H)
\partial_{\omega\omega}^2\varphi.\label{eq:generator}
\end{equation}
Assume the following conditions.
\begin{enumerate}
\item[(i)] The coefficients $\mu$ and $\nu$ are non-anticipative and continuous.
For some $L>0$, they satisfy the pathwise Lipschitz condition
\begin{align}
|\mu(t,\omega,H)-\mu(t,\omega',H')|
+|\nu(t,\omega,H)-\nu(t,\omega',H')|
&\le L\bigl(\|\omega-\omega'\|_t+|H-H'|\bigr),
\label{eq:path_lipschitz_proof}
\end{align}
and the linear-growth condition
\begin{align}
|\mu(t,\omega,H)|+|\nu(t,\omega,H)|
&\le L\bigl(1+\|\omega\|_t+|H|\bigr).
\label{eq:linear_growth_proof}
\end{align}

\item[(ii)] On every bounded stopped-path domain
$
\Lambda_R:=\{(t,\omega):\|\omega\|_t\leq R\},
$ 
where $R$ is a positive constant, 
the coefficients are bounded and uniformly continuous. Their pathwise Lipschitz constants are independent of $R$. There also exists a nonnegative functional $\chi\in C^{1,2}$ whose growth dominates that of the candidate
solutions and which satisfies
\begin{equation}\label{eq:lyapunov_comparison_proof}
-\partial_t\chi-\mathcal A\chi+r_t\chi\geq1.
\end{equation}
where $\mathcal{A}$ is defined by~\eqref{eq:generator}.
\item[(iii)] The risk-free rate process $(r_t)_{0\leq t\leq T}$ is
non-anticipative and uniformly bounded:
$
0\leq r_t\leq\bar r.
$
The terminal payoff, coupon stream, call price, and contractual state-update
maps are measurable and have at most polynomial growth. For some $p\geq1$,
the conversion ratio satisfies the uniform $p$-th moment condition
\begin{equation}\label{eq:conversion_ratio_moment_proof}
\mathbb E^{\mathbb Q}\left[
\sup_{0\leq t\leq T}\left(\frac{S_t}{H_t}\right)^{\!p\,}
\right]<\infty.
\end{equation}

\item[(iv)] At each $t_i$, the ordered rolling window is retained. The call
condition is checked before the reset condition. At most one reset is applied.
A reset sets $H_i=S_i$.

\item[(v)] At maturity, no further call or reset decision is made. The terminal
payoff is computed using the conversion price $H_T$ prevailing at maturity:
\begin{equation}\label{eq:terminal_payoff_proof}
V(T,X_N)=\max\left\{B,\frac{B}{H_T}S_T\right\}.
\end{equation}
\end{enumerate}
Write $x=X_i=(S_i,H_i,\mathcal P_i)$. The updated state $\Gamma_i x$ is obtained by appending the current $S_i$ to $\mathcal P_i$ and dropping the oldest observation, while leaving $H_i$ unchanged. The reset state $\Gamma_i^R x$ uses the same update
of $\mathcal P_i$ and replaces $H_i$ with $S_i$.
For a post-monitoring value $\phi$ and a pre-maturity monitoring date
$t_i<T$, we define the transmission operator $\mathcal J_i$ according to \eqref{eq:ppde}
by
\begin{equation}\label{eq:J_operator_proof}
(\mathcal J_i\phi)(x)=
\begin{cases}
\displaystyle
\max\left\{\frac{B}{H_i}S_i,
\min\bigl[\phi(\Gamma_i x),K_i\bigr]\right\},
&C_i^{\rm trig}(x)=1,\\[7pt]
\phi(\Gamma_i^R x),
&C_i^{\rm trig}(x)=0,\ R_i^{\rm trig}(x)=1,\\[3pt]
\phi(\Gamma_i x),
&C_i^{\rm trig}(x)=R_i^{\rm trig}(x)=0.
\end{cases}
\end{equation}
Under conditions (i)--(v), the following statements hold.
\begin{enumerate}
\item[(i)] \emph{Interval PPDE.} On each open monitoring interval
$(t_i,t_{i+1})$, the risk-neutral convertible-bond value solves
\begin{equation}\label{eq:unified_interval_ppde}
-\partial_tV-\mathcal AV+r_tV-BC=0,
\end{equation}
where
\begin{equation*}
\mathcal A\varphi
=r_tS_t\partial_\omega\varphi
+\frac12\sigma^2(t,\omega_t,H_i)S_t^2
\partial_{\omega\omega}^2\varphi.
\end{equation*}
For a prescribed value at the left limit $t_{i+1}^-$, the interval PPDE has a unique viscosity
solution in the polynomial-growth class.

\item[(ii)] \emph{Monitoring-date transmission.} The interval solutions are
connected at every pre-maturity monitoring date $t_i<T$ by the contractual
transmission condition
\begin{equation}\label{eq:pointwise_transmission_proof}
V(t_i^-,x)=\bigl(\mathcal J_iV(t_i^+,\cdot)\bigr)(x).
\end{equation}
At a threshold boundary where a hard call or reset condition changes its
trigger status, the upper- and lower-semicontinuous envelope formulation
standard in viscosity obstacle problems is used \citep{ekren2017obstacle}.
The corresponding transmission condition is understood as
\begin{equation}\label{eq:relaxed_transmission_proof}
V^*(t_i^-,x)\leq(\mathcal J_iV(t_i^+,\cdot))^*(x),
\qquad
V_*(t_i^-,x)\geq(\mathcal J_iV(t_i^+,\cdot))_*(x).
\end{equation}
\end{enumerate}
Starting from the terminal payoff and proceeding backward through the
monitoring dates uniquely determines $V$ on the whole horizon $[0,T]$.
Accordingly, $V$ solves equation~\eqref{eq:ppde} in the viscosity sense on each open
interval $(t_i,t_{i+1})$ and satisfies the contractual transmission conditions
at the monitoring dates.
\end{theorem}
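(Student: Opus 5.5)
The argument is a backward induction over the monitoring dates. On each open interval we invoke the standard well-posedness theory for path-dependent semilinear PPDEs, and at each monitoring date we show that the transmission operator $\mathcal J_i$ preserves the admissible class. Set $V(T,\cdot)=g$ from \eqref{eq:terminal_payoff_proof}, and suppose the post-monitoring value $V(t_{i+1}^+,\cdot)$, equivalently the left-limit datum at $t_{i+1}^-$ after applying $\mathcal J_{i+1}$, is known and of polynomial growth.

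\emph{Existence on $(t_i,t_{i+1})$.} Since $H_i$ and $\mathcal P_i$ are frozen, the only dynamics are \eqref{eq:unified_state}. Conditions (i)--(iii) give a unique strong solution with $\mathbb E\|S\|_T^{q}<\infty$ for every $q$. The candidate solution is the Feynman--Kac functional
\[
u(t,\omega)=\mathbb E^{\mathbb Q}\Bigl[e^{-\int_t^{t_{i+1}}r_s\,ds}\,\xi(\omega_{t_{i+1}})+\int_t^{t_{i+1}}e^{-\int_t^u r_s\,ds}BC\,du\Bigm|\omega_t\Bigr],
\]
where $\xi$ is the prescribed datum. Polynomial growth of $u$ follows from the moment bounds and \eqref{eq:conversion_ratio_moment_proof}.

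To show that $u$ is a viscosity solution of \eqref{eq:unified_interval_ppde} in the Ekren--Keller--Touzi--Zhang sense:
\begin{itemize}
\item Take a $C^{1,2}$ test functional $\varphi$ touching $u$ from above (resp.\ below) at $(t,\omega)$ along the optimal-stopping-type nonlinear expectation.
\item Apply the dynamic programming principle over a small hitting time $\tau$.
\item Apply the functional Itô formula of Cont--Fournié to $\varphi$ and divide by $\mathbb E[\tau]$.
\end{itemize}
If $\xi$ is only measurable, we first work on the class where $\xi$ is continuous. For the genuinely discontinuous data produced by hard triggers, we pass to the semicontinuous envelopes $u^*$ and $u_*$ and obtain sub- and supersolution properties. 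This is exactly the relaxation recorded in \eqref{eq:relaxed_transmission_proof}.

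\emph{Uniqueness on $(t_i,t_{i+1})$ (main obstacle).} We expect the comparison principle for the path-dependent equation to be the hardest step. The plan is to localise to $\Lambda_R$, where (ii) gives bounded, uniformly continuous coefficients with $R$-independent Lipschitz constants, and then use the Lyapunov functional $\chi$ from \eqref{eq:lyapunov_comparison_proof}. Given a subsolution $w^1$ and a supersolution $w^2$, consider $w^1-w^2-\varepsilon\chi$. By \eqref{eq:lyapunov_comparison_proof} this is a strict subsolution of the linear equation $-\partial_t-\mathcal A+r_t$, and because $\chi$ dominates the polynomial growth, it is negative outside $\Lambda_R$ for large $R$. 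On $\Lambda_R$ we then apply the comparison result for semilinear PPDEs with Lipschitz coefficients. This step uses the Ekren--Touzi--Zhang optimal-stopping characterisation, or alternatively a doubling-of-variables argument on the path space with the $t$-norm penalty; we would first try citing the existing theorem in its bounded-domain form. Letting $\varepsilon\downarrow0$ yields $w^1\le w^2$. Uniqueness in the polynomial-growth class follows, and this proves statement (i).

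\emph{Monitoring-date transmission.} At $t_i$ we define $V(t_i^-,\cdot)=\mathcal J_iV(t_i^+,\cdot)$. This is the contractual rule of \eqref{eq:ppde} under the call-first convention (iv); it is well defined because the three cases in \eqref{eq:J_operator_proof} partition the state space. Three properties are needed:
\begin{itemize}
\item \emph{Growth preservation.} The maps $\Gamma_i$ and $\Gamma_i^R$ are measurable with polynomial growth, and so are $K_i$ and $BS_i/H_i$. Since $\max$ and $\min$ preserve polynomial growth, so does $\mathcal J_i$.
\item \emph{Monotonicity.} If $\phi\le\psi$ then $\mathcal J_i\phi\le\mathcal J_i\psi$, which holds because $\mathcal J_i$ is composed of monotone maps. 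Together with comparison on the interval, this makes the envelope version consistent.
\item \emph{Envelope inequalities.} On the trigger boundaries $\{N^{\rm call}=N_c\}$ and $\{N^{\rm reset}=N_r\}$, $\mathcal J_i$ is discontinuous. Taking the upper (resp.\ lower) semicontinuous envelopes of both sides gives \eqref{eq:relaxed_transmission_proof}, as in the obstacle framework of \citet{ekren2017obstacle}.
\end{itemize}

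\emph{Conclusion.} The induction runs over the finitely many dates $t_{N-1},\dots,t_0$. Each step produces a unique interval solution in the polynomial-growth class and a transmitted datum in the same class, so $V$ is uniquely determined on $[0,T]$. It solves \eqref{eq:ppde} in the viscosity sense on each $(t_i,t_{i+1})$ and satisfies the transmission conditions at the monitoring dates. Finally, at the grid points the Feynman--Kac representation of $V$ coincides with the recursion $V_i^h=\mathcal T_iV_{i+1}^h$, which links the theorem to the fixed-grid scheme.
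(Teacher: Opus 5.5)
Your proposal is correct and follows essentially the same route as the paper's proof: a Feynman--Kac/DPP representation with the functional It\^o formula for the interval viscosity property, comparison via the Lyapunov perturbation $\varepsilon\chi$ with localisation to $\Lambda_R$ and a cited bounded-domain PPDE comparison theorem, monotonicity of $\mathcal J_i$ with envelope relaxation at trigger boundaries, and backward induction over the monitoring dates. Two minor points. First, phrase the perturbation as the paper does, namely that $w^2+\varepsilon\chi$ is a strict viscosity supersolution (immediate since $\chi$ is smooth and the operator is linear), rather than claiming that the difference $w^1-w^2-\varepsilon\chi$ is itself a viscosity subsolution, which is not automatic for viscosity solutions. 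Second, your closing identification with $V_i^h=\mathcal T_iV_{i+1}^h$ is not part of the theorem and holds only up to the coupon and discount discretisation ($hBC$ and $e^{-r_ih}$ versus the exact discounted integral).
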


\begin{proof}
A detailed proof is provided in Appendix~\ref{app:proof_ppde_viscosity}.
\end{proof}
}
{\color{royalblue}
\begin{remark}[Extension to Heston dynamics]
\label{rem:heston_viscosity}
The argument of Theorem~\ref{thm:ppde_cb_viscosity} can be extended to a multidimensional state containing the variance path. Let the augmented stopped path be
\[
\boldsymbol{\omega}_t=(\omega_t^S,\omega_t^v)
=\{(S_s,v_s):0\leq s\leq t\},
\]
where
$(S_t,v_t)$ satisfies the Heston model \eqref{eq:heston_model},
and $dW_t^S\,dW_t^v=\rho\,dt$. If $v_0>0$ and
$2\kappa\theta>\eta^2$, the variance process remains strictly positive. The Heston coefficients are locally regular on domains bounded away from $v=0$. Assume suitable localisation and moment conditions. Also assume the required Lyapunov and multidimensional comparison conditions. The multidimensional functional It\^o formula then gives the following interval operator for
$V(t,\boldsymbol{\omega}_t,H_i)$:
\begin{equation*}
\begin{aligned}
\mathcal A^H\varphi
={}&r_tS_t\partial_{\omega^S}\varphi
+\kappa(\theta-v_t)\partial_{\omega^v}\varphi
+\frac12v_tS_t^2\partial_{\omega^S\omega^S}^2\varphi 
+\rho\eta v_tS_t\partial_{\omega^S\omega^v}^2\varphi
+\frac12\eta^2v_t\partial_{\omega^v\omega^v}^2\varphi.
\end{aligned}
\end{equation*}
This operator agrees with the Heston PPDE established in Corollary~\ref{cor:heston_ppde}. Since the call-first transmission operator $\mathcal J_i$ remains order-preserving, the same backward transmission argument applies across the monitoring dates.
\end{remark}
}

In practice, the conditional expectation in $\mathcal{T}_i$ is analytically intractable, especially under high-dimensional and path-dependent settings. To address this issue, we approximate the value function at each time step $t_i$ via a parametric artificial neural network $\widehat V_i(\cdot;\theta_i)$, whose parameters $\theta_i$ are estimated via least-squares regression over simulated paths. 
The network approximating the value function $\widehat V_i$ can be interpreted as a realisation of the nonlinear operator $\mathcal{T}_i$ within a restricted functional class, transforming the original path-dependent problem into a tractable sequence of supervised learning tasks. In the literature, estimating conditional expectations by neural networks in a backward induction has been investigated in the context of finite-horizon stochastic control problems in \citet{hure2021convergence}. Such a ``deep least-squares Monte Carlo'' approach has recently been applied to option hedging \citep{fecamp2021deep,bachouch2022numerical} and pricing \citep{langrene2026deep}, storage optimisation \citep{bachouch2022numerical,warin2023reservoir}, portfolio optimisation \citep{franco2022discrete,roch2023optimal}, and utility maximisation \citep{arandjelovic2026solving}, but has so far never been used for the pricing of convertible bonds with path-dependent provisions.

\textcolor{royalblue}{We now establish theoretical results to characterise the stability of the dynamic programming operator and the propagation of neural network approximation errors on a fixed time grid in the following Proposition~\ref{prop:errorp}. The following lemma establishes the Lipschitz continuity of the one-step operator.}
\textcolor{royalblue}{
\begin{lemma}[Lipschitz continuity of the one-step operator]
    \label{lem:operator_lipschitz}
    Assume that the risk-free rate $r_i\ge 0$. Then, for any
    $L^2$-measurable functions $\phi$ and $\psi$, the one-step operator satisfies
    \[
    \|\mathcal T_i\phi-\mathcal T_i\psi\|_{L^2}
    \leq
    e^{-r_ih}\|\phi-\psi\|_{L^2}.
    \]
    Consequently, $\mathcal T_i$ is non-expansive in $L^2$, since
    $e^{-r_ih}\leq1$.
    \end{lemma}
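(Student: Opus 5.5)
The argument works pointwise on the state space, splitting according to the trigger indicator $C^{\mathrm{trig}}_i(x)$, and then integrates. The $L^2$ norm is taken with respect to the law of $X_i$ on the left and the law of $X_{i+1}$ on the right, as is implicit in the statement. For a state $x$, write
\[
E_\phi(x):=\mathbb{E}\!\left[e^{-r_ih}\phi(X_{i+1})+hBC\mid X_i=x\right],
\]
and define $E_\psi(x)$ analogously. The coupon term $hBC$ is deterministic, so it cancels in the difference:
\[
E_\phi(x)-E_\psi(x)=e^{-r_ih}\,\mathbb{E}\!\left[\phi(X_{i+1})-\psi(X_{i+1})\mid X_i=x\right].
\]

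The key pointwise step uses the fact that the maps $y\mapsto\min(y,K_i)$ and $y\mapsto\max(c,y)$, with $c=\frac{B}{H_i(x)}S_i(x)$ fixed, are $1$-Lipschitz on $\mathbb{R}$. The bound therefore holds in both cases.

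On the call set $\{C^{\mathrm{trig}}_i=1\}$, the quantities $K_i$ and $c$ do not depend on $\phi$ or $\psi$, so
\[
|\mathcal T_i\phi(x)-\mathcal T_i\psi(x)|\le|E_\phi(x)-E_\psi(x)|.
\]
Off the call set, the operator is $\mathcal T_i\phi=E_\phi$, so the same bound holds with equality. The reset is encoded in the transition to $X_{i+1}$ and does not alter this structure.

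Hence, for every $x$,
\[
|\mathcal T_i\phi(x)-\mathcal T_i\psi(x)|\le e^{-r_ih}\bigl|\mathbb{E}[(\phi-\psi)(X_{i+1})\mid X_i=x]\bigr|.
\]

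Next, square this bound and apply conditional Jensen, $(\mathbb{E}[Y\mid X_i])^2\le\mathbb{E}[Y^2\mid X_i]$. Taking expectations over $X_i$ and using the tower property gives
\[
\|\mathcal T_i\phi-\mathcal T_i\psi\|_{L^2}^2\le e^{-2r_ih}\,\mathbb{E}\bigl[(\phi-\psi)^2(X_{i+1})\bigr]=e^{-2r_ih}\|\phi-\psi\|_{L^2}^2.
\]
Taking square roots yields the claimed inequality. Non-expansiveness follows because $r_i\ge0$ implies $e^{-r_ih}\le1$.

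The main point requiring care is rigour rather than difficulty. The conditional expectations must be well defined, which holds because $\phi,\psi\in L^2$ of the law of $X_{i+1}$. The $L^2$ spaces must be understood with respect to the laws of $X_i$ and $X_{i+1}$ respectively. One must also check that $r_i$ is $X_i$-measurable (or deterministic), so that $e^{-r_ih}$ can be pulled out of the conditional expectation. If $r_i$ were random but bounded below by $0$, the same argument gives the bound with $\sup e^{-r_ih}\le 1$.
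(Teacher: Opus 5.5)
Your proof is correct and follows essentially the same route as the paper: the coupon term cancels, the map $z\mapsto\max\{A,\min(z,K_i)\}$ is $1$-Lipschitz on the call set (with equality off it), and the conditional-expectation difference is bounded by conditional Jensen and the tower property, with norms taken under the laws of $X_i$ and $X_{i+1}$. The only difference is ordering---you prove the pointwise bound before applying Jensen, whereas the paper first bounds $\|Z_\phi-Z_\psi\|_{L^2}$ and then applies the clamp---which does not affect the argument.
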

    \begin{proof}
    A detailed proof is given in Appendix~\ref{appen:lem1}.
    \end{proof}
}
\textcolor{royalblue}{
\begin{proposition}[One-step propagation of the neural network approximation error]
\label{prop:errorp}
Suppose that the neural network approximation at time step $i$ satisfies the local error bound
\begin{equation}\label{eq:errorp}
\|\widehat V_i - \mathcal{T}_i \widehat V_{i+1}\|_{L^2} \le \varepsilon_i.
\end{equation}
Then, the approximation error follows the recursive relation
\[
\|\widehat V_i - V_i^h\|_{L^2}
\le
\varepsilon_i + e^{-r_i h}
\|\widehat V_{i+1} - V_{i+1}^h\|_{L^2}.
\]
\end{proposition}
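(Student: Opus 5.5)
The plan is a triangle-inequality decomposition around the intermediate quantity $\mathcal T_i\widehat V_{i+1}$, which is exactly the object controlled by the local bound \eqref{eq:errorp}. Since the exact fixed-grid value satisfies $V_i^h=\mathcal T_iV_{i+1}^h$ by definition of the backward recursion, I will write
\[
\widehat V_i - V_i^h
=
\bigl(\widehat V_i-\mathcal T_i\widehat V_{i+1}\bigr)
+\bigl(\mathcal T_i\widehat V_{i+1}-\mathcal T_iV_{i+1}^h\bigr).
\]
Taking $L^2$ norms and applying Minkowski's inequality then splits the error into a local approximation term and a propagated term.

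For the first term I simply invoke the hypothesis \eqref{eq:errorp}, which bounds it by $\varepsilon_i$. For the second term I apply Lemma~\ref{lem:operator_lipschitz} with $\phi=\widehat V_{i+1}$ and $\psi=V_{i+1}^h$, which gives
\[
\|\mathcal T_i\widehat V_{i+1}-\mathcal T_iV_{i+1}^h\|_{L^2}\le e^{-r_ih}\|\widehat V_{i+1}-V_{i+1}^h\|_{L^2}.
\]
Adding the two bounds yields the stated recursion.

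The only point requiring care is checking that the lemma applies. First, both $\widehat V_{i+1}$ and $V_{i+1}^h$ must be $L^2$-measurable functions of $X_{i+1}$. For $\widehat V_{i+1}$ this holds because it is a continuous network of polynomially growing inputs. For $V_{i+1}^h$ I would argue by backward induction, using the polynomial growth of the payoff, call price and conversion value together with the moment condition \eqref{eq:conversion_ratio_moment_proof}. Second, the $L^2$ norms on the two sides must be taken with respect to the laws of $X_i$ and $X_{i+1}$ under the same simulated dynamics, so that the conditional expectation inside $\mathcal T_i$ is an $L^2$ contraction, by Jensen's inequality and the tower property. This is the step I expect to be the main obstacle, although it is essentially already handled inside Lemma~\ref{lem:operator_lipschitz}. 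There, the call branch is controlled because $y\mapsto\max(c,\min(y,K_i))$ is $1$-Lipschitz, and the reset branch enters only through the state dynamics.

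Once these conditions are confirmed, the proof is just the two-line estimate above. Unrolling the recursion afterwards would give the global bound $\sum_j \varepsilon_j$, but that step is not needed for the proposition.
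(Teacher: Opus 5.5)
Your proposal is correct and matches the paper's proof: the same decomposition around $\mathcal T_i\widehat V_{i+1}$, the triangle inequality, the hypothesis \eqref{eq:errorp} for the local term, and Lemma~\ref{lem:operator_lipschitz} for the propagated term. Your added checks on square-integrability and on taking norms with respect to the consistent laws of $X_i$ and $X_{i+1}$ are left implicit in the paper, which handles the latter inside the lemma via $L^2(\mu_i)$ and $L^2(\mu_{i+1})$.
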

\begin{proof}
A detailed proof is given in Appendix~\ref{appen:prop1}.
\end{proof}
}

\textcolor{royalblue}{Iterating the one-step error estimate from Proposition~\ref{prop:errorp} yields the following global error bound for the full backward recursion neural network approximation on a fixed time grid.}

\textcolor{royalblue}{\begin{theorem}[Global neural network approximation error on a fixed time grid]
\label{thm:conver}
Fix the uniform time grid with constant step size $h$
\[
0=t_0<t_1<\cdots<t_N=T,
\qquad h=t_{i+1}-t_i.
\]
Assume that the terminal payoff is imposed exactly, i.e.,
$
\widehat V_N=V_N^h,
$
and suppose that the one-step neural network approximation error satisfies \eqref{eq:errorp} for every $i=0,\ldots,N-1$.
Then the global pricing error satisfies
\[
\left\|
\widehat V_0-V_0^h
\right\|_{L^2}
\leq
\varepsilon_0
+
\sum_{j=1}^{N-1}
\exp\left(
-h\sum_{k=0}^{j-1}r_k
\right)\varepsilon_j.
\]
In particular,
\[
\left\|
\widehat V_0-V_0^h
\right\|_{L^2}
\leq
\sum_{j=0}^{N-1}\varepsilon_j.
\]
Consequently, for fixed time grid, if the cumulative network training error
\[
\sum_{j=0}^{N-1}\varepsilon_j\longrightarrow0
\]
as the neural network approximation capacity and training accuracy increase, then
\[
\widehat V_0\longrightarrow V_0^h
\qquad\text{in }L^2.
\]
\end{theorem}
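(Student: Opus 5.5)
The plan is to unroll the one-step estimate of Proposition~\ref{prop:errorp} backward from maturity and then sum the resulting geometric-type factors. Set $e_i:=\|\widehat V_i-V_i^h\|_{L^2}$ for $i=0,\dots,N$. Since the terminal payoff is imposed exactly, $e_N=0$. Proposition~\ref{prop:errorp}, which rests on Lemma~\ref{lem:operator_lipschitz}, gives for every $i=0,\dots,N-1$
\[
e_i\le \varepsilon_i+e^{-r_ih}\,e_{i+1}.
\]

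I would prove by backward induction on $i$ the closed form
\[
e_i\le \sum_{j=i}^{N-1}\exp\Bigl(-h\sum_{k=i}^{j-1}r_k\Bigr)\varepsilon_j,
\]
with the convention that an empty inner sum is zero, so the $j=i$ term is exactly $\varepsilon_i$. The base case $i=N-1$ reads $e_{N-1}\le\varepsilon_{N-1}+e^{-r_{N-1}h}e_N=\varepsilon_{N-1}$, which holds. For the inductive step, insert the hypothesis for $e_{i+1}$ into the recursion. The factor $e^{-r_ih}$ merges with $\exp(-h\sum_{k=i+1}^{j-1}r_k)$ into $\exp(-h\sum_{k=i}^{j-1}r_k)$, and the new term $\varepsilon_i$ is the $j=i$ summand. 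Since all coefficients are nonnegative, the inequalities are preserved when we substitute the bound. Taking $i=0$ gives the first displayed estimate of the theorem.

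For the simplified bound, I use $r_k\ge0$ (as assumed in Lemma~\ref{lem:operator_lipschitz}) to get that each exponential factor is at most $1$, so $e_0\le\sum_{j=0}^{N-1}\varepsilon_j$. The convergence statement then follows directly. The grid is fixed, so $N$ is fixed and $V_0^h$ does not depend on the network. If $\sum_j\varepsilon_j\to0$ along a sequence of networks with increasing capacity and training accuracy, then $e_0\to0$, which is $L^2$ convergence of $\widehat V_0$ to $V_0^h$.

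I see no real obstacle here. The one point needing care is the measure behind each $L^2$ norm. Proposition~\ref{prop:errorp} must be applied with norms taken under the law of $X_i$ at each step, and Lemma~\ref{lem:operator_lipschitz} implicitly transfers the $L^2(X_{i+1})$ norm to $L^2(X_i)$ through conditional Jensen and the tower property. I will take these as given from the earlier results, so the argument needs only bookkeeping of the exponents and their nonnegativity.
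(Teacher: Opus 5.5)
Your proposal is correct and takes the same approach as the paper. You set $e_i=\|\widehat V_i-V_i^h\|_{L^2}$, use $e_N=0$ and the recursion $e_i\le\varepsilon_i+e^{-r_ih}e_{i+1}$ from Proposition~\ref{prop:errorp}, unroll backward to the discounted sum, and bound each factor by one using $r_k\ge0$; your explicit induction on $e_i\le\sum_{j=i}^{N-1}\exp(-h\sum_{k=i}^{j-1}r_k)\varepsilon_j$ just formalises the term-by-term iteration the paper writes out.
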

\begin{proof}
A detailed proof is given in Appendix~\ref{appen:thm2}.
\end{proof}}

\textcolor{royalblue}{\begin{remark}[Scope and limitation of the convergence results]
\label{rem:convergence_scope}
The preceding Lemma~\ref{lem:operator_lipschitz}, Proposition~\ref{prop:errorp}, and Theorem~\ref{thm:conver} establish the propagation and accumulation of the neural-network approximation error relative to the exact fixed-grid dynamic programming recursion. Thus, Theorem~\ref{thm:conver} shows that $\widehat V_0\to V_0^h$ in $L^2$ as the aggregate training error vanishes, while Theorem~\ref{thm:ppde_cb_viscosity} establishes the existence and uniqueness of the piecewise viscosity solution. Extending the convergence analysis to the limit $V_0^h\to V(0,\cdot)$ as $h\to0$ would require adapting the monotone viscosity-scheme frameworks of \citet{barles1991convergence} and their path-dependent extensions in \citet{zhang2014monotone} and \citet{ren2017convergence} to the present rolling-window setting. In particular, such an extension would need to account for uniform stability and the discontinuities induced by the contractual trigger boundaries, and therefore constitutes a separate theoretical analysis beyond the scope of this paper. To complement this discrete-time error analysis, Section~\ref{sec:grid_sensitivity} provides a time-grid refinement study to numerically validate the empirical stability and convergence of the discrete approximate scheme.
\end{remark}}

\subsection{Algorithm implementation}

\paragraph{Neural network approximation.}

For each time step $t_i$, we employ a feedforward neural network
$\widehat V_i(\cdot;\theta_i)$ to approximate the value function at time $t_i$.
In particular, $\widehat V_i$ is trained to approximate the one-step operator $\mathcal{T}_i$ applied to $\widehat V_{i+1}$:
\begin{equation}
\widehat V_i(X_i;\theta_i)
\approx
\mathbb{E}\!\left[
e^{-r_i h}\widehat V_{i+1}(X_{i+1}) + h\,B\cdot C
\mid X_i
\right],
\end{equation}
which holds for states where the call condition is not activated, with the reset effect already incorporated into the state evolution. Each $\widehat V_i$ is implemented as a fully connected neural network.
In our baseline network, We use three hidden layers, each containing 64 neurons; ReLU activation function is applied following each hidden layer. \textcolor{royalblue}{In Appendix~\ref{app:architecture_assessment}, we assess alternative choices for network depths, widths and activation functions, and show that the baseline specification yields a reasonable balance between pricing stability and computational cost.}
The network mapping is given by
\begin{align}
h^{(1)} &= \text{ReLU}(W_1 x + b_1), \\
h^{(2)} &= \text{ReLU}(W_2 h^{(1)} + b_2), \\
h^{(3)} &= \text{ReLU}(W_3 h^{(2)} + b_3), \\
\widehat V_i(x;\theta_i) &= W_4 h^{(3)} + b_4,
\end{align}
where $\theta_i = \{W_\ell, b_\ell\}_{\ell=1}^4$ denotes the collection of trainable parameters. The output is a scalar representing the continuation value.

\paragraph{State variables.}

The input features are constructed from the Markovian state representation (see equation~\eqref{eq:X})
\[
X_i = (S_i, H_i, \mathcal{P}_i), \quad \text{for}\quad i=0,1,\dots,N
\]
where $\mathcal{P}_i$ encodes the path-dependent statistics needed to evaluate the reset and call triggers. Rather than using $\mathcal{P}_i$ directly, we construct a finite-dimensional feature representation based on quantities derived from $\mathcal{P}_i$, including the trigger indicator functions for the call and reset events. These features provide sufficient information about the price path for numerical approximation.

For the GBM and CEV models, the underlying dynamics depend only on the state variables $(S_i,t_i)$, and the conversion price $H_i$ evolves deterministically according to the reset rule.
The neural network normalised input is therefore defined as
\[
x_i =
\left(
\frac{S_i}{S_0},
\frac{H_i}{H_0},
\frac{t_i}{T},
\mathbbm{1}_{\{C_i(X_i)=1\}},
\mathbbm{1}_{\{R_i^{\mathrm{trig}}(X_i)=1\}}
\right),
\]
where the indicator functions encode the relevant path-dependent information. For the Heston model, the vector of state variables additionally includes the variance process:
\[
X_i = (S_i, v_i, H_i, t_i),
\]
and the corresponding normalised input becomes
\[
x_i =
\left(
\frac{S_i}{S_0},
\frac{v_i}{v_0},
\frac{H_i}{H_0},
\frac{t_i}{T},
\mathbbm{1}_{\{C_i(X_i)=1\}},
\mathbbm{1}_{\{R_i^{\mathrm{trig}}(X_i)=1\}}
\right).
\]

\paragraph{Training objective.}

At each time step $t_i$, the neural network $\widehat V_i$ is trained by minimising the mean squared error between its output and the Monte Carlo estimate of the one-step operator $\mathcal{T}_i \widehat V_{i+1}$.
The loss function is defined as
\begin{equation}
 \mathcal{L}_i(\theta_i)
=
\frac{1}{M}
\sum_{k=1}^{M}
\left(
\widehat V_i(x_i^{(k)};\theta_i)
-
\bigl[e^{-r_i h}\widehat V_{i+1}(X_{i+1}^{(k)}) + h\,B\cdot C\bigr]
\right)^2,
\end{equation}
where $r_i$ denotes the risk-free rate at time $t_i$, which is assumed constant in our experiments, and $M$ is the number of Monte Carlo simulations.

\paragraph{Backward recursion.}

Starting from the terminal condition
\begin{equation}
\widehat V_N(X_N) = g(X_N),
\end{equation}
we recursively compute the value function backward in time.
At each time step $t_i$, the approximate value $\widehat V_i$ is updated according to

\begin{equation}
\widehat V_i(X_i) =
\begin{cases}
\displaystyle
\max\left(
\frac{B}{H_i}S_i,\;
\min\left(
e^{-r_i h}\widehat V_{i+1}(X_{i+1}) + h\,B\cdot C,\;
K_i
\right)
\right),
& C_i(X_i)=1, \\[10pt]
\displaystyle
e^{-r_i h}\widehat V_{i+1}(X_{i+1}) + h\,B\cdot C,
& \text{otherwise}.
\end{cases}
\end{equation}
The conversion price reset rule is incorporated into the state evolution through the pathwise update of $H_i$ during the simulation phase.

\paragraph{Optimisation.}
\textcolor{royalblue}{
The parameters $\theta_i$ are optimised via the Adam optimiser with an initial learning rate of $10^{-3}$ and mini-batches of paths simulated under the risk-neutral measure. The baseline specification does not impose an explicit weight regularisation, corresponding to $\lambda=0$. At each backward step, 10\% of the simulated training paths are reserved for validation. Training is terminated after three consecutive epochs without validation-loss improvement, and the learning rate is reduced when the validation loss plateaus.  We retain the parameter values yielding the lowest validation loss. Appendix~\ref{app:regularisation} reports a robustness analysis across a range of $L^2$-penalty coefficients and training paths sample sizes. This analysis reveals that moderate regularisation has only a limited effect on the estimated bond price, justifying our unregularised baseline for the main numerical experiments. The networks are trained sequentially backward in time following the dynamic programming recursion. The full network architecture and backward training procedure are summarised in Figure~\ref{fig:nn_architecture} and Algorithm~\ref{alg:deep_ppde}, respectively.
}
\begin{figure}[htb]
\centering
\resizebox{\textwidth}{!}{
\begin{tikzpicture}[
    scale=1.1,
    transform shape,
    >=latex,
    node distance=0.5cm,
    every node/.style={font=\normalsize},
    box/.style={
        draw,
        rounded corners,
        minimum width=2.0cm,
        minimum height=1.6cm,
        inner sep=1.2pt,
        align=center
    },
    arrow/.style={->, line width=1.5pt}
]

\node[box] (input) {Input features\\[2pt]
GBM:\\
$\left(\dfrac{S_i}{S_0},\dfrac{H_i}{H_0},\dfrac{t_i}{T},
\mathbbm{1}_{\{C_i(X_i)=1\}},
\mathbbm{1}_{\{R_i^{\mathrm{trig}}(X_i)=1\}}\right)$\\[6pt]
CEV:\\
$\left(\dfrac{S_i}{S_0},\dfrac{H_i}{H_0},\dfrac{t_i}{T},
\mathbbm{1}_{\{C_i(X_i)=1\}},
\mathbbm{1}_{\{R_i^{\mathrm{trig}}(X_i)=1\}}\right)$\\[6pt]
Heston:\\
$\left(\dfrac{S_i}{S_0},\dfrac{v_i}{v_0},\dfrac{H_i}{H_0},\dfrac{t_i}{T},
\mathbbm{1}_{\{C_i(X_i)=1\}},
\mathbbm{1}_{\{R_i^{\mathrm{trig}}(X_i)=1\}}\right)$};

\node[box, right=of input] (h1) {Hidden layer 1\\
64 neurons\\
ReLU};

\node[box, right=of h1] (h2) {Hidden layer 2\\
64 neurons\\
ReLU};

\node[box, right=of h2] (h3) {Hidden layer 3\\
64 neurons\\
ReLU};

\node[box, right=of h3] (output) {Output layer\\
1 neuron\\
$\widehat V_i(X_i;\theta_i)$\\[2pt]
$\approx \mathcal{T}_i\widehat V_{i+1}(X_i)$};

\draw[arrow] (input) -- (h1);
\draw[arrow] (h1) -- (h2);
\draw[arrow] (h2) -- (h3);
\draw[arrow] (h3) -- (output);

\end{tikzpicture}
}
\caption{Neural network architecture for approximating the continuation value in the backward recursion under different underlying GBM, CEV, and Heston models.}
\label{fig:nn_architecture}
\end{figure}

\begin{algorithm2e}[t!]
\DontPrintSemicolon
\SetAlgoLined
\small
\caption{Backward deep learning algorithm for path-dependent convertible bond pricing}
\label{alg:deep_ppde}
\vspace{1mm}
\KwIn{Initial state $X_0$, maturity $T$, number of time steps $N$, number of training paths $M_{\mathrm{train}}$, number of test paths $M_{\mathrm{test}}$, contract parameters, and model parameters (GBM/CEV/Heston)}
\KwOut{Approximate CB price $\widehat V_0(X_0)$}

\vspace{0.5mm}
Set $h=T/N$. Map the contractual rolling-window counts to the selected time grid. For each $i=0,1,\dots,N-1$, initialise a feedforward neural network
$\widehat C_i(\cdot;\theta_i)$ with three hidden layers, width $64$, and ReLU activation

\vspace{0.5mm}
\Comment*[l]{Simulate paths under the risk-neutral measure}
Simulate training paths $\{X_i^{(m)}\}_{i=0,\dots,N}^{m=1,\dots,M_{\mathrm{train}}}$ and test paths $\{\widetilde X_i^{(m)}\}_{i=0,\dots,N}^{m=1,\dots,M_{\mathrm{test}}}$. Record the call and reset trigger states. Update $H_i$ when the reset condition is satisfied and the call condition is not satisfied

\vspace{0.5mm}

Set $\widehat V_N^{(m)}
=
\max\!\left(B,\frac{B}{H_N^{(m)}}S_N^{(m)}\right),
\  m=1,\dots,M_{\mathrm{train}}$ \Comment*[l]{Set training terminal payoff}

Set $\widetilde V_N^{(m)}
=
\max\!\left(B,\frac{B}{\widetilde H_N^{(m)}}\widetilde S_N^{(m)}\right),
\  m=1,\dots,M_{\mathrm{test}}$ \Comment*[l]{Set test terminal payoff}

\For{$i=N-1,\dots,0$}{
\vspace{0.5mm}
Compute features $x_i^{(m)}=\Phi(X_i^{(m)}),
\  m=1,\dots,M_{\mathrm{train}}$, including features derived from the rolling-window stock-price history, the current trigger indicators, and, under Heston dynamics, the variance state \Comment*[l]{Construct neural network inputs}

\vspace{0.5mm}
$\mathrm{Cont}_i^{(m)}
=
e^{-r_i h}\widehat V_{i+1}^{(m)} + Bh\,C$ \Comment*[l]{Construct continuation targets}

\vspace{0.5mm}
\Comment*[l]{Train neural network at time step $i$}
Standardise the inputs and continuation-value targets. Reserve $10\%$ of the training paths for validation

\vspace{0.5mm}
Minimise $\mathcal L_i(\theta_i)
=
\frac{1}{M_{\mathrm{train}}}\sum_{m=1}^{M_{\mathrm{train}}}
\left|
\widehat C_i(x_i^{(m)};\theta_i)-\mathrm{Cont}_i^{(m)}
\right|^2$ using Adam, validation-based early stopping, and learning-rate adjustment

\vspace{0.5mm}
Transform the fitted continuation values back to the original scale and project them onto the range of the continuation-value targets

\vspace{0.5mm}
$\widehat{\mathrm{Cont}}_i^{(m)}
=\widehat C_i(x_i^{(m)};\theta_i)$ \Comment*[l]{Estimate conditional continuation value}

\vspace{0.5mm}
$\mathrm{Conv}_i^{(m)}
=
\frac{B}{H_i^{(m)}}S_i^{(m)}$ \Comment*[l]{Compute conversion value}

\vspace{0.5mm}
Update $\widehat V_i^{(m)}=
\begin{cases}
\max\!\left(\mathrm{Conv}_i^{(m)},\,\min\!\left(\widehat{\mathrm{Cont}}_i^{(m)},K_i\right)\right),
& N_i^{\mathrm{call},(m)}\ge N_c,\\[6pt]
\widehat{\mathrm{Cont}}_i^{(m)},
& \text{otherwise}
\end{cases}$ \Comment*[l]{Apply the contractual operator}

\vspace{0.5mm}
Compute the test-path features and evaluate $\widehat C_i(\cdot;\theta_i)$. Apply the same call-first contractual operator to obtain $\widetilde V_i^{(m)}$

}
\KwRet{$\widehat V_0(X_0)$}
\end{algorithm2e}

\section{Numerical Experiments}\label{sec:numerical}

In this section, we conduct numerical experiments to evaluate the performance of the proposed deep learning-based pricing framework. We first compare pricing results generated under  GBM, CEV, and Heston dynamics to assess the impact of model specification. We then analyse the impact of the downward reset and call provisions on bond values. Finally, we perform sensitivity analysis with respect to key contractual and market parameters.

To illustrate the practical relevance of our model, we consider a real-market convertible bond: the China CITIC Bank Convertible Bond (code: 113021), issued on March 4, 2019. It has a total issuance size of 40 billion yuan, with a face value of 100 yuan per unit, and was listed on the Shanghai Stock Exchange on March 19, 2019. The  bond term spans from March 4, 2019 to March 3, 2025, and the conversion period runs from September 11, 2019 to March 3, 2025. 

The contract incorporates two key path-dependent provisions. First, if the closing price of CITIC Bank’s A-shares is below 80\% of the effective conversion price for at least 15 of any consecutive 30 trading days, the board of directors has the right to propose a downward revision of the conversion price, which is subject to shareholder approval and regulatory constraints. The revised conversion price shall not be lower than the average trading price of the A-shares on the 30 trading days, 20 trading days, and the most recent trading day, nor can it be lower than the most recently audited net asset value per share or the par value of the stock.  Second, during the conversion period, if the closing price of CITIC Bank’s A-shares is not lower than the call barrier that is at least 130\% of the then-effective conversion price for at least 15 out of 30 consecutive trading days, the issuer may redeem all or part of the unconverted convertible bonds at par value plus accrued interest for the period. For simplicity, we set the call redemption price to $K=B$. We employ the contractual terms of this bond in our numerical analysis, with key parameters summarised in Table~\ref{tab:cb_params}.

\begin{table}[htbp]
    \centering
    \caption{Parameters for China CITIC convertible bond (code: 113021)}
    \begin{tabular*}{12cm}{@{\extracolsep{\fill}}lccc}
        \toprule
        Description & Symbol & Value & Unit \\
        \midrule
        Initial stock price          & $S_0$ & 6.40  & CNY \\
        Conversion price             & $H$   & 7.45  & CNY \\
        Downward revision ratio       & $a$   & 0.80  & --- \\
        Call threshold ratio         & $b$   & 1.30  & --- \\
        Time to maturity             & $T$   & 6     & years \\
        Face value                   & $B$   & 100   & CNY \\
        Annual coupon rate           & $C$   & 0.3   & \% \\
        Constant risk‑free rate      & $r$   & 1.6   & \% \\
        \bottomrule
    \end{tabular*}
    \label{tab:cb_params}
    \begin{minipage}{0.95\textwidth}
        \footnotesize
        \vspace{0.1cm}
        \textit{Notes:} The risk‑free rate $r$ is held constant across all numerical experiments.
    \end{minipage}
\end{table}

\subsection{Pricing under different stock dynamics}

\textcolor{royalblue}{We implement the proposed deep-learning algorithm using a fully connected feedforward neural network with three hidden layers of 64 ReLU neurons each.
At each backward time step, the network is trained using the Adam optimiser with mini-batches of size 2,048, an initial learning rate of $10^{-3}$, and
at most 20 epochs. Ten percent of the training sample is reserved for validation with early stopping and validation-based learning-rate adjustment.
Both the inputs and regression targets are standardised at each time step. The network predictions are projected onto the range of the corresponding one-step training targets. We use 104 time steps per year, yielding $N=624$ steps over the six-year maturity, together with 12,000 training paths and
4,000 independent test paths. The GBM, CEV, and Heston parameters are reported in Table~\ref{tab:model_parameters}.}\footnote{\textcolor{royalblue}{
All experiments use the same baseline contractual and model parameters reported in Tables~\ref{tab:cb_params} and~\ref{tab:model_parameters}. Numerical price levels may differ slightly across experiments because they are obtained from separate simulation and training runs. Comparisons within each experiment are based on paired runs under an identical numerical specification.
}}
\begin{table}
\centering
\caption{Parameter specifications for GBM, CEV, and Heston models}
\begin{tabular*}{14cm}{@{\extracolsep{\fill}}cccccccc}
\toprule
Model 
& $\sigma$ 
& $\gamma$ 
& $v_0$ 
& $\kappa$ 
& $\theta$ 
& $\eta$ 
& $\rho$ \\
\midrule
GBM     & 0.30 & —    & —    & —    & —    & —    & — \\
CEV     & 0.35 & 0.90 & —    & —    & —    & —    & — \\
Heston  & —    & —    & 0.09 & 2.0  & 0.09 & 0.45 & -0.50 \\
\bottomrule
\end{tabular*}
\label{tab:model_parameters}
\end{table}

\begin{figure}
    \centering
    
    \begin{subfigure}{1.0\textwidth}
        \centering
        \includegraphics[width=\linewidth]{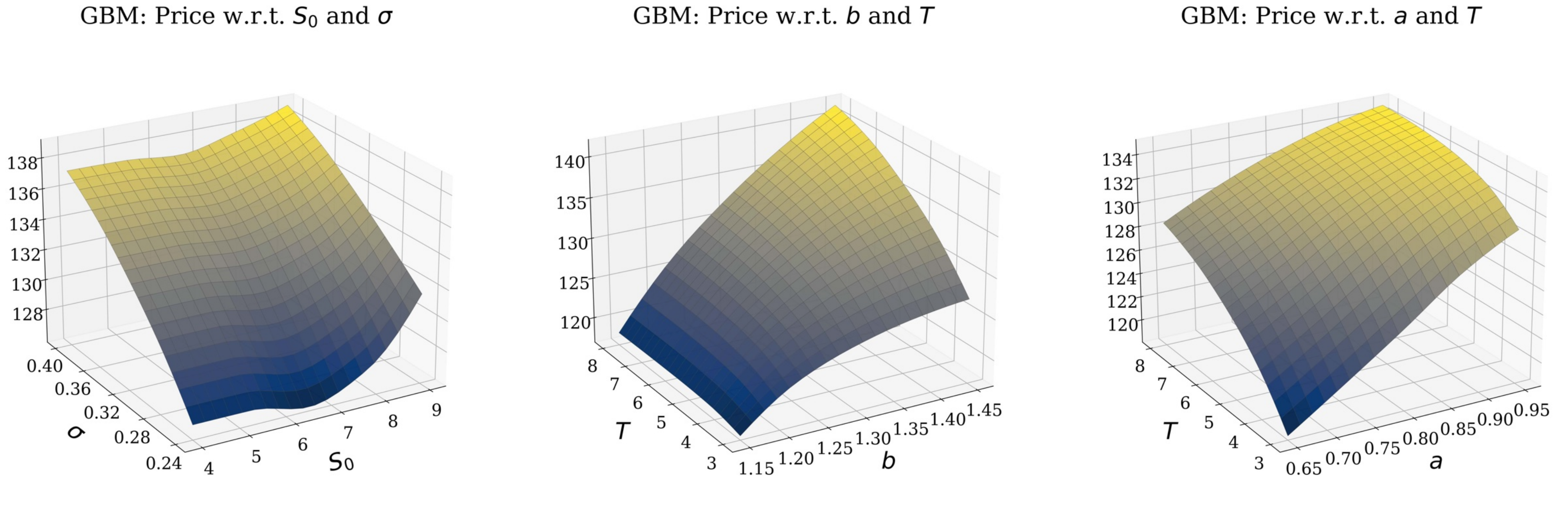}
        \caption{GBM model}
        \label{fig:gbm_sub}
    \end{subfigure}
    
    \vspace{0.4cm}
    
    \begin{subfigure}{1.0\textwidth}
        \centering
        \includegraphics[width=\linewidth]{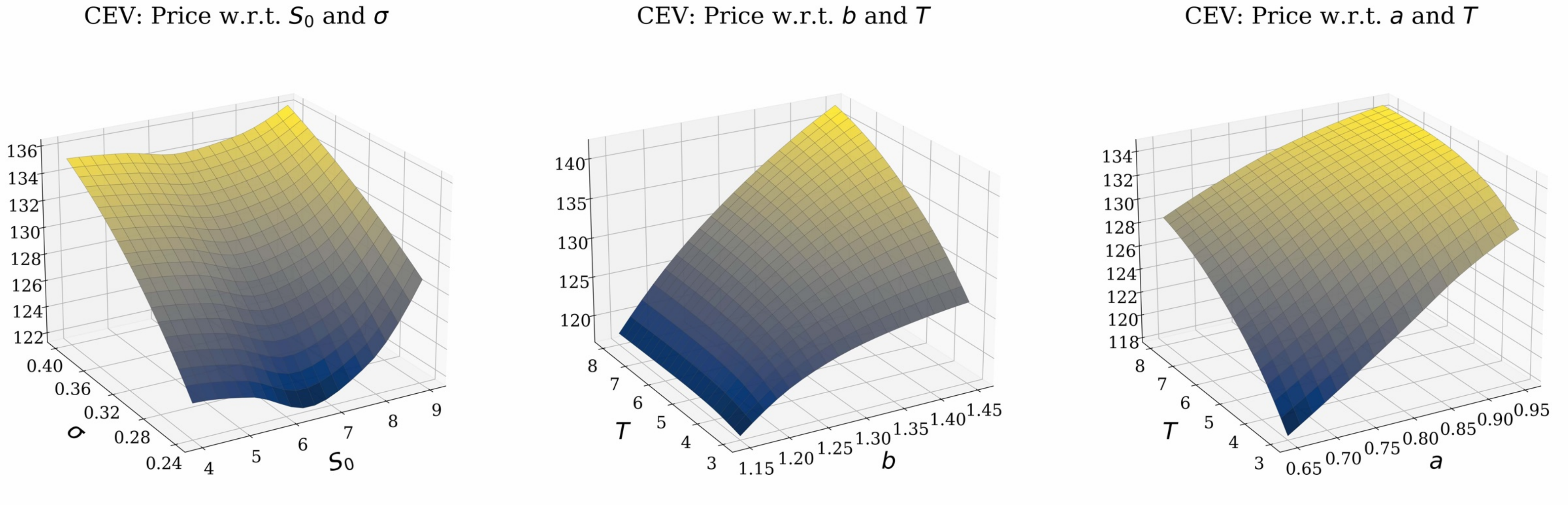}
        \caption{CEV model}
        \label{fig:cev_sub}
    \end{subfigure}
    
    \vspace{0.4cm}
    
    \begin{subfigure}{1.0\textwidth}
        \centering
        \includegraphics[width=\linewidth]{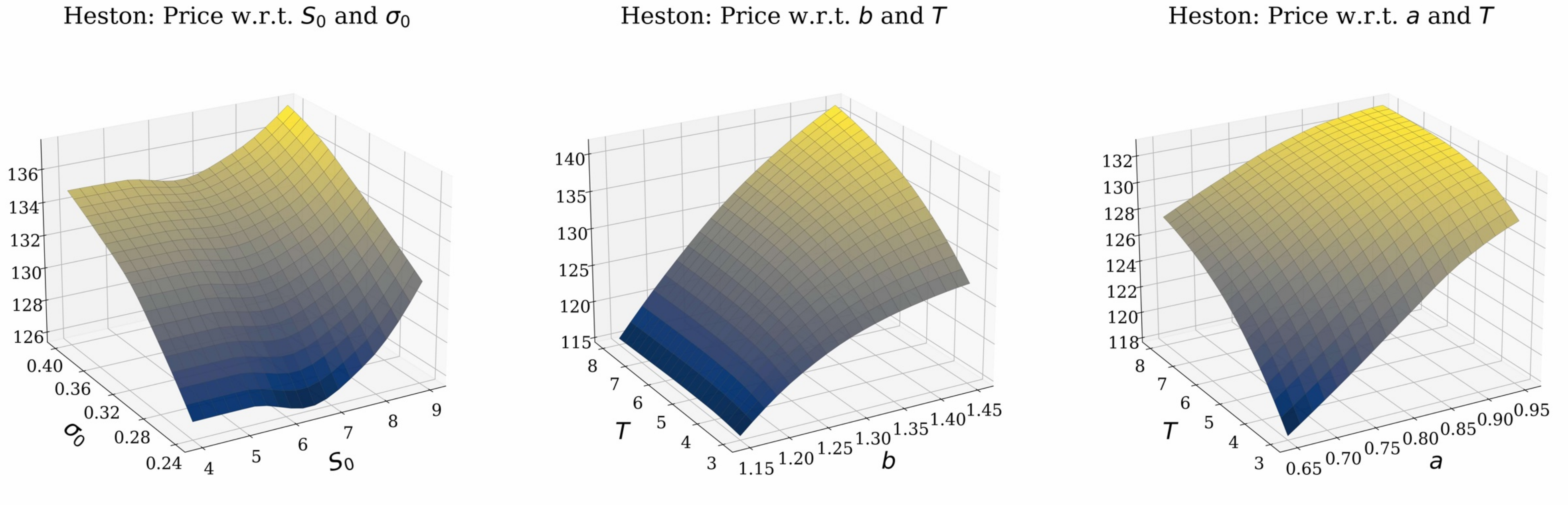}
        \caption{Heston model}
        \label{fig:heston_sub}
    \end{subfigure}
    
    \caption{PPDE-based convertible bond pricing surfaces under different underlying dynamics.}
    \label{fig:pricing_surfaces}
    \begin{minipage}{0.95\textwidth}
\footnotesize
\vspace{0.1cm}
\textit{Notes:} In each panel, the plotted surface is obtained by varying the indicated variables while keeping all other parameters fixed at the baseline values reported in Table~\ref{tab:model_parameters}. For GBM and CEV,
$\sigma$ denotes the volatility parameter. For Heston, $\sigma_0=\sqrt{v_0}$ denotes the initial volatility, and the stock--volatility surface is generated by setting $v_0=\theta=\sigma_0^2$ at each grid point.
\end{minipage}
\end{figure}


{\color{royalblue}
The left-hand panels of Figure~\ref{fig:pricing_surfaces} display similar qualitative patterns across the three models. The convertible bond value responds nonlinearly to the initial stock price $S_0$ and generally increases with volatility. At lower stock prices, the conversion option is unlikely and the fixed-income component dominates the bond value. As $S_0$ rises, the conversion opportunity becomes more valuable and the bond price responds more strongly to the stock price. Higher volatility also increases the probability of favourable stock-price movements, lifting the value of the embedded conversion option. The contractual provisions modify these standard effects. Downward stock-price movements can trigger reset and improve subsequent conversion terms, whereas sufficiently strong upward movements may activate the issuer call and cap the investor's future upside. For the Heston model, the plotted initial volatility is $\sigma_0=\sqrt{v_0}$, with $v_0=\theta=\sigma_0^2$ at each grid point, to make the comparison with GBM and CEV meaningful. Differences across the three models are mainly reflected in the price levels and surface curvature.

The middle panels show that the bond price increases with the call threshold $b>1$ and maturity $T$. A higher value of $b$ makes the call condition more difficult to satisfy because the stock price must reach a higher level relative to the prevailing conversion price before the issuer can redeem the bond. This reduces the likelihood of early redemption and allows bondholders to retain more of the potential upside from future conversion. Conversely, a lower value of $b$ strengthens the negative valuation effect of the call provision by making early redemption more likely. A longer maturity creates more opportunities for favourable stock-price movements and conversion. Its value is nevertheless limited when $b$ is low because the bond is more likely to be called before maturity. As $b$ increases, the call constraint weakens and the positive effect of extending maturity becomes more pronounced.

The right-hand panels show that the bond price increases with the reset threshold ratio $a<1$ and maturity $T$ over the considered parameter range. A higher value of $a$ makes the reset condition easier to satisfy and increases the probability of a downward adjustment in the conversion price. A lower conversion price raises the conversion ratio and improves the value received by investors upon conversion. The reset also lowers the effective call barrier $bH_t$, which may increase the likelihood of early redemption and restrict future upside. The effect of $a$ therefore reflects a trade-off between improved conversion terms and a greater likelihood of a subsequent call. Under the benchmark specification, the improvement in conversion terms dominates and the reset provision raises the bond value. The surfaces flatten at higher values of $a$ and longer maturities, indicating that the marginal value of additional reset opportunities gradually declines and is partly constrained by the issuer's call right.

The pricing surfaces are consistent with the clause comparison in Table~\ref{tab:price2} and the sensitivity analysis in Section~\ref{sec:sensitivity}. The call provision lowers the bond value
by allowing the issuer to terminate the contract and truncate the investor's future upside. A higher call threshold $b$ makes this outcome less likely and therefore raises the bond value. By contrast, the reset provision raises the bond value under the benchmark specification because the benefit of improved conversion terms outweighs the effect of a lower call barrier and the associated increased likelihood of early redemption. A higher reset threshold $a$ makes the reset more likely and therefore increases the bond value over the considered range. Volatility has a positive effect across all three models, while the influence of $S_0$ remains relatively limited around the baseline. The main economic patterns are consistent across GBM, CEV, and Heston, with the choice of volatility dynamics marginally affecting price levels and surface curvature.}

{\color{royalblue}
\subsection{Benchmark comparison}
\label{sec:benchmark}
We now assess the numerical accuracy of the proposed neural network scheme by comparing its price outputs with 
the LSMC method of \citet{longstaff2001valuing}. The comparison is conducted under the GBM model with the full reset and call provisions enabled. Both methods share identical simulated stock price paths, contractual state updates and time discretisation. The only difference lies in how the one-step conditional expectation is approximated. The LSMC method uses all polynomial basis terms up to total degree two and is fitted separately over the continuation, reset, and call regions. Since no closed-form analytical solution exists for this path-dependent contract, the LSMC estimate serves as a numerical benchmark rather than an exact reference value.

The benchmark uses 104 time steps per year, corresponding to $N=624$ time steps over the six-year maturity, together with 12000 training paths and 4000 independent test paths. Under this discretisation, the contractual 15-out-of-30 trading-day trigger rule is represented by a 6-out-of-12 grid-point rule to preserve the same trigger ratio. The DNN consists of three hidden layers with 64 neurons per layer with the ReLU activation function. The inputs and regression targets are standardised at every backward time step. To mitigate accumulation of extrapolation errors through the backward recursion, the network predictions at each time step are projected onto the range of the corresponding one-step training targets.

In addition to the baseline five-dimensional input, we test alternative inputs augmented with 2, 4, and 6 normalised lagged stock prices appended to the input vector. These variants are used to mimic partial rolling-window path-history information and to examine whether the agreement between the two methods persists as state dimension grows. For each specification, we report the absolute difference between the price from the DNN approach $V_0^{\mathrm{DNN}}$ and the price from LSMC $V_0^{\mathrm{LSMC}}$ 
\[
D_{\mathrm{abs}}
=
\left|V_0^{\mathrm{DNN}}-V_0^{\mathrm{LSMC}}\right|
\]
and the relative difference
\[
D_{\mathrm{rel}}
=
\frac{
\left|V_0^{\mathrm{DNN}}-V_0^{\mathrm{LSMC}}\right|
}{
\left|V_0^{\mathrm{LSMC}}\right|
}.
\]

\begin{table}[H]
\centering
\captionsetup{labelfont={color=royalblue},textfont={color=royalblue}}
\color{royalblue}
\small
\caption{\textcolor{royalblue}{Comparison of DNN and LSMC price estimates}}
\label{tab:dnn_lsmc_benchmark}
\begin{tabular*}{\textwidth}{@{\extracolsep{\fill}}cccccc}
\toprule
History lags & Input dimension & DNN price & LSMC price
& $D_{\mathrm{abs}}$ & $D_{\mathrm{rel}}$ (\%) \\
\midrule
0 & 5  & 129.555 & 129.760 & 0.206 & 0.158 \\
2 & 7  & 129.405 & 129.788 & 0.383 & 0.295 \\
4 & 9  & 130.354 & 129.786 & 0.568 & 0.437 \\
6 & 11 & 130.336 & 129.781 & 0.554 & 0.427 \\
\bottomrule
\end{tabular*}
\begin{minipage}{0.95\textwidth}
\footnotesize
\vspace{0.1cm}
\textit{Notes:} Both methods use identical simulated paths and contractual event rules. The baseline specification includes no additional lagged stock prices. The subsequent rows append the stated number of normalised lagged stock prices to the inputs for both methods. LSMC uses a quadratic polynomial basis. The absolute and relative differences measure cross-method agreement rather than errors relative to an exact analytical solution.
\end{minipage}
\end{table}

Table~\ref{tab:dnn_lsmc_benchmark} demonstrates close alignment between DNN and LSMC price outputs. Under the baseline five-dimensional input setting, the DNN price is 129.555, compared with 129.760 for LSMC, corresponding to a relative difference of approximately 0.158\%. The relative difference remains below 0.5\% when lagged stock price features are added. The LSMC estimates are stable across all four state specifications, ranging from 129.760 to 129.788. These results confirm that the estimated bond prices are not a numerical artefact of the neural network approximation.

\begin{table}[H]
\centering
\captionsetup{labelfont={color=royalblue},textfont={color=royalblue}}
\color{royalblue}
\small
\caption{\textcolor{royalblue}{Growth of the LSMC basis and DNN parameterisation}}
\label{tab:dnn_lsmc_dimension}
\begin{tabular*}{\textwidth}{@{\extracolsep{\fill}}c c c c c}
\toprule
History lags & Input dimension
& Quadratic LSMC & Cubic LSMC
& DNN parameters \\
\midrule
0 & 5  & 21 (1.000) & 56 (1.000)  & 8769 (1.000) \\
2 & 7  & 36 (1.714) & 120 (2.143) & 8897 (1.015) \\
4 & 9  & 55 (2.619) & 220 (3.929) & 9025 (1.029) \\
6 & 11 & 78 (3.714) & 364 (6.500) & 9153 (1.044) \\
\bottomrule
\end{tabular*}
\begin{minipage}{0.95\textwidth}
\footnotesize
\vspace{0.1cm}
\textit{Notes:} The quadratic and cubic LSMC columns include all monomials up to total degrees two and three, giving $\binom{d+2}{2}$ and $\binom{d+3}{3}$ basis terms, respectively. For the DNN with three hidden layers of 64 neurons, the total number of trainable parameters is $64d+8{,}449$. Values in parentheses are growth factors relative to the corresponding five-dimensional specification. The benchmark prices in Table~\ref{tab:dnn_lsmc_benchmark} are obtained using the quadratic basis.
\end{minipage}
\end{table}

The quadratic and cubic LSMC include all monomials up to total degrees of two and three. For the three-hidden-layer DNN with 64 neurons per layer, the total number of trainable parameters is equal to $64d+8449$. Table~\ref{tab:dnn_lsmc_dimension} shows how increasing input dimensionality impacts the two approximations. As the input dimension rises from 5 to 11, the number of quadratic LSMC basis terms grows from 21 to 78, while a cubic specification would expand from 56 to 364 basis functions. Over the same range, the number of DNN parameters incurs a modest relative increase of about 4.4\% from 8769 to 9153. 

For the quadratic-basis LSMC implementation used in the benchmark, this difference is also reflected in the observed computation times: the DNN computation time increases by approximately 5.9\%, whereas the LSMC runtime increases by approximately 261\%. Although LSMC remains faster in these low-dimensional test cases, its computational cost escalates far more rapidly as rolling-history state variables are incorporated. Accordingly, the comparative strength of the DNN is its more moderate dimensional scaling rather than in a lower absolute runtime for low-dimensional problems.

A further advantage of the DNN approximation is that it does not require selection of a polynomial basis. This is well-suited to present convertible-bond contract, where continuation values depend nonlinearly on stock price, conversion price, trigger counts, and rolling-history variables, with interactions induced by the reset and call provisions. Representing such interactions in LSMC requires additional higher-order and cross-product basis terms, whereas neural networks learn these interactions endogenously. Once trained, the network can be evaluated over large sets of state configurations without refitting regressions, which facilitates the construction of pricing surfaces and repeated scenario analyses. 

Finally, its differentiable representation further permits gradient-based sensitivities calculations with respect to continuous state variables via automatic differentiation, subject to the usual caution near contractual boundaries where the value function may not be smooth.
}

{\color{royalblue}\subsection{Time-grid refinement and empirical stability}}
\label{sec:grid_sensitivity}

{\color{royalblue}
To examine the sensitivity of the estimated bond price to the time discretisation, we conduct a grid-refinement experiment under the benchmark GBM dynamics with $\sigma=0.30$, retaining both the call and reset provisions. We consider 52, 104, 208, and 416 time steps per year, corresponding to the total step counts $N=312,~624,~1248,$ and $2496$ over the six-year maturity. The contractual 15-out-of-30 trading-day trigger rule is mapped to each numerical grid, giving 3-out-of-6, 6-out-of-12, 12-out-of-25, and 25-out-of-50 trigger rules, respectively.

The DNN 
is trained using 12,000 paths, with 4,000 independent paths used for evaluation. At each backward step, the inputs and regression targets are normalised, and training is performed for at most 20 epochs with validation-based early stopping. The five current-state variables are augmented by the complete rolling-window stock price history, resulting in input dimensions of 11, 17, 30, and 55 across the four grid resolutions.

The Brownian increments on the coarser grids are obtained by aggregating increments simulated on the finest grid to reduce the sampling noise when comparing adjacent grids. Thus, each replicate shares common random numbers across all four resolutions. Three pairs of simulation and network seeds are used: $(42,1042)$, $(123,1123)$, and $(2026,3026)$. 
Let $V_{0,j}^{N}$ denote the bond price computed on an $N$-step grid in replicate $j$, and let $\overline V_0^{N}$ be the sample mean over the three replicates. We report the relative change between successive mean price estimates,
\[
R_N=
\frac{\left|\overline V_0^{2N}-\overline V_0^{N}\right|}
{\left|\overline V_0^{2N}\right|},
\]
and the mean paired absolute change
\[
A_N=\frac{1}{3}\sum_{j=1}^{3}
\left|V_{0,j}^{2N}-V_{0,j}^{N}\right|.
\]
The metric $A_N$ prevents changes with opposite signs across replicates from cancelling when the mean prices are compared.

\begin{table}[H]
\centering
\captionsetup{labelfont={color=royalblue},textfont={color=royalblue}}
\small
\color{royalblue}
\caption{{Time-grid refinement under the benchmark GBM specification}}
\label{tab:grid_convergence}
\begin{tabular*}{\textwidth}{@{\extracolsep{\fill}}cccccccc}
\toprule
Steps/year & Total $N$ & Dimension & Mean price & Seed SD & $R_N$ (\%) & $A_N$ & Runtime (s) \\
\midrule
52  & 312  & 11 & 128.6589 & 0.3670 & --    & --     & 191  \\
104 & 624  & 17 & 129.0421 & 0.6104 & 0.297 & 0.3832 & 387  \\
208 & 1248 & 30 & 132.1162 & 0.8067 & 2.327 & 3.0741 & 785 \\
416 & 2496 & 55 & 132.0315 & 3.5623 & 0.064 & 2.3908 & 1629 \\
\bottomrule
\end{tabular*}
\begin{minipage}{0.95\textwidth}
\footnotesize
\vspace{0.1cm}
\textit{Notes:} Mean price and Seed SD are computed from the three paired replicates. $R_N$ is the relative change between adjacent mean prices, and $A_N$ is the mean paired absolute change. Runtime is the mean wall-clock time per replicate in the same CPU environment. The computations are performed using Python~3.12.4 and PyTorch~2.9.0 in CPU-only mode.
\end{minipage}
\end{table}

The estimates on the two coarsest grids are close: increasing the resolution from 52 to 104 steps per year changes the mean price by 0.297\%, with a mean paired absolute change of 0.3832. The mean prices on the two finest grids also differ by only 0.064\%. This small difference should, however, be considered together with the paired results. The mean paired absolute change between 208 and 416 steps per year is 2.3908, and the seed standard deviation rises to 3.5623 on the finest grid. Hence, the small observed difference between the two mean prices partly reflects changes with opposite signs across replicates and does not provide a reliable estimate of an empirical convergence order.

The results also show the computational cost of increasing both the temporal and path-state resolutions. The mean runtime rises from 191 seconds at 52 steps per year to 1629 seconds at 416 steps per year, while the input dimension increases from 11 to 55. The 104-step grid requires less than one quarter of the finest-grid runtime and produces a price close to the benchmark estimates in Table~\ref{tab:dnn_lsmc_benchmark}. It therefore provides a reasonable discretisation default choice for the subsequent numerical comparisons. The grid refinement study documents the sensitivity of the computed price to the selected grid and the remaining sampling and training variation.}

\subsection{Effects of reset and call provisions}

Table~\ref{tab:price2} quantifies the marginal effects of the call and reset provisions on the convertible bond value. We compare three contract structures: a straight bond without embedded options,
a bond with only a call provision,
and a bond with both call and downward reset provisions. We address three questions: the value of a plain bond without any embedded provisions, the change in value when a call provision is added, and further change when a reset provision is also included.

{\color{royalblue}
Two robust and consistent patterns emerge across all model specifications. First, introducing the call provision reduces the bond value. Second, incorporating the reset provision on top of the call provision increases the bond value. These relationships hold under GBM, CEV, and Heston dynamics, suggesting that the signs of the estimated clause effects are driven by the contractual structure rather than the choice of underlying dynamics.

Quantitatively, the mean call effect ranges from approximately $-3.8$ to $-4.4$. This is consistent with the call provision imposing an upper bound on the continuation value: when the contractual trigger is satisfied, the issuer may redeem the bond and thereby truncate part of the investor's future upside potential. The magnitude is similar across the three stock-dynamics specifications, and the paired-seed standard deviations are below $0.5$.
The mean reset effect is around $+18.0$ across the three models. A reset lowers the effective conversion price $H_t$, increases the conversion ratio and hence the value delivered to investors in high-stock-price states. Note that resetting also reduces the effective call threshold $bH_t$, which may then accelerate redemption and limit subsequent upside. However, under the present parameter configuration, the improvement in conversion terms dominates this opposing call-threshold channel. This yields $V_{\text{call+reset}}>V_{\text{call-only}}$ as reported in Table~\ref{tab:price2}. The paired-seed standard deviations range from $0.62$ to $0.88$, indicating that this ordering is not driven by simulation variation.
}

\begin{table}[H]
\centering
\captionsetup{labelfont={color=royalblue},textfont={color=royalblue}}
\caption{Convertible bond prices under different contract specifications}
\color{royalblue}
\small
\begin{tabular*}{\textwidth}{@{\extracolsep{\fill}}llllll}
\toprule
Model & Plain & Call-only & Call and reset & Call effect & Reset effect \\
\midrule
GBM    & \makecell{115.48\\[-1pt]{\scriptsize (0.49)}} & \makecell{111.11\\[-1pt]{\scriptsize (0.09)}} & \makecell{129.04\\[-1pt]{\scriptsize (0.61)}} & \makecell{-4.36\\[-1pt]{\scriptsize (0.46)}} & \makecell{17.93\\[-1pt]{\scriptsize (0.62)}} \\
CEV    & \makecell{114.56\\[-1pt]{\scriptsize (0.42)}} & \makecell{110.76\\[-1pt]{\scriptsize (0.01)}} & \makecell{128.81\\[-1pt]{\scriptsize (0.87)}} & \makecell{-3.80\\[-1pt]{\scriptsize (0.41)}} & \makecell{18.04\\[-1pt]{\scriptsize (0.88)}} \\
Heston & \makecell{114.73\\[-1pt]{\scriptsize (0.36)}} & \makecell{110.38\\[-1pt]{\scriptsize (0.12)}} & \makecell{128.35\\[-1pt]{\scriptsize (0.63)}} & \makecell{-4.35\\[-1pt]{\scriptsize (0.28)}} & \makecell{17.97\\[-1pt]{\scriptsize (0.67)}} \\
\bottomrule
\end{tabular*}
\label{tab:price2}
\begin{minipage}{0.95\textwidth}
\footnotesize
\vspace{0.1cm}
\textit{Notes:} Entries report the mean values across three seeds, with the standard deviation in parentheses. Plain-contract values are computed by standard Monte Carlo, call-only values are obtained using regime-wise quadratic LSMC, and call-and-reset values are obtained using the backward DNN with complete rolling-window history. The GBM call-and-reset estimate is the 104-step result reported in Table~\ref{tab:grid_convergence}. The call effect is measured as $V_{\text{call}} - V_{\text{plain}}$, while the reset effect is measured as $V_{\text{call+reset}} - V_{\text{call}}$. The effect standard deviations are computed from paired seed-level differences.
\end{minipage}
\end{table}

\subsection{Path-dependent state structure and decision regions}
To better understand how path-dependent features affect the optimal conversion decision, we examine the joint behaviour of the key state variables. Specifically, we consider the interaction between the call counter $N^{\mathrm{call}}$, which captures short-term trigger conditions, and the cumulative number of reset events $N^{\mathrm{reset}}$, which reflects longer-term path history. Since contractual features play a more important role in valuation than the choice of the underlying stock dynamics, we adopt the GBM model for all state-space illustrations in Figure~\ref{fig:evo}. This figure displays cross-sectional slices of simulated paths at several selected time points. Each point represents a state ($N^{\mathrm{call}}$, cumulative resets) together with the corresponding optimal decision. This representation makes it possible to track how the decision regions evolve over time and how they are shaped by the interaction between call and reset dynamics.

\textcolor{royalblue}{We use a discrete-time approximation with 252 time steps per year, which preserves the contractual 15-out-of-30 trigger rule on the monitoring grid. The continuation value functions are estimated using $12{,}000$ simulated training paths. For the state-space visualisations, $5{,}000$ non-terminated state paths are generated to construct cross-sectional slices at selected time points. Trigger activation is recorded during the forward simulation, whereas optimal exercise is determined through the backward decision rule. All parameters are fixed at the baseline values reported in Table~\ref{tab:model_parameters}.}

\begin{figure}[H]
    \centering
    \includegraphics[width=0.99\linewidth]{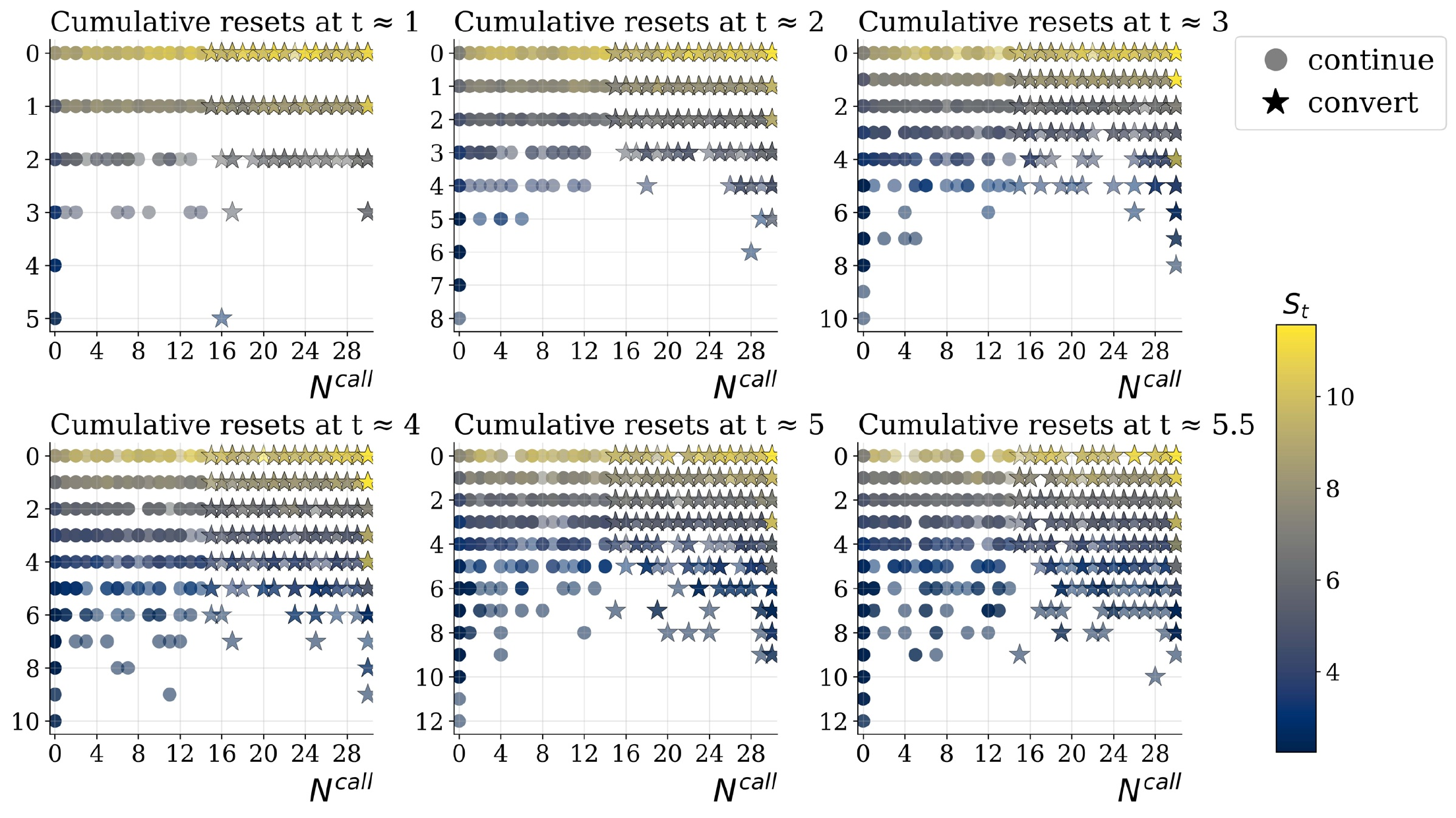}
    \caption{\textcolor{royalblue}{State-space evolution of the call counter, cumulative resets, and optimal decisions.}}
    \label{fig:evo}
     \begin{minipage}{0.95\textwidth}
\footnotesize
\vspace{0.1cm}
\textcolor{royalblue}{\textit{Notes:} Each panel presents a cross-sectional slice of $5{,}000$ unstopped simulated state paths at $t=1,2,3,4,5,$ and $5.5$ years. The horizontal axis reports the call counter $N^{\mathrm{call}}$, while the vertical axis reports the cumulative number of reset events. Colours indicate the contemporaneous stock price $S_t$, with the colour scale clipped at the 5th and 90th percentiles to prevent a small number of extreme unstopped paths from compressing the visible colour range. Circles denote continuation and stars denote conversion. Under the baseline parameters, no issuer-redemption states occur in these slices. Call-inactive states, $N^{\mathrm{call}}<15$, remain in the continuation region under the maintained modelling assumption, whereas call-active states are evaluated using the fitted pre-call continuation value and the call price. The maturity date $t=6$ is omitted because no further decision is required at expiration.}
\end{minipage}
\end{figure}

Figure~\ref{fig:evo} shows a clear negative relationship between the stock price $S_t$ and the cumulative number of reset events. Paths associated with persistently weak stock prices tend to accumulate more reset events over time. Although each reset lowers the conversion price $H_t$ and therefore reduces the future reset threshold, sustained low stock price dynamics may still generate repeated resets during the life of the bond.
Moreover, repeated resets can indirectly increase the likelihood of satisfying call-related conversion conditions. Since a lower conversion price increases the conversion value ratio $S_t/H_t$, the convertible bond becomes more deeply in the money after multiple resets, making conversion and issuer call triggers more likely in later periods.

In a standard convertible bond without path-dependent features, the optimal conversion decision is mainly driven by the comparison between the stock price and the conversion price. Conversion typically occurs only when the stock price is sufficiently high relative to the conversion price, reflecting the intrinsic value of the embedded option. Moreover, due to the time value of the conversion option, even when the stock price exceeds the conversion price, immediate conversion is often suboptimal, and investors tend to delay exercise.
By contrast, the presence of path-dependent contractual features changes this behaviour. In our framework, repeated reset events gradually reduce the effective conversion price, while the call provision introduces an additional constraint once the trigger condition is met. As a result, conversion may occur even when the contemporaneous stock price is not particularly high in absolute terms. This does not reflect a change in the intrinsic payoff structure, but rather the cumulative effect of past price movements and contractual triggers.

Overall, when both reset and call provisions are present, the optimal conversion policy can no longer be characterised by a simple threshold in the stock price. Instead, it depends on the joint configuration of multiple path-dependent state variables. This is also evident in the state-space visualisation, where similar levels of the stock price can be associated with different optimal decisions depending on the accumulated reset history and the degree of proximity to the call trigger.

\subsection{Sensitivity analysis}\label{sec:sensitivity}

\textcolor{royalblue}{We now examine the sensitivity of convertible bond prices to the initial stock price, the volatility parameter, the reset and the call threshold ratio. The analysis is performed under the GBM, CEV, and Heston specifications using the same full call-and-reset benchmark and the same three paired seeds as in the preceding numerical experiments.}

\begin{table}[H]
\centering
\captionsetup{labelfont={color=royalblue},textfont={color=royalblue}}
\color{royalblue}
\small
\caption{Sensitivity analysis under the GBM specification}
\begin{tabular}{c | c c c | c c c | c c c | c c c}
\toprule
Pert. 
& $S_0$ & Price & $\Delta(\%)$ 
& $\sigma$ & Price & $\Delta(\%)$ 
& $a$ & Price & $\Delta(\%)$ 
& $b$ & Price & $\Delta(\%)$ \\
\cmidrule(lr){1-1} \cmidrule(lr){2-4} \cmidrule(lr){5-7} \cmidrule(lr){8-10} \cmidrule(lr){11-13}
-20\% 
& 5.12 & 129.45 & \hspace{0.17em}0.02
& 0.24 & 126.39 & -2.35
& 0.64 & 126.73 & -2.09
& 1.04 & 104.57 & -19.21 \\

-10\% 
& 5.76 & 128.96 & -0.37
& 0.27 & 127.95 & -1.15
& 0.72 & 128.13 & -1.00
& 1.17 & 119.56 & -7.63 \\

\quad 0\% 
& 6.40 & 129.43 & \enskip 0.00
& 0.30 & 129.43 & \enskip 0.00
& 0.80 & 129.43 & \enskip 0.00
& 1.30 & 129.43 & \enskip 0.00 \\

+10\% 
& 7.04 & 129.51 & \enskip 0.06
& 0.33 & 131.27 & \enskip 1.42
& 0.88 & 131.11 & \enskip 1.29
& 1.43 & 138.01 & \enskip 6.63 \\

+20\% 
& 7.68 & 130.35 & \enskip 0.71
& 0.36 & 132.92 & \enskip 2.70
& 0.96 & 133.35 & \enskip 3.03
& 1.56 & 137.99 & \enskip 6.61 \\

\bottomrule
\end{tabular}
\begin{minipage}{0.95\textwidth}
\footnotesize
\vspace{0.1cm}
\textit{Notes:} Prices report the mean across three paired seeds. Each parameter is perturbed around its baseline value while all other parameters in Table~\ref{tab:cb_params} are held fixed. $\Delta(\%)$ is first computed relative to the corresponding seed-level baseline and then averaged across seeds.
\end{minipage}
\label{tab:gbm_sensitivity}
\end{table}

\begin{table}[H]
\centering
\captionsetup{labelfont={color=royalblue},textfont={color=royalblue}}
\color{royalblue}
\small
\caption{Sensitivity analysis under the CEV specification}
\begin{tabular}{ c | c c c | c c c | c c c | c c c}
\toprule
Pert. 
& $S_0$ & Price & $\Delta(\%)$ 
& $\sigma$ & Price & $\Delta(\%)$ 
& $a$ & Price & $\Delta(\%)$ 
& $b$ & Price & $\Delta(\%)$ \\
\cmidrule(lr){1-1}\cmidrule(lr){2-4} \cmidrule(lr){5-7} \cmidrule(lr){8-10} \cmidrule(lr){11-13}

-20\% 
& 5.12 & 130.17 & \enskip 0.19
& 0.28 & 126.37 & -2.73
& 0.64 & 126.92 & -2.30
& 1.04 & 104.45 & -19.60 \\

-10\% 
& 5.76 & 129.85 & -0.06
& 0.32 & 127.98 & -1.49
& 0.72 & 128.13 & -1.38
& 1.17 & 119.02 & -8.38 \\

\quad  0\% 
& 6.40 & 129.92 & \enskip 0.00
& 0.35 & 129.92 & \enskip 0.00
& 0.80 & 129.92 & \enskip 0.00
& 1.30 & 129.92 & \enskip 0.00 \\

 +10\% 
& 7.04 & 130.17 & \enskip 0.20
& 0.39 & 131.39 & \enskip 1.14
& 0.88 & 131.74 & \enskip 1.40
& 1.43 & 138.80 & \enskip 6.84 \\

+20\% 
& 7.68 & 130.76 & \enskip 0.65
& 0.42 & 133.76 & \enskip 2.96
& 0.96 & 133.15 & \enskip 2.49
& 1.56 & 140.18 & \enskip 7.91 \\

\bottomrule
\end{tabular}

\begin{minipage}{0.95\textwidth}
\footnotesize
\vspace{0.1cm}
\textit{Notes:} Prices report the mean across three paired seeds. Each parameter is perturbed around its baseline value while all other parameters in Table~\ref{tab:cb_params} are held fixed. $\Delta(\%)$ is first computed relative to the corresponding seed-level baseline and then averaged across seeds.
\end{minipage}
\label{tab:cev_sensitivity}
\end{table}
\begin{table}[H]
\centering
\captionsetup{labelfont={color=royalblue},textfont={color=royalblue}}
\color{royalblue}
\small
\caption{Sensitivity analysis under the Heston specification}
\begin{tabular}{c | c c c | c c c | c c c | c c c}
\toprule
Pert.
& $S_0$ & Price & $\Delta(\%)$
& $\sigma_0$ & Price & $\Delta(\%)$
& $a$ & Price & $\Delta(\%)$
& $b$ & Price & $\Delta(\%)$ \\
\cmidrule(lr){1-1}
\cmidrule(lr){2-4}
\cmidrule(lr){5-7}
\cmidrule(lr){8-10}
\cmidrule(lr){11-13}

$-20\%$
& 5.12 & 128.59 & \enskip 0.34
& 0.24 & 126.66 & -1.17
& 0.64 & 126.76 & -1.09
& 1.04 & 104.27 & -18.64 \\

$-10\%$
& 5.76 & 128.55 & \enskip 0.32
& 0.27 & 127.37 & -0.61
& 0.72 & 127.36 & -0.62
& 1.17 & 115.96 & -9.51 \\

\quad $0\%$
& 6.40 & 128.16 & \enskip 0.00
& 0.30 & 128.16 & \enskip 0.00
& 0.80 & 128.16 & \enskip 0.00
& 1.30 & 128.16 & \enskip 0.00 \\

$+10\%$
& 7.04 & 128.43 & \enskip 0.22
& 0.33 & 129.08 & \enskip 0.72
& 0.88 & 129.31 & \enskip 0.91
& 1.43 & 139.29 & \enskip 8.69 \\

$+20\%$
& 7.68 & 129.02 & \enskip 0.68
& 0.36 & 130.14 & \enskip 1.56
& 0.96 & 130.64 & \enskip 1.94
& 1.56 & 144.69 & 12.93 \\

\bottomrule
\end{tabular}

\begin{minipage}{0.95\textwidth}
\footnotesize
\vspace{0.1cm}
\textit{Notes:} Prices report the mean across three paired seeds.
Each sensitivity factor is perturbed around its baseline value.
$\Delta(\%)$ is first computed relative to the corresponding seed-level baseline and then averaged across seeds. In the second sensitivity block, the initial volatility $\sigma_0=\sqrt{v_0}$ is perturbed around its baseline value. At each perturbation level, the Heston variance parameters are set to $v_0=\theta=\sigma_0^2$.
\end{minipage}

\label{tab:heston_sensitivity}
\end{table}

\textcolor{royalblue}{Tables~\ref{tab:gbm_sensitivity}--\ref{tab:heston_sensitivity} report the mean price and the mean paired percentage change when each sensitivity factor is perturbed by $-20\%$, $-10\%$, $+10\%$, or $+20\%$ around its baseline value. The baseline contractual and model parameters are reported in Tables~\ref{tab:cb_params} and~\ref{tab:model_parameters}. For the Heston specification, the initial volatility $\sigma_0=\sqrt{v_0}$ is perturbed by the same percentages, with $v_0=\theta=\sigma_0^2$ at each perturbation level.
The initial stock price $S_0$ has only a limited effect under the reset-enabled contract. Across the three models, changing $S_0$ by $\pm20\%$ alters the estimated bond price by less than 1\%. This weak sensitivity reflects the endogenous adjustment of the conversion price after reset activation. Movements in the stock-price level are partly offset by changes in the effective conversion terms. Consequently, the full-contract value is less sensitive to the absolute level of $S_0$ than a standard convertible bond with a fixed conversion price.}

\textcolor{royalblue}{Volatility has a positive effect across all three models. A 20\% reduction in $\sigma$ lowers the estimated value by approximately 2.35\% under GBM and 2.73\% under CEV, while a 20\% increase raises it by approximately 2.70\% and 2.96\%, respectively. Under the Heston specification, reducing the initial volatility $\sigma_0$ by 20\% lowers the estimated value by approximately 1.17\%, while increasing it by 20\% raises the value by approximately 1.56\%. The Heston results therefore display the same positive volatility effect as the GBM and CEV results.}

\textcolor{royalblue}{The reset threshold ratio $a$ has a moderate positive effect. Raising $a$ makes the reset condition easier to satisfy and increases the estimated value by 3.03\% under GBM, 2.49\% under CEV, and 1.94\% under Heston at the $+20\%$ perturbation. Conversely, lowering $a$ by 20\% reduces the value by approximately 1.09\%--2.30\% across the three models.}
\textcolor{royalblue}{The bond value is particularly sensitive to the call threshold ratio $b$. Reducing $b$ by 20\% lowers the estimated value by approximately 19\% in all three models because the issuer call becomes substantially easier to activate. Increasing $b$ delays call activation and preserves more upside for bondholders. At the $+20\%$ perturbation, the estimated changes are 6.61\% under GBM, 7.91\% under CEV, and 12.93\% under Heston. The near-flat response between the two highest values of $b$ under GBM suggests that the bond value may approach a plateau once the call threshold becomes sufficiently difficult to reach.}
\textcolor{royalblue}{These sensitivity patterns are broadly consistent across the three stock-price models and agree with the pricing surfaces in Figure~\ref{fig:pricing_surfaces}. Among the local parameter perturbations considered, the bond value is most sensitive to the call threshold $b$. The reset threshold $a$ has a moderate positive effect, while the influence of the initial stock price is relatively limited. This ranking concerns local changes in the threshold parameters and should not be interpreted as a ranking of the contractual provisions themselves. In the contract-feature comparison reported in Table~\ref{tab:price2}, the value effect of introducing the reset provision is larger in absolute magnitude than that of introducing the call provision.}

{\color{royalblue}
\subsection{Automatic-differentiation Greeks}\label{sec:ad_greeks}

\paragraph{Greek definitions and automatic differentiation.}

In addition to the bond price, the trained neural network can provide local sensitivities with respect to the continuous state variables. Let $\widehat V_t$ denote the fitted value function, where $\mathcal P_t$ contains the path-dependent state variables associated with the reset and call provisions. Greeks summarise the local response of a derivative value to changes in its underlying state and are therefore central to hedging and risk management \citep{shirzadi2021optimal}. We focus on Delta and Gamma because the stock price is included as a continuous state variable under all three model specifications, allowing the same differentiation procedure to be applied consistently across models. We define
\begin{equation}\label{eq:greeks}
\Delta_t
=
\frac{\partial \widehat V_t}{\partial S_t},
\qquad
\Gamma_t
=
\frac{\partial^2 \widehat V_t}{\partial S_t^2}.
\end{equation}
These derivatives are computed by automatic differentiation while the trained network parameters and all other state variables are held fixed.
Unlike finite-difference bump-and-revalue calculations, automatic differentiation avoids repeated network evaluations at perturbed values. It therefore provides a convenient way to obtain sensitivities from the trained DNN. For the Greek computation, we fit a local Softplus network to the continuation values produced by the local ReLU network. The Softplus activation function is twice continuously differentiable, enabling the evaluation of both Delta and Gamma. Both the automatic-differentiation estimates and the central finite-difference estimates reported in Table~\ref{tab:ad_greeks_validation} are computed from the same fitted Softplus network.
The resulting quantities should be interpreted as local sensitivities of the neural-network approximation. Near the reset and call conversion boundaries, the underlying bond value may lack smoothness, so the estimated Greeks can exhibit sharp jumps across adjacent states.

\paragraph{Experimental design and finite-difference validation.}

The experiment is conducted under the GBM, CEV, and Heston specifications using the contractual and model parameters reported in Tables~\ref{tab:cb_params} and~\ref{tab:model_parameters}. We use 104 time steps per year, corresponding to 624 steps over the six-year maturity, with 12,000 simulated paths for training and 4,000 independent paths for evaluation. For the Heston specification, 24,000 training paths and 6,000 test paths are used because the current variance introduces an additional state variable. All Greeks are evaluated at $t=3$ years, the midpoint of the six-year maturity. For the controlled Heston stock price curve, the current variance is fixed at $v_t=0.09$, corresponding to an instantaneous volatility of $\sqrt{v_t}=0.30$. The backward pricing recursion uses the same three-hidden-layer ReLU network as in Section~\ref{sec:benchmark}. At the selected observation date, a local ReLU network is first fitted separately within each contractual region. A Softplus network with the same architecture is then fitted to the local ReLU continuation values. This construction retains the pricing approximation while providing the twice differentiable representation required for Gamma.

Delta and Gamma are evaluated over $S_t\in[3.5,10]$, with all remaining continuous and path-dependent state variables held at their reference values. The automatic-differentiation estimates are checked against central finite-difference estimates obtained from the same fitted Softplus network using a stock price perturbation $\varepsilon_S=0.10$. Since both calculations are applied to the same smooth approximation, their comparison tests the derivative accuracy rather than the pricing performance of the network. We separately assess the local approximation using the mean absolute error (MAE), root mean squared error (RMSE), and the coefficient of determination $R^2$ between the local ReLU predictions and the corresponding one-step dynamic-programming targets.

{\color{royalblue}
\begin{table}[H]
\color{royalblue}
\captionsetup{labelfont={color=royalblue},textfont={color=royalblue}}
\centering
\small
\caption{Local approximation quality and validation of automatic-differentiation Greeks}
\label{tab:ad_greeks_validation}
\textit{Panel A: Local approximation quality}\\[2pt]
\begin{tabular*}{\textwidth}{@{\extracolsep{\fill}}lcccc}
\toprule
Model & Relative MAE (\%) & Relative RMSE (\%) & Target $R^2$ & ReLU--Softplus error (\%) \\
\midrule
GBM    & 1.054 & 1.504 & 0.903 & 0.108 \\
CEV    & 0.809 & 1.156 & 0.903 & 0.095 \\
Heston & 0.408 & 0.640 & 0.947 & 0.047 \\
\bottomrule
\end{tabular*}

\vspace{0.35cm}
\textit{Panel B: Agreement between automatic differentiation and finite differences}\\[2pt]
\begin{tabular*}{\textwidth}{@{\extracolsep{\fill}}lcccc}
\toprule
Model
& Mean $|e_{\Delta}|$
& Max $|e_{\Delta}|$
& Mean $|e_{\Gamma}|$
& Max $|e_{\Gamma}|$ \\
\midrule
GBM    & 0.0036 & 0.0142 & 0.0045 & 0.0413 \\
CEV    & 0.0029 & 0.0081 & 0.0030 & 0.0219 \\
Heston & 0.0016 & 0.0068 & 0.0023 & 0.0109 \\
\bottomrule
\end{tabular*}

\begin{minipage}{0.95\textwidth}
\footnotesize
\vspace{0.1cm}
\textit{Notes:} Relative MAE and Relative RMSE are the target MAE and RMSE divided by the corresponding mean fitted bond value and expressed as percentages. Target $R^2$ compares the local ReLU predictions with the corresponding one-step dynamic-programming targets. ReLU--Softplus error is the test-sample MAE between the local ReLU and smooth Softplus approximations, divided by the corresponding mean fitted value and expressed as a percentage. In Panel B, $e_{\Delta}=\Delta_{\mathrm{AD}}-\Delta_{\mathrm{FD}}$ and $e_{\Gamma}=\Gamma_{\mathrm{AD}}-\Gamma_{\mathrm{FD}}$. The mean and maximum absolute errors are calculated over the 81 stock-price grid points in $S_t\in[3.5,10]$ using $\varepsilon_S=0.10$ for the central finite differences.
\end{minipage}
\end{table}
}

Panel A of Table~\ref{tab:ad_greeks_validation} shows that the local ReLU fits explain more than 90\% of the variation in the one-step targets under all three dynamics. Relative to the corresponding mean fitted bond values, the MAE ranges from 0.41\% to 1.05\% and the RMSE from 0.64\% to 1.50\%. The discrepancy between the ReLU and Softplus approximations remains below 0.11\% across all models, indicating that smoothing has only a limited effect on the fitted value. Panel B shows close agreement between automatic differentiation and central finite differences. The mean absolute Delta error $|e_{\Delta}|$ is below 0.004 and the mean absolute Gamma error $|e_{\Gamma}|$ is below 0.005 for all models, even the largest observed errors are bounded by 0.015 for Delta and 0.042 for Gamma. These comparisons validate the derivative calculations on the fitted smooth networks, while the target-fit statistics separately describe the quality of the local value approximation.

\paragraph{Delta and Gamma estimates and contractual effects.}

Delta and Gamma, defined in \eqref{eq:greeks}, describe the local exposure of the bond value to movements in the underlying stock and are therefore relevant for both numerical assessment and hedging. Their automatic-differentiation estimates are validated against the corresponding central finite-difference values in Table~\ref{tab:ad_greeks_validation}. We examine the effect of the reset provision by comparing the full-contract Greeks with those obtained under a no-reset specification that retains the call provision. Both specifications use the same DNN backward recursion and region-specific fitting procedure, so the comparison isolates the reset effect without changing the approximation method.

\begin{figure}[htbp]
    \centering
    \begin{subfigure}[t]{0.49\textwidth}
        \centering
        \includegraphics[width=\textwidth]{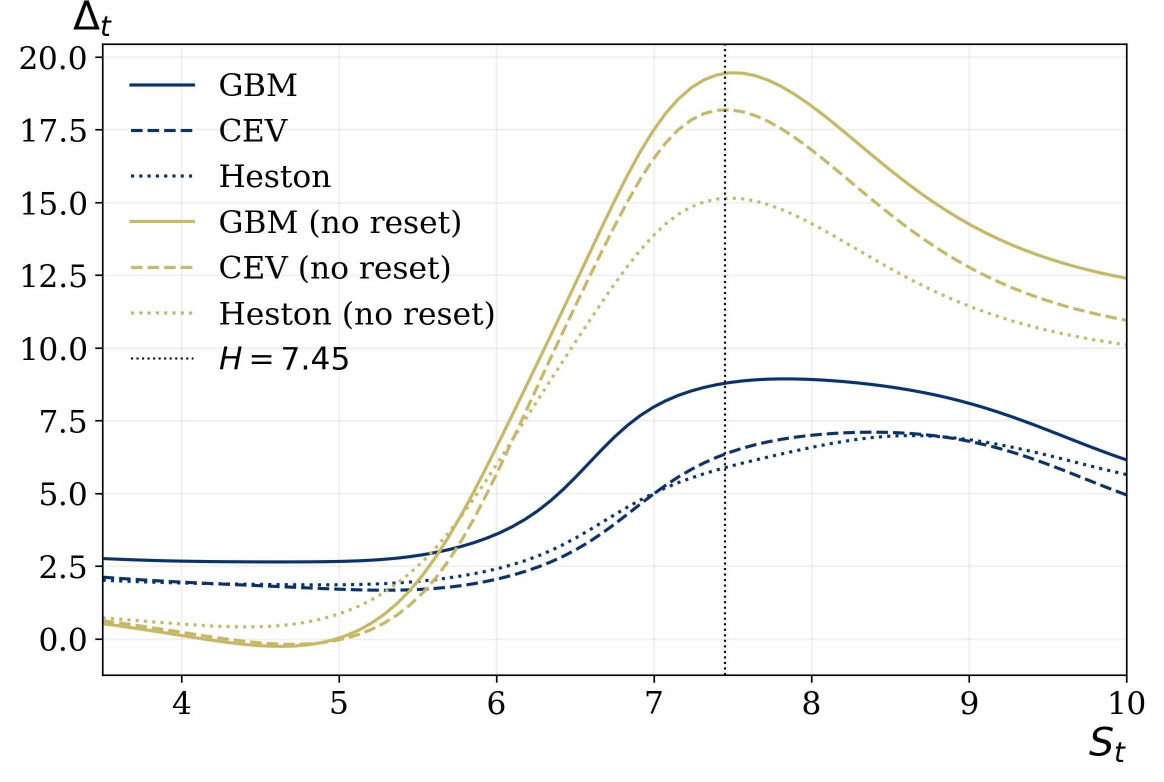}
        \caption{Delta}
        \label{fig:ad_greeks_delta}
    \end{subfigure}
    \hfill
    \begin{subfigure}[t]{0.49\textwidth}
        \centering
        \includegraphics[width=\textwidth]{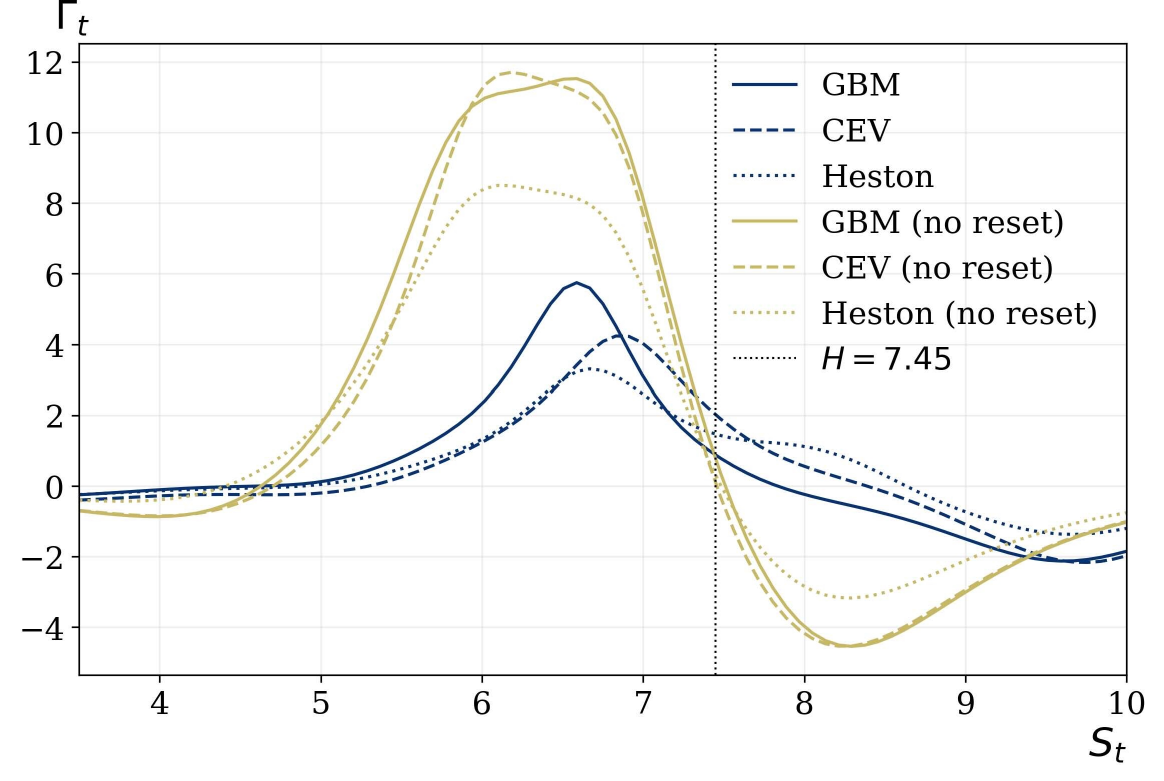}
        \caption{Gamma}
        \label{fig:ad_greeks_gamma}
    \end{subfigure}
    \caption{Automatic-differentiation Greeks under the full-contract and no-reset specifications.}
    \label{fig:ad_greeks_boundaries}
\end{figure}

Figure~\ref{fig:ad_greeks_boundaries} shows the Delta and Gamma curves at
$t=3$ years and illustrates how the reset provision affects their magnitudes and shapes. Additional Delta and Gamma estimates across the GBM, CEV, and Heston models at $t=2$, $3$, and $4$ years are reported in Appendix~\ref{app:greeks_time}. Across all three stock price models, no-reset Delta rises much more steeply as $S_t$ approaches the conversion price $H_t=7.45$ and then declines gradually at higher stock prices. The corresponding Gamma is strongly positive below $H_t$, decreases sharply near the conversion region, and turns negative as $S_t$ rises further. Negative Gamma is not confined to resettable convertible bonds.
Using finite-difference sensitivities for a sample of USD convertible bonds, \citet{zeitsch2024convertible} reports that Gamma be either positive or negative. In a snapshot-reset setting, \citet{yu2008valuation} further shows that both Delta and Gamma can turn negative and that reset discontinuities can produce pronounced variations in these sensitivities.
Although the reset mechanism here is path-dependent rather than a snapshot reset, the observed sign change similarly reflects a change in the curvature of the fitted bond-value function. This interpretation is consistent with the abrupt changes in slope and curvature near contractual boundaries discussed by \citet{lim2026structured}. Compared with the no-reset results, the full-contract Delta curves are substantially lower, while the positive and negative movements in Gamma are less pronounced. The reset provision therefore appears to dampen the local sensitivity of the bond value to the contemporaneous stock price rather than eliminate the curvature change around the conversion region. Because call activation depends on the number of qualifying days within a rolling window \citep{liu2020,ma2020valuation}, however, the sensitivity patterns should be interpreted conditional on the prevailing call state rather than as responses to a single stock price threshold.

}
\section{Conclusion\label{sec:conclusion}}
This paper develops a deep learning-based framework for pricing convertible bonds with downward reset and call provisions in a path-dependent setting. By formulating the valuation problem as a path-dependent PPDE, we explicitly capture the dependence of contractual triggers on the historical trajectory of the underlying stock price. The proposed numerical scheme combines a discrete-time dynamic programming representation with neural network approximation of conditional expectations, providing a flexible and scalable approach for high-dimensional problems.

Within this unified framework, we derive model-specific PPDE formulations under GBM, CEV, and Heston dynamics \textcolor{royalblue}{and establish a piecewise viscosity characterisation.} We implement a backward deep learning algorithm to approximate the value functions, \textcolor{royalblue}{and establish $L^2$ convergence of the neural-network approximation to the exact dynamic programming recursion on a fixed time grid. Numerical validation confirms the accuracy and stability of our DNN prices, which closely align with the LSMC benchmarks and exhibit robust behaviour across time-grid refinements and alternative network architectures.} The resulting pricing surfaces exhibit clear and economically intuitive structure, reflecting the hybrid nature of convertible bonds as a combination of debt and embedded equity options.

\textcolor{royalblue}{Our analysis highlights several consistent and meaningful findings. First, under the maintained reset specification, the bond value shows limited sensitivity to perturbations in the initial stock price near the baseline, despite exhibiting a nonlinear response over a wider stock-price range. Second, volatility has a positive effect across all three models.} Third, the call provision \textcolor{royalblue}{reduces the bond value} by truncating upside potential. Finally, \textcolor{royalblue}{the reset provision increases the bond value under the benchmark specification because the improvement in conversion terms outweighs the effect of the associated reduction in the effective call threshold.} These findings are \textcolor{royalblue}{broadly} consistent across the three dynamics, suggesting that contractual design \textcolor{royalblue}{has a greater influence on valuation} than the specific choice of diffusion process. \textcolor{royalblue}{In addition, Delta and Gamma computed by automatic differentiation from smooth local neural-network approximations closely match their central finite-difference counterparts and exhibit pronounced variation near contractual boundaries.}

Several directions for future research remain open. \textcolor{royalblue}{Extending the framework to incorporate credit risk could provide a more realistic treatment of default and recovery. Incorporating dividend payments and corporate-action-based anti-dilution adjustments would allow their effects on stock-price dynamics, conversion incentives, and contractual conversion-price changes to be analysed jointly, as discussed in Appendix~\ref{app:antidilution}. The reset mapping could also be extended to incorporate more realistic constraints, including the approval process and conversion-price floors tied to trading-price averages, the most recently audited net asset value per share, and the par value per share. Further gains may be achieved by integrating the proposed methodology with more advanced machine learning architectures to capture complex nonlinear interactions in high-dimensional pricing environments.}

\section*{Acknowledgements}

Qinwen Zhu acknowledges the support of the Artificial Intelligence Initiative to Promote the Reform of Scientific Research Paradigms and Empower Discipline Advancement (RGZN2024002), and  the National Social Science Fund of China (25BJY094). Wen Chen acknowledges the support of the BNBU Research Grant with No. of UICR0700082-24 at Beijing Normal-Hong Kong Baptist University, Zhuhai, PR China.
Wen Chen and Nicolas Langren\'e acknowledge the support of the Guangdong Provincial/Zhuhai Key Laboratory of IRADS (2022B1212010006). The authors also acknowledge the support from the Qinghai Provincial Key Laboratory of Big Data in Finance and Artificial Intelligence Application Technology, Qinghai Institute of Technology, Xining 810016, China. The authors are grateful to Dr. Shanqi Liu and two anonymous referees for their valuable comments and suggestions.

\bibliographystyle{plainnat_lastnamefirst} 

\bibliography{biblio}

\clearpage 
\appendix
\section*{Appendix}
\counterwithin{table}{section}
\section{Proofs in Section~\ref{sec:PPDE}}\label{sec:a1}
\subsection{Proof of Theorem~\ref{thm:ppde_cb}}\label{app:proof1}
\begin{proof}
\textcolor{royalblue}{We derive the PPDE satisfied by the convertible bond value
functional $V(t,\omega_t,H_t)$ locally in the continuation region
$\mathcal C$. By definition of $\mathcal C$, neither the reset nor the call condition is active here. Therefore, the conversion price $H_t$ remains locally unchanged, and the bond price evolves continuously along the stock price path. By the assumed local functional $C^{1,2}$ regularity, the functional It\^{o} formula is
applicable as long as the state remains in $\mathcal C$.}
Under the risk-neutral measure $\mathbb Q$, the stock price follows
\[
dS_t=r_tS_t\,dt+\sigma(t,\omega_t,H_t)S_t\,dW_t .
\]
Applying the functional It\^{o} formula to the non-anticipative functional $V(t,\omega_t,H_t)$ yields
\begin{equation}
\label{ito_ppde_step}
dV(t,\omega_t,H_t)
=
\partial_t V(t,\omega_t,H_t)\,dt
+
\partial_{\omega}V(t,\omega_t,H_t)\,dS_t
+
\frac12\partial_{\omega\omega}V(t,\omega_t,H_t)\,d\langle S\rangle_t .
\end{equation}
Since the quadratic variation of the stock price satisfies
\[
d\langle S\rangle_t
=
\sigma^2(t,\omega_t,H_t)S_t^2\,dt,
\]
substituting the stock price dynamics into equation~\eqref{ito_ppde_step} leads to
\begin{align}
dV(t,\omega_t,H_t)
&=
\Bigl(
\partial_tV
+r_tS_t\partial_{\omega}V
+\frac12\sigma^2(t,\omega_t,H_t)S_t^2
\partial_{\omega\omega}V
\Bigr)dt
\notag\\
&\quad
+
\partial_{\omega}V(t,\omega_t,H_t)
\sigma(t,\omega_t,H_t)S_t\,dW_t .
\end{align}
For notational convenience, define the infinitesimal generator $\mathcal{A}$
associated with the stock price dynamics
\[
\mathcal AV
=
r_tS_t\,\partial_{\omega}V
+
\frac12\sigma^2(t,\omega_t,H_t)S_t^2
\partial_{\omega\omega}V .
\]
We then obtain
\begin{equation}
dV(t,\omega_t,H_t)
=
\bigl(\partial_tV+\mathcal AV\bigr)dt
+
\partial_{\omega}V
\sigma(t,\omega_t,H_t)S_t\,dW_t .
\end{equation}
We now consider the pricing condition implied by the absence of
arbitrage under the risk-neutral measure. Since the bond pays coupons
continuously at rate $B\cdot C$, we define the discount factor
\[
D_t=e^{-\int_0^t r_u\,du},
\]
and the discounted cum-dividend value process
\[
M_t
=
D_tV(t,\omega_t,H_t)
+
\int_0^t D_s\,B\cdot C\,ds.
\]
Applying It\^{o}'s product rule to $D_tV(t,\omega_t,H_t)$ yields
\[
d(D_tV_t)
=
D_t(\partial_tV+\mathcal AV-r_tV)\,dt
+
D_t\,\partial_{\omega}V
\sigma(t,\omega_t,H_t)S_t\,dW_t .
\]
Subsequently,
\begin{align}
dM_t
=
D_t\bigl(\partial_tV+\mathcal AV-r_tV+BC\bigr)dt
+
D_t\,\partial_{\omega}V
\sigma(t,\omega_t,H_t)S_t\,dW_t .
\end{align}
\textcolor{royalblue}{As long as the state remains in the continuation region $\mathcal C$, the discounted cum-dividend value process $M_t$ is a local martingale under the risk-neutral measure $\mathbb Q$. Its drift vanishes locally in $\mathcal C$, which
gives}
\[
\partial_tV(t,\omega_t,H_t)
+\mathcal AV(t,\omega_t,H_t)
-r_tV(t,\omega_t,H_t)
+BC
=0.
\]
Finally, the boundary and switching conditions follow from the contractual specifications of the convertible bond. When the call provision is triggered, i.e., $N_t^{\mathrm{call}} \ge N_c$, the issuer imposes the call constraint, and calls the unconverted bond at the redemption price $K_t$ if the bond continuation value is greater than $K_t$. Meanwhile, the bondholder chooses optimally between conversion and continuation. The bond value is therefore
\[
V(t,\omega_t,H_t)
=
\max\left(
\frac{B}{H_t}S_t,\;
\min\bigl(V^{\mathrm{cont}}(t,\omega_t,H_t),K_t\bigr)
\right).
\]
\textcolor{royalblue}{Under the call-first convention in
Assumption~\ref{assump1}, the call condition takes precedence in the exceptional case where the call and reset conditions are
satisfied simultaneously.
When the call condition is not active but the reset condition is
satisfied, namely,
\[
N_t^{\mathrm{call}}(\omega_t;H_t)<N_c,
\qquad
N_t^{\mathrm{reset}}(\omega_t;H_t)\geq N_r,
\]
the conversion price is updated via the reset mapping
$\Psi(\omega_t,H_t)$. Hence, the bond value obeys the
nonlocal switching condition}
\begin{equation}\label{eq:switching}
V(t,\omega_t,H_t)
=
V\bigl(t,\omega_t,\Psi(\omega_t,H_t)\bigr).
\end{equation}
At maturity, the payoff is given by
\[
V(T,\omega_T,H_T)
=
\max\!\left(B,\frac{B}{H_T}S_T\right).
\]
\textcolor{royalblue}{The above argument proves the PPDE in $\mathcal C$ under the assumed functional $C^{1,2}$ regularity. The reset condition is imposed separately through \eqref{eq:switching} and therefore does not enter the local differential operator $\partial_t+\mathcal A$. Hence, no smoothness across the reset boundary is required. More generally, when a classical formulation is not available, the PPDE is interpreted in the viscosity sense in $\mathcal C$, together with the call, reset, and terminal conditions in \eqref{eq:ppde}.}
\end{proof}

\subsection{Proof of Corollary~\ref{cor:gbm_ppde}}\label{app:proofGBM}
\begin{proof}
Applying It\^{o}'s formula to $V(t,S,H_t)$ yields
\begin{align}\label{eq:V}
dV(t,S,H_t)
&=
\partial_t V(t,S,H_t)\,dt
+
\partial_S V(t,S,H_t)\,dS_t
+
\frac12 \partial_{SS}V(t,S,H_t)\,(dS_t)^2.
\end{align}
Substituting the GBM dynamics into the above expression, together with quadratic variation
\[
(dS_t)^2=\sigma^2 S_t^2\,dt,
\]
we obtain
\begin{align}
dV(t,S,H_t)
&=
\partial_t V\,dt
+
\partial_S V\bigl(rS_t\,dt+\sigma S_t\,dW_t\bigr)
+
\frac12 \partial_{SS}V\,\sigma^2 S_t^2\,dt \notag\\
&=
\Bigl(
\partial_t V
+
rS_t\,\partial_S V
+
\frac12\sigma^2 S_t^2\,\partial_{SS}V
\Bigr)\,dt
+
\sigma S_t\,\partial_S V\,dW_t.
\end{align}
Therefore, the infinitesimal generator for the GBM stock price
process is
\[
\mathcal A V
=
rS\,\partial_S V
+
\frac12\sigma^2 S^2\,\partial_{SS}V.
\]
Substituting this generator into the general PPDE in
Theorem~\ref{thm:ppde_cb} yields \eqref{eq:gbm}.
\end{proof}

\subsection{Proof of Corollary~\ref{cor:cev_ppde}}\label{appenCEV}
\begin{proof}

Substituting the CEV dynamics into \eqref{eq:V}, together with quadratic variation
\[
(dS_t)^2=\sigma^2 S_t^{2\gamma}\,dt,
\]
we have
\begin{align}
dV(t,S,H_t)
&=
\partial_t V\,dt
+
\partial_S V\bigl(rS_t\,dt+\sigma S_t^{\gamma}\,dW_t\bigr)
+
\frac12 \partial_{SS}V\,\sigma^2 S_t^{2\gamma}\,dt \notag\\
&=
\Bigl(
\partial_t V
+
rS_t\,\partial_S V
+
\frac12\sigma^2 S_t^{2\gamma}\,\partial_{SS}V
\Bigr)\,dt
+
\sigma S_t^{\gamma}\,\partial_S V\,dW_t.
\end{align}
Therefore, the infinitesimal generator associated with the CEV stock price
process is
\[
\mathcal A V
=
rS\,\partial_S V
+
\frac12\sigma^2 S^{2\gamma}\,\partial_{SS}V.
\]
Substituting this generator into the general PPDE in
Theorem~\ref{thm:ppde_cb} yields equation~\eqref{eq:cev}.
\end{proof}

\subsection{Proof of Corollary~\ref{cor:heston_ppde}}\label{appenheston}
\begin{proof}
Applying It\^{o}'s formula to $V(t,S,v,H_t)$ yields
\begin{align}
dV(t,S,v,H_t)
&=
\partial_t V(t,S,v,H_t)\,dt
+
\partial_S V(t,S,v,H_t)\,dS_t
+
\partial_v V(t,S,v,H_t)\,dv_t \notag\\
&\quad
+
\frac12 \partial_{SS}V(t,S,v,H_t)\,(dS_t)^2
+
\frac12 \partial_{vv}V(t,S,v,H_t)\,(dv_t)^2
+
\partial_{Sv}V(t,S,v,H_t)\,dS_t\,dv_t.
\end{align}
Substituting the Heston dynamics into the above expression, together with
\[
(dS_t)^2=v_t S_t^2\,dt,\qquad
(dv_t)^2=\eta^2 v_t\,dt,\qquad
dS_t\,dv_t=\rho\eta v_t S_t\,dt,
\]
we have
\begin{align}
dV(t,S,v,H_t)
&=
\partial_t V\,dt
+
\partial_S V\bigl(rS_t\,dt+\sqrt{v_t}S_t\,dW_t^S\bigr)
+
\partial_v V\bigl(\kappa(\theta-v_t)\,dt+\eta\sqrt{v_t}\,dW_t^v\bigr)
\notag\\
&\quad
+
\frac12 \partial_{SS}V\,v_t S_t^2\,dt
+
\frac12 \partial_{vv}V\,\eta^2 v_t\,dt
+
\partial_{Sv}V\,\rho\eta v_t S_t\,dt.
\end{align}
Rearranging the drift and diffusion terms, we obtain
\begin{align}
dV(t,S,v,H_t)
&=
\Bigl(
\partial_t V
+
rS_t\,\partial_S V
+
\kappa(\theta-v_t)\,\partial_v V
+
\frac12 v_t S_t^2\,\partial_{SS}V
\notag\\
&\qquad
+
\frac12\eta^2 v_t\,\partial_{vv}V
+
\rho\eta v_t S_t\,\partial_{Sv}V
\Bigr)\,dt
\notag\\
&\quad
+
\sqrt{v_t}S_t\,\partial_S V\,dW_t^S
+
\eta\sqrt{v_t}\,\partial_v V\,dW_t^v.
\end{align}
Therefore, the infinitesimal generator for the Heston dynamics is
\[
\mathcal A V
=
rS\,\partial_S V
+
\kappa(\theta-v)\,\partial_v V
+
\frac12 vS^2\,\partial_{SS}V
+
\frac12\eta^2 v\,\partial_{vv}V
+
\rho\eta vS\,\partial_{Sv}V.
\]
Substituting this generator into the general PPDE in
Theorem~\ref{thm:ppde_cb} yields equation~\eqref{eq:heston}.
\end{proof}

\section{Proofs in Section~\ref{sec:deepapp}}\label{appen:a2}

{\color{royalblue}
\subsection{Proof of Theorem~\ref{thm:ppde_cb_viscosity}}
\label{app:proof_ppde_viscosity}
\begin{proof}
\medskip\noindent\textcolor{royalblue}{\textbf{Step 1: Well-posedness of the state process.}}\par
\vspace{0.2cm}
Fix $i\in\{0,\ldots,N-1\}$ and $t\in(t_i,t_{i+1})$. No monitoring occurs within this interval, so $H$ and the ordered window state remain unchanged. For
completeness, we first verify that the state process used in the valuation is well posed. 
Let $S$ and $\widetilde S$ be two solutions driven by the same
Brownian motion and sharing identical stopped initial path. {For $R>0$, define
\[
\tau_R:=\inf\{q\geq t:|S_q|\vee|\widetilde S_q|\geq R\}\wedge t_{i+1}.
\]
The estimates below are first applied on $[t,s\wedge\tau_R]$. We suppress the stopping notation until the application of Gr{\"o}nwall's inequality.} Then
\begin{equation*}
\Delta S_q
=\int_t^q\Delta\mu_u\,du+\int_t^q\Delta\nu_u\,dW_u,
\end{equation*}
where
$
\Delta S_q:=S_q-\widetilde S_q,~
\Delta\mu_u:=\mu(u,\omega_u,H)-\mu(u,\widetilde\omega_u,H),
~
\Delta\nu_u:=\nu(u,\omega_u,H)-\nu(u,\widetilde\omega_u,H).
$
Using $|x+y|^2\leq2|x|^2+2|y|^2$, we have
\begin{align*}
\mathbb E\left[\sup_{t\leq q\leq s}|\Delta S_q|^2\right]
\leq 2\mathbb E\left[\sup_{t\leq q\leq s}
\left|\int_t^q\Delta\mu_u\,du\right|^2\right]
+2\mathbb E\left[\sup_{t\leq q\leq s}
\left|\int_t^q\Delta\nu_u\,dW_u\right|^2\right].
\end{align*}
The Cauchy--Schwarz inequality implies
\begin{equation*}
\sup_{t\leq q\leq s}\left|\int_t^q\Delta\mu_u\,du\right|^2
\leq(s-t)\int_t^s|\Delta\mu_u|^2\,du,
\end{equation*}
while the Burkholder--Davis--Gundy (BDG) inequality gives
\begin{equation*}
\mathbb E\left[\sup_{t\leq q\leq s}
\left|\int_t^q\Delta\nu_u\,dW_u\right|^2\right]
\leq C_{\rm BDG}\mathbb E\int_t^s|\Delta\nu_u|^2\,du.
\end{equation*}
Here and below, $C_{\rm BDG}>0$ denotes the constant in the BDG inequality, $C_p>0$ denotes a generic constant depending on $p$ that may vary from line to line.
Consequently,
\begin{align*}
\mathbb E\left[\sup_{t\le q\le s}|S_q-\widetilde S_q|^2\right]
&\leq2(s-t)\mathbb E\int_t^s|\Delta\mu_u|^2\,du
+2C_{\rm BDG}\mathbb E\int_t^s|\Delta\nu_u|^2\,du.
\end{align*}
Since the two solutions have the same history up to $t$, the pathwise
Lipschitz condition \eqref{eq:path_lipschitz_proof} gives
\begin{equation*}
|\Delta\mu_u|^2+|\Delta\nu_u|^2
\leq C_L\|\omega-\widetilde\omega\|_u^2
=C_L\sup_{t\leq q\leq u}|S_q-\widetilde S_q|^2.
\end{equation*}
Substituting this bound into the preceding estimate gives
\begin{align*}
\mathbb E\left[\sup_{t\le q\le s}|S_q-\widetilde S_q|^2\right]
&\le C_L\int_t^s
\mathbb E\left[\sup_{t\le q\le u}|S_q-\widetilde S_q|^2\right]du.
\end{align*}
The Gr{\"o}nwall inequality therefore implies
\begin{equation*}
{\mathbb E\left[\sup_{t\le q\le s\wedge\tau_R}|S_q-\widetilde S_q|^2\right]=0,}
\end{equation*}
{Taking $R\to\infty$ gives pathwise uniqueness up to the explosion time.} Picard iteration under
\eqref{eq:path_lipschitz_proof}--\eqref{eq:linear_growth_proof} gives existence of a strong solution. {For this solution, set
$
\sigma_R:=\inf\{q\geq t:|S_q|\geq R\}\wedge t_{i+1}.
$
The following moment estimates are first applied to $S_{\cdot\wedge\sigma_R}$, again with the stopping notation suppressed.} Moreover, for any $p\ge2$, the BDG inequality and the
linear-growth condition give the following moment estimate. From the integral
representation of the state equation and the inequality
$|x+y+z|^p\leq3^{p-1}(|x|^p+|y|^p+|z|^p)$, we obtain
\begin{equation}\label{eq:moment_decomposition_proof}
\begin{aligned}
\mathbb E_{t,\omega}^{\mathbb Q}
\left[\sup_{t\leq q\leq s}|S_q|^p\right]
&\leq C_p\|\omega\|_t^p
+C_p\mathbb E_{t,\omega}^{\mathbb Q}\left[
\sup_{t\leq q\leq s}
\left|\int_t^q\mu(u,\omega_u,H)\,du\right|^p\right]\\
&\quad+C_p\mathbb E_{t,\omega}^{\mathbb Q}\left[
\sup_{t\leq q\leq s}
\left|\int_t^q\nu(u,\omega_u,H)\,dW_u\right|^p\right].
\end{aligned}
\end{equation}
H\"older's inequality bounds the drift term:
\begin{equation*}
\sup_{t\leq q\leq s}
\left|\int_t^q\mu(u,\omega_u,H)\,du\right|^p
\leq(s-t)^{p-1}\int_t^s|\mu(u,\omega_u,H)|^p\,du.
\end{equation*}
For the stochastic integral, the BDG and H\"older inequalities imply
\begin{align*}
\mathbb E_{t,\omega}^{\mathbb Q}\left[
\sup_{t\leq q\leq s}
\left|\int_t^q\nu(u,\omega_u,H)\,dW_u\right|^p\right]
&\leq C_p\mathbb E_{t,\omega}^{\mathbb Q}\left[
\left(\int_t^s|\nu(u,\omega_u,H)|^2\,du\right)^{p/2}\right]\\
&\leq C_p(s-t)^{p/2-1}
\mathbb E_{t,\omega}^{\mathbb Q}
\int_t^s|\nu(u,\omega_u,H)|^p\,du.
\end{align*}
By the linear-growth condition \eqref{eq:linear_growth_proof},
\begin{equation*}
|\mu(u,\omega_u,H)|^p+|\nu(u,\omega_u,H)|^p
\leq C_p\bigl(1+\|\omega_u\|_u^p+|H|^p\bigr).
\end{equation*}
Since
\begin{equation*}
\|\omega_u\|_u^p
\leq\|\omega\|_t^p+\sup_{t\leq v\leq u}|S_v|^p,
\end{equation*}
Combining these bounds with \eqref{eq:linear_growth_proof} in
\eqref{eq:moment_decomposition_proof} yields
\begin{align*}
\mathbb E_{t,\omega}^{\mathbb Q}
\left[\sup_{t\le q\le s}|S_q|^p\right]
&\le C_p\left(1+\|\omega\|_t^p+|H|^p
+\int_t^s\mathbb E_{t,\omega}^{\mathbb Q}
\left[\sup_{t\le v\le u}|S_v|^p\right]du\right).
\end{align*}
Applying Gr{\"o}nwall's inequality once more, we have
{
\begin{equation*}
\mathbb E_{t,\omega}^{\mathbb Q}
\left[\sup_{t\le s\le t_{i+1}}|S_{s\wedge\sigma_R}|^p\right]
\le C_{p,t_{i+1}}\bigl(1+\|\omega\|_t^p+|H|^p\bigr),
\end{equation*}
where the right-hand side is independent of $R$. We now take $R\to\infty$ and apply Fatou's lemma to obtain
}
\begin{equation}\label{eq:moment_bound}
\mathbb E_{t,\omega}^{\mathbb Q}
\left[\sup_{t\le s\le T}|S_s|^p\right]
\le C_{p,T}\bigl(1+\|\omega\|_t^p+|H|^p\bigr)<\infty.
\end{equation}
{Particularly,
$\mathbb P(\sigma_R<t_{i+1})\leq C_{p,t_{i+1}}(1+\|\omega\|_t^p+|H|^p)/R^p\to0$.}
This estimate excludes finite-time explosion. Thus
\eqref{eq:unified_state} admits a unique non-explosive strong solution {on $[t,t_{i+1}]$}.

\medskip\noindent\textcolor{royalblue}{\textbf{Step 2: Dynamic programming principle.}}\par
\vspace{0.1cm}
This step establishes the dynamic programming principle over each monitoring interval using the tower property of conditional expectation. This is the probabilistic foundation for the PPDE viscosity framework of \textcolor{darkpurple}{\cite{ekren2014ppde}}.
For any stopping time $t\le\tau\le t_{i+1}$, the tower property gives the dynamic programming principle
\begin{equation}\label{eq:DPP_proof}
V(t,\omega,H)=\mathbb E_{t,\omega,H}^{\mathbb Q}\!\left[
e^{-\int_t^\tau r_u du}V(\tau,\omega_\tau,H)
+\int_t^\tau e^{-\int_t^s r_u du}BC\,ds\right].
\end{equation}
The moment estimates \eqref{eq:moment_bound} and
\eqref{eq:conversion_ratio_moment_proof}, together with the polynomial-growth assumptions, ensure that the discounted continuation value and accumulated coupon payments have finite expectations. Hence the conditional expectation in \eqref{eq:DPP_proof} is well defined.

To verify this relation, we denote
$
D_{t,s}=\exp\left(-\int_t^s r_u\,du\right),
~
A_{t,s}=\int_t^sD_{t,u}BC\,du.
$
{Taking $\tau=t_{i+1}^-$ in \eqref{eq:DPP_proof} and using
$V(t_{i+1}^-,x)=G_{i+1}(x)$ gives}
{
\begin{equation}\label{eq:interval_representation_proof}
\begin{aligned}
V(t,\omega,H)
&=\mathbb E_{t,\omega,H}^{\mathbb Q}\!\left[
e^{-\int_t^{t_{i+1}}r_u\,du}
V(t_{i+1}^-,\omega_{t_{i+1}},H)
+\int_t^{t_{i+1}}e^{-\int_t^s r_u\,du}BC\,ds
\right]\\
&=\mathbb E_{t,\omega,H}^{\mathbb Q}\!\left[
D_{t,t_{i+1}}V(t_{i+1}^-,\omega_{t_{i+1}},H)
+A_{t,t_{i+1}}\right]\\
&=\mathbb E_{t,\omega,H}^{\mathbb Q}\!\left[
D_{t,t_{i+1}}G_{i+1}(\omega_{t_{i+1}},H)
+A_{t,t_{i+1}}\right],
\end{aligned}
\end{equation}
}
where
\begin{equation*}
G_{i+1}(x)
:=
\begin{cases}
\bigl(\mathcal J_{i+1}V(t_{i+1}+,\cdot)\bigr)(x),
&i+1<N,\\[4pt]
\displaystyle\max\left\{B,\frac{B}{H_T}S_T\right\},
&i+1=N.
\end{cases}
\end{equation*}
For $t\le\tau\le t_{i+1}$, the multiplicative identity
$
D_{t,t_{i+1}}=D_{t,\tau}D_{\tau,t_{i+1}}
$
and the additive decomposition
$
A_{t,t_{i+1}}
=A_{t,\tau}+D_{t,\tau}A_{\tau,t_{i+1}}
$
imply
\begin{align*}
V(t,\omega,H)
&=\mathbb E_{t,\omega,H}^{\mathbb Q}\!\left[
A_{t,\tau}+D_{t,\tau}
\mathbb E_{\tau,\omega_\tau,H}^{\mathbb Q}
\left[D_{\tau,t_{i+1}}G_{i+1}+A_{\tau,t_{i+1}}\right]
\right]\notag\\
&=\mathbb E_{t,\omega,H}^{\mathbb Q}\!\left[
A_{t,\tau}+D_{t,\tau}V(\tau,\omega_\tau,H)\right],
\end{align*}
which proves equation~\eqref{eq:DPP_proof}.

\medskip\noindent\textcolor{royalblue}{\textbf{Step 3: Discounted functional It\^o formula.}}\par
We apply the functional It\^o formula to a discounted test functional and
identify the PPDE operator in the resulting drift term.
On $(t_i,t_{i+1})$, the conversion price $H$ is constant and
\begin{equation*}
dS_s=r_sS_s\,ds+\sigma(s,\omega_s,H)S_s\,dW_s.
\end{equation*}
The quadratic variation of $S$ therefore satisfies
\begin{equation*}
d\langle S\rangle_s
=\sigma^2(s,\omega_s,H)S_s^2\,ds.
\end{equation*}
For any $\varphi\in C^{1,2}$, the functional It\^o formula gives
\begin{equation*}
d\varphi(s,\omega_s,H)
=\partial_s\varphi\,ds
+\partial_\omega\varphi\,dS_s
+\frac12\partial_{\omega\omega}^2\varphi\,d\langle S\rangle_s.
\end{equation*}
Substituting the preceding expressions for $dS_s$ and
$d\langle S\rangle_s$ yields
\begin{equation*}
\begin{aligned}
d\varphi(s,\omega_s,H)
&=\partial_s\varphi\,ds
+\partial_\omega\varphi
\bigl[r_sS_s\,ds+\sigma(s,\omega_s,H)S_s\,dW_s\bigr]
+\frac12\partial_{\omega\omega}^2\varphi
\bigl[\sigma^2(s,\omega_s,H)S_s^2\,ds\bigr]\\
&=\partial_s\varphi\,ds
+r_sS_s\partial_\omega\varphi\,ds
+\sigma(s,\omega_s,H)S_s\partial_\omega\varphi\,dW_s
+\frac12\sigma^2(s,\omega_s,H)S_s^2
\partial_{\omega\omega}^2\varphi\,ds\\
&=\left(\partial_s\varphi
+r_sS_s\partial_\omega\varphi
+\frac12\sigma^2(s,\omega_s,H)S_s^2
\partial_{\omega\omega}^2\varphi\right)ds
+\sigma(s,\omega_s,H)S_s
\partial_\omega\varphi\,dW_s\\
&=(\partial_s\varphi+\mathcal A\varphi)ds
+\sigma(s,\omega_s,H)S_s\partial_\omega\varphi\,dW_s.
\end{aligned}
\end{equation*}
Since $dD_{t,s}=-r_sD_{t,s}ds$, the product rule yields
\begin{align}
d\bigl(D_{t,s}\varphi(s,\omega_s,H)\bigr)
&=D_{t,s}(\partial_s\varphi+\mathcal A\varphi-r_s\varphi)ds
+D_{t,s}\sigma(s,\omega_s,H)S_s
\partial_\omega\varphi\,dW_s.
\label{eq:discounted_ito_full_proof}
\end{align}

\medskip\noindent\textcolor{royalblue}{\textbf{Step 4: Viscosity subsolution property.}}\par
We combine the dynamic programming principle above with a local contradiction argument to prove the viscosity subsolution inequality. The test-functional and optimal-stopping argument follows the standard PPDE
viscosity method in {\cite{ekren2014ppde}}.
{Let $\varphi\in C^{1,2}$ be an admissible test functional
tangent to $V$ from above at $(t,\omega,H)$ in the mean-tangency sense
of \cite{ren2020comparison}. In particular,}
\begin{equation*}
{V(t,\omega,H)=\varphi(t,\omega,H).}
\end{equation*}
locally. If the subsolution inequality fails, then there exists $\eta>0$ and
a neighbourhood around the contact point on which
\begin{equation}\label{eq:sub_local_proof}
-\partial_s\varphi-\mathcal A\varphi+r_s\varphi-BC\ge\eta.
\end{equation}
To localise the argument, we use the stopped-path metric as
\begin{equation*}
d_\infty\bigl((t,\omega),(t',\omega')\bigr)
:=|t-t'|
+\sup_{0\leq u\leq T}
\left|\omega_{u\wedge t}-\omega'_{u\wedge t'}\right|.
\end{equation*}
{Choose $\delta>0$ sufficiently small so that the localised exit time below does not exceed the localising time of the test functional}
\begin{equation}\label{eq:exit_time_proof}
\tau_\delta=(t+\delta)\wedge
\inf\{s>t:d_\infty((s,\omega_s),(t,\omega))\ge\delta\},
\end{equation}
so that $\tau_\delta<t_{i+1}$. Applying the discounted functional It\^o formula
\eqref{eq:discounted_ito_full_proof} up to $\tau_\delta$ gives
\begin{equation*}
\begin{aligned}
D_{t,\tau_\delta}\varphi(\tau_\delta,\omega_{\tau_\delta},H)
-\varphi(t,\omega,H)
&=\int_t^{\tau_\delta}D_{t,s}
(\partial_s\varphi+\mathcal A\varphi-r_s\varphi)\,ds
+\int_t^{\tau_\delta}D_{t,s}
\sigma(s,\omega_s,H)S_s\partial_\omega\varphi\,dW_s.
\end{aligned}
\end{equation*}
Rearranging it gives
\begin{equation*}
\begin{aligned}
\varphi(t,\omega,H)
&=D_{t,\tau_\delta}\varphi(\tau_\delta,\omega_{\tau_\delta},H)
-\int_t^{\tau_\delta}D_{t,s}
(\partial_s\varphi+\mathcal A\varphi-r_s\varphi)\,ds
-\int_t^{\tau_\delta}D_{t,s}
\sigma(s,\omega_s,H)S_s\partial_\omega\varphi\,dW_s.
\end{aligned}
\end{equation*}
 The stochastic integral is a martingale with zero conditional expectation after localisation. Taking conditional expectations therefore gives
\begin{equation*}
\begin{aligned}
\varphi(t,\omega,H)
&=\mathbb E^{\mathbb Q}_{t,\omega,H}\!\left[
e^{-\int_t^{\tau_\delta}r_u du}
\varphi(\tau_\delta,\omega_{\tau_\delta},H)\right]
-\mathbb E^{\mathbb Q}_{t,\omega,H}\!\left[
\int_t^{\tau_\delta}e^{-\int_t^s r_u du}
(\partial_s\varphi+\mathcal A\varphi-r_s\varphi)ds\right].
\end{aligned}
\end{equation*}
By inequality \eqref{eq:sub_local_proof}, we have
\begin{equation*}
\begin{aligned}
\varphi(t,\omega,H)
&\ge\mathbb E^{\mathbb Q}_{t,\omega,H}\!\left[
e^{-\int_t^{\tau_\delta}r_u du}
\varphi(\tau_\delta,\omega_{\tau_\delta},H)
+\int_t^{\tau_\delta}e^{-\int_t^s r_u du}BC\,ds\right]\notag
+\eta\mathbb E^{\mathbb Q}_{t,\omega,H}\!\left[
\int_t^{\tau_\delta}e^{-\int_t^s r_u du}ds\right].
\end{aligned}
\end{equation*}
{After the
standard discounting transformation, the defining mean-tangency condition gives}
\begin{equation*}
{
\mathbb E^{\mathbb Q}_{t,\omega,H}\!\left[
D_{t,\tau_\delta}
\bigl(\varphi(\tau_\delta,\omega_{\tau_\delta},H)
-V(\tau_\delta,\omega_{\tau_\delta},H)\bigr)\right]\geq0.}
\end{equation*}
{Combining this relation with the contact equality,
the preceding inequality, and the DPP \eqref{eq:DPP_proof} yields}
{
\begin{align*}
V(t,\omega,H)
&=\varphi(t,\omega,H)\\
&\geq\mathbb E^{\mathbb Q}_{t,\omega,H}\!\left[
D_{t,\tau_\delta}\varphi(\tau_\delta,\omega_{\tau_\delta},H)
+A_{t,\tau_\delta}\right]
+\eta \mathbb E^{\mathbb Q}_{t,\omega,H}\!\left[
\int_t^{\tau_\delta}D_{t,s}\,ds\right]\\
&=\mathbb E^{\mathbb Q}_{t,\omega,H}\!\left[
D_{t,\tau_\delta}V(\tau_\delta,\omega_{\tau_\delta},H)
+A_{t,\tau_\delta}\right]\\
&\quad+\mathbb E^{\mathbb Q}_{t,\omega,H}\!\left[
D_{t,\tau_\delta}
\bigl(\varphi(\tau_\delta,\omega_{\tau_\delta},H)
-V(\tau_\delta,\omega_{\tau_\delta},H)\bigr)\right]
+\eta \mathbb E^{\mathbb Q}_{t,\omega,H}\!\left[
\int_t^{\tau_\delta}D_{t,s}\,ds\right]\\
&\geq\mathbb E^{\mathbb Q}_{t,\omega,H}\!\left[
D_{t,\tau_\delta}V(\tau_\delta,\omega_{\tau_\delta},H)
+A_{t,\tau_\delta}\right]
+\eta \mathbb E^{\mathbb Q}_{t,\omega,H}\!\left[
\int_t^{\tau_\delta}D_{t,s}\,ds\right]\\
&=V(t,\omega,H)
+\eta \mathbb E^{\mathbb Q}_{t,\omega,H}\!\left[
\int_t^{\tau_\delta}D_{t,s}\,ds\right].
\end{align*}
}
{Subtracting $V(t,\omega,H)$ from both sides gives}
{
\begin{equation*}
0\geq\eta\mathbb E^{\mathbb Q}_{t,\omega,H}\!\left[
\int_t^{\tau_\delta}D_{t,s}\,ds\right]>0,
\end{equation*}
}
which yields a contradiction. Hence
\begin{equation*}
-\partial_t\varphi-\mathcal A\varphi+r_tV-BC\le0,
\end{equation*}
and $V$ is a viscosity subsolution.

\medskip\noindent\textcolor{royalblue}{\textbf{Step 5: Viscosity supersolution property.}}\par
This step reverses the preceding contact argument to prove the viscosity supersolution inequality.
{Let $\psi\in C^{1,2}$ be an admissible test functional
tangent to $V$ from below at $(t,\omega,H)$ in the same
mean-tangency sense}
\begin{equation}\label{eq:lower_contact_proof}
{V(t,\omega,H)=\psi(t,\omega,H).}
\end{equation}
Suppose the supersolution inequality fails, then there exists $\eta>0$, such that, after reducing the neighbourhood,
\begin{equation*}
-\partial_s\psi-\mathcal A\psi+r_s\psi-BC\le-\eta.
\end{equation*}
Applying \eqref{eq:discounted_ito_full_proof} to $\psi$ up to the exit time \eqref{eq:exit_time_proof} gives
\begin{align}
\psi(t,\omega,H)
&\le\mathbb E^{\mathbb Q}_{t,\omega,H}\!\left[
D_{t,\tau_\delta}\psi(\tau_\delta,\omega_{\tau_\delta},H)
+A_{t,\tau_\delta}\right]
-\eta\mathbb E^{\mathbb Q}_{t,\omega,H}\!\left[
\int_t^{\tau_\delta}D_{t,s}\,ds\right].
\label{eq:strict_super_proof}
\end{align}
{The corresponding mean-tangency condition implies}
\begin{equation*}
{
\mathbb E^{\mathbb Q}_{t,\omega,H}\!\left[
D_{t,\tau_\delta}
\bigl(\psi(\tau_\delta,\omega_{\tau_\delta},H)
-V(\tau_\delta,\omega_{\tau_\delta},H)\bigr)\right]\leq0.}
\end{equation*}
{Combining the mean-tangency inequality with
\eqref{eq:lower_contact_proof}, \eqref{eq:strict_super_proof},
and the DPP gives}
{
\begin{align*}
0
&=\psi(t,\omega,H)-V(t,\omega,H)\\
&\leq\mathbb E^{\mathbb Q}_{t,\omega,H}\!\left[
D_{t,\tau_\delta}
\bigl(\psi(\tau_\delta,\omega_{\tau_\delta},H)
-V(\tau_\delta,\omega_{\tau_\delta},H)\bigr)\right]
-\eta\mathbb E^{\mathbb Q}_{t,\omega,H}\!\left[
\int_t^{\tau_\delta}D_{t,s}\,ds\right]\\
&\leq-\eta\mathbb E^{\mathbb Q}_{t,\omega,H}\!\left[
\int_t^{\tau_\delta}D_{t,s}\,ds\right]<0.
\end{align*}
}
 Therefore
\begin{equation*}
-\partial_t\psi-\mathcal A\psi+r_tV-BC\ge0.
\end{equation*}
Thus $V$ is a viscosity solution of \eqref{eq:unified_interval_ppde}.

\medskip\noindent\textcolor{royalblue}{\textbf{Step 6: Comparison on a monitoring interval.}}\par
We establish that an interval viscosity subsolution cannot exceed an interval viscosity supersolution when their terminal values are ordered.
Let $u$ be a viscosity subsolution and $v$ be a viscosity supersolution respectively of
\eqref{eq:unified_interval_ppde}, both with polynomial growth. Suppose that
$
u^*(t_{i+1}^-,\cdot)\leq v_*(t_{i+1}^-,\cdot).
$
For $\varepsilon>0$, we define
\begin{equation*}
v^\varepsilon:=v+\varepsilon\chi.
\end{equation*}
By \eqref{eq:lyapunov_comparison_proof} and the linearity of the interval operator, $v^\varepsilon$ is a strict viscosity supersolution satisfying
\begin{equation*}
-\partial_tv^\varepsilon-\mathcal Av^\varepsilon
+r_tv^\varepsilon-BC\geq\varepsilon.
\end{equation*}
Because $\chi$ dominates the polynomial growth of $u$ and $v$, so we may select $\chi$ such that $\chi(t,\omega,H)\to\infty$ and
\begin{equation*}
\frac{|u(t,\omega,H)|+|v(t,\omega,H)|}
{\chi(t,\omega,H)}\longrightarrow0
\qquad\text{as }\|\omega\|_t\to\infty.
\end{equation*}
Since $v^\varepsilon=v+\varepsilon\chi$, we have
\begin{align*}
u(t,\omega,H)-v^\varepsilon(t,\omega,H)
&=u(t,\omega,H)-v(t,\omega,H)
-\varepsilon\chi(t,\omega,H)\\
&\leq |u(t,\omega,H)|+|v(t,\omega,H)|
-\varepsilon\chi(t,\omega,H)\\
&=\chi(t,\omega,H)
\left(
\frac{|u(t,\omega,H)|+|v(t,\omega,H)|}
{\chi(t,\omega,H)}-\varepsilon
\right).
\end{align*}
The expression in parentheses converges to $-\varepsilon$, while
$\chi(t,\omega,H)\to\infty$. Therefore,
\begin{equation*}
\limsup_{\|\omega\|_t\to\infty}
\bigl(u(t,\omega,H)-v^\varepsilon(t,\omega,H)\bigr)=-\infty.
\end{equation*}
Any positive maximum of $u-v^\varepsilon$ must therefore lie within a bounded
stopped-path domain. On this domain, the coefficients satisfy the boundedness,
uniform-continuity, and pathwise Lipschitz assumptions of the semilinear PPDE
comparison theorem. Applying Theorem~4.1 of  \textcolor{darkpurple}{\cite{ren2020comparison}} yields
\begin{equation*}
u\leq v^\varepsilon=v+\varepsilon\chi
\qquad\text{on }[t_i,t_{i+1}].
\end{equation*}
Since $\chi$ is finite at every fixed $(t,\omega,H)$, taking the limit as
$\varepsilon\to0^+$ gives
\begin{equation*}
u(t,\omega,H)\leq v(t,\omega,H),
\qquad t\in[t_i,t_{i+1}].
\end{equation*}
This proves the interval comparison principle in the polynomial-growth class.

\medskip\noindent\textcolor{royalblue}{\textbf{Step 7: Monotonicity of the monitoring operator.}}\par
The call-first transmission operator preserves pointwise ordering. This property
allows the comparison result across monitoring dates.
Indeed, the minimum and maximum operations and composition with $\Gamma_i$ or
$\Gamma_i^R$ are all order-preserving. Hence
\begin{equation*}
\phi\le\psi\quad\Longrightarrow\quad
\mathcal J_i\phi\le\mathcal J_i\psi.
\end{equation*}
Within the call region, we have
\begin{align*}
\phi(\Gamma_i x)&\le\psi(\Gamma_i x),\\
\min\{\phi(\Gamma_i x),K_i\}
&\le\min\{\psi(\Gamma_i x),K_i\},\\
\max\left\{\frac{B}{H_i}S_i,
\min\{\phi(\Gamma_i x),K_i\}\right\}
&\le
\max\left\{\frac{B}{H_i}S_i,
\min\{\psi(\Gamma_i x),K_i\}\right\}.
\end{align*}
In the reset and continuation regions, we have
$
\phi(\Gamma_i^Rx)\le\psi(\Gamma_i^Rx),
~
\phi(\Gamma_ix)\le\psi(\Gamma_ix)
$ respectively.
Thus transmission preserves the ordering required by comparison. The strict
and weak trigger inequalities fix the pointwise value on a threshold surface
through \eqref{eq:pointwise_transmission_proof}. Taking upper and lower relaxed
limits gives \eqref{eq:relaxed_transmission_proof}.

\medskip\noindent\textcolor{royalblue}{\textbf{Step 8: Backward induction and global uniqueness.}}\par
Terminal uniqueness, interval comparison, and monotone transmission together
determine the contract value over the entire horizon.
At maturity, the value is uniquely fixed by
\eqref{eq:terminal_payoff_proof}. This gives the terminal condition for the
last interval. For any $i\leq N-2$, once $V(t_{i+1}^+,\cdot)$ is known, the
contract supplies the terminal value for the preceding interval through
\begin{equation}\label{eq:backward_terminal_proof}
V(t_{i+1}^-,\cdot)=\mathcal J_{i+1}V(t_{i+1}^+,\cdot).
\end{equation}
On every open trigger cell, the applicable branch of $\mathcal J_{i+1}$ is
fixed. Thus \eqref{eq:backward_terminal_proof} is continuous there and of
polynomial growth. Let $U$ and $V$ be two
piecewise viscosity solutions satisfying identical terminal and transmission
conditions. At maturity,
\begin{equation*}
U(T,X_N)
=\max\left\{B,\frac{B}{H_T}S_T\right\}
=V(T,X_N).
\end{equation*}
Comparison on $(t_{N-1},T)$ gives
\begin{equation*}
U=V\qquad\text{on }(t_{N-1},T).
\end{equation*}
If $U(t_{i+1}^+)=V(t_{i+1}^+)$, then
\begin{equation*}
U(t_{i+1}^-)
=\mathcal J_{i+1}U(t_{i+1}^+)
=\mathcal J_{i+1}V(t_{i+1}^+)
=V(t_{i+1}^-).
\end{equation*}
Comparison on $(t_i,t_{i+1})$ consequently gives
\begin{equation*}
U(t,\cdot)=V(t,\cdot),
\qquad t\in(t_i,t_{i+1}).
\end{equation*}
Repeating this argument by backward induction for $i=N-2,\ldots,0$ shows that
$U=V$ over the entire horizon. The existence construction starts from the terminal payoff
$G_N(x):=\max\left\{B,\frac{B}{H_T}S_T\right\}.$
{Proceeding backward for $i=N-1,\ldots,0$, suppose that
$G_{i+1}$ has been defined. Define $V$ on $(t_i,t_{i+1})$ by
\eqref{eq:interval_representation_proof}. The moment estimates ensure that the conditional expectation is finite. For $i\geq1$, set
\begin{equation*}
G_i:=\mathcal J_iV(t_i^+,\cdot),
\end{equation*}
which supplies the terminal value for the preceding
interval. Steps 4 and 5 verify the viscosity property on each monitoring interval, while Step 7 verifies the transmission condition. Repeating this backward construction down to $t_0$ yields a piecewise viscosity solution on the entire horizon. This proves existence, and the preceding comparison argument
proves uniqueness. This concludes the proof.}
\end{proof}

\subsection{Proof of Lemma \ref{lem:operator_lipschitz}}\label{appen:lem1}
\textcolor{royalblue}{\begin{proof}
    For any measurable function $\phi$, define
    \[
    Z_\phi(x)
    :=
    \mathbb E^{\mathbb Q}\!\left[
    e^{-r_ih}\phi(X_{i+1})+hBC
    \mid X_i=x
    \right].
    \]
    The coupon term vanishes upon taking differences, applying Jensen's inequality for conditional expectation yields
    \[
    \|Z_\phi-Z_\psi\|_{L^2}
    \leq
    e^{-r_ih}
    \|\phi-\psi\|_{L^2}.
    \] 
    In the continuation region,
    $\mathcal T_i\phi=Z_\phi$. In the call region,
    \[
    \mathcal T_i\phi
    =
    \max\left\{
    \frac{B}{H_i}S_i,\,
    \min(Z_\phi,K_i)
    \right\}.
    \]
    For fixed $A$ and $K$, the mapping
    \[
    z\longmapsto\max\{A,\min(z,K)\}
    \]
    is $1$-Lipschitz. Therefore, across both regions,
    \[
    \|\mathcal T_i\phi-\mathcal T_i\psi\|_{L^2(\mu_i)}
    \leq
    \|Z_\phi-Z_\psi\|_{L^2(\mu_i)}
    \leq
    e^{-r_ih}
    \|\phi-\psi\|_{L^2(\mu_{i+1})}.
    \]
    Since $r_i\geq0$, we have $e^{-r_ih}\leq1$, and therefore
    $\mathcal T_i$ is non-expansive in $L^2$.
    \end{proof}}

\subsection{Proof of Proposition~\ref{prop:errorp}}\label{appen:prop1}
\begin{proof}
Using the exact backward recursion
\[
V_i^h = \mathcal{T}_i V_{i+1}^h,
\]
we write
\[
\widehat V_i - V_i^h
=
\widehat V_i - \mathcal{T}_i V_{i+1}^h
=
\bigl(\widehat V_i - \mathcal{T}_i \widehat V_{i+1}\bigr)
+
\bigl(\mathcal{T}_i \widehat V_{i+1} - \mathcal{T}_i V_{i+1}^h\bigr).
\]
Taking $L^2$ norms and applying the triangle inequality gives
\[
\|\widehat V_i - V_i^h\|_{L^2}
\le
\|\widehat V_i - \mathcal{T}_i \widehat V_{i+1}\|_{L^2}
+
\|\mathcal{T}_i \widehat V_{i+1} - \mathcal{T}_i V_{i+1}^h\|_{L^2}.
\]
By assumption,
\[
\|\widehat V_i - \mathcal{T}_i \widehat V_{i+1}\|_{L^2} \le \varepsilon_i,
\]
and by Lemma~\ref{lem:operator_lipschitz},
\[
\|\mathcal{T}_i \widehat V_{i+1} - \mathcal{T}_i V_{i+1}^h\|_{L^2}
\le
e^{-r_i h} \|\widehat V_{i+1} - V_{i+1}^h\|_{L^2}.
\]
Combining the above inequalities yields
\[
\|\widehat V_i - V_i^h\|_{L^2}
\le
\varepsilon_i + e^{-r_i h}
\|\widehat V_{i+1} - V_{i+1}^h\|_{L^2}.
\]
The desired result follows.
\end{proof}

\subsection{Proof of Theorem~\ref{thm:conver}}\label{appen:thm2}
\textcolor{royalblue}{
\begin{proof}
Define
\[
e_i := \|\widehat V_i - V_i^h\|_{L^2}.
\]
By Proposition~\ref{prop:errorp},
\[
e_i \le \varepsilon_i + e^{-r_i h} e_{i+1},
\qquad i=0,\dots,N-1.
\]
Since the terminal payoff is imposed exactly,
\[
\widehat V_N(X_N)=V_N^h(X_N)=g(X_N),
\]
and hence $e_N=0$. Iterating the preceding inequality backward gives
\begin{align*}
e_0 &\leq \varepsilon_0
+e^{-r_0h}\varepsilon_1
+e^{-(r_0+r_1)h}\varepsilon_2
+\cdots + \exp\left(
-h\sum_{k=0}^{N-2}r_k
\right)\varepsilon_{N-1}.\\
& = \varepsilon_0
+
\sum_{j=1}^{N-1}
\exp\left(
-h\sum_{k=0}^{j-1}r_k
\right)\varepsilon_j.
\end{align*}
Since $r_i\geq0$, every discount factor in this sum is no greater than one. Therefore,
\[
e_0
\leq
\sum_{j=0}^{N-1}\varepsilon_j.
\]
For a fixed time grid, $N$ is finite. Hence, if the accumulated one-step approximation error tends to zero, then
\[
\widehat V_0\longrightarrow V_0^h
\qquad\text{in }L^2.
\]
This completes the proof.
\end{proof}}

\color{royalblue}\section{Discussion of dividends and anti-dilution adjustments }\label{app:antidilution}

\textcolor{royalblue}{The baseline model abstracts from dividends,
rights issues, share placements, and the associated anti-dilution
adjustments in order to focus on the interaction between the
path-dependent reset and call provisions. This appendix briefly
describes how such corporate actions can be incorporated.}

\textcolor{royalblue}{If the underlying stock pays a continuous
dividend yield $q_t$, its risk-neutral dynamics can be written as
\[
dS_t
=
(r_t-q_t)S_t\,dt
+
\sigma(t,\omega_t,H_t)S_t\,dW_t.
\]
Under this dynamics, the stock-price drift component of the local generator
is modified from
$
r_tS_t\partial_{\omega}V$
to
$(r_t-q_t)S_t\partial_{\omega}V.
$
Writing the resulting generator as $\mathcal A^q$, the PPDE in the
continuation region becomes
\[
\partial_tV
+\mathcal A^qV
-r_tV
+BC
=0.
\]
The baseline formulation in \eqref{eq:ppde} is recovered by setting $q_t=0$. Anti-dilution provisions protect holders of convertible securities against changes in conversion value arising from specified corporate actions \citep{woronoff2005antidilution}. Let $\tau$ be a corporate-action date and let $\mathcal E_\tau$ collect the relevant terms of the event. The adjusted conversion price can be represented generally as
\[
H_\tau
=
\Gamma_{\mathrm{ad}}
\bigl(H_{\tau^-};\mathcal E_\tau\bigr),
\]
where $\Gamma_{\mathrm{ad}}$ is determined by the anti-dilution provisions of the bond contract. The corresponding value-matching condition is
\[
V(\tau^-,\omega_{\tau^-},H_{\tau^-})
=
V\!\left(
\tau,\omega_\tau,
\Gamma_{\mathrm{ad}}
\bigl(H_{\tau^-};\mathcal E_\tau\bigr)
\right).
\]
This adjustment is imposed at the corporate-action date and does not enter the local differential operator between corporate-action dates.
}

\textcolor{royalblue}{For the CITIC Bank convertible bond (code:
113021), the general mapping above is specified contractually as
follows \citep{citicbank2019prospectus}. Let $P_0$ and $P_1$ denote
the conversion prices immediately before and after the adjustment,
respectively. Let $n$ be the stock-dividend or capitalisation rate, $k$ the new-share issuance or rights-issue ratio, $A$ the issue or subscription price, and $D$ the cash dividend per share. The contractual adjustment rules are summarised in Table~\ref{tab:citic_antidilution}.}
\textcolor{royalblue}{
\begin{table}[H]
\centering
\captionsetup{labelfont={color=royalblue},textfont={color=royalblue}}
\caption{Conversion-price adjustments for the CITIC Bank convertible bond (code: 113021)}
\label{tab:citic_antidilution}
\color{royalblue} %
\begin{tabularx}{\textwidth}{@{}p{0.18\textwidth}X>{\centering\arraybackslash}p{0.27\textwidth}@{}}
\toprule
Adjustment case & Corporate action & Adjusted conversion price \\
\midrule
Single event
& Stock dividend or capitalisation
& $P_1=\dfrac{P_0}{1+n}$ \\[6pt]
& New-share issuance or rights issue
& $P_1=\dfrac{P_0+Ak}{1+k}$ \\[6pt]
& Cash dividend
& $P_1=P_0-D$ \\
\addlinespace[4pt]
\midrule
\addlinespace[4pt]
Combined events
& Stock dividend or capitalisation together with new-share issuance or rights issue
& $P_1=\dfrac{P_0+Ak}{1+n+k}$ \\[6pt]
& All three events occurring together
& $P_1=\dfrac{P_0-D+Ak}{1+n+k}$ \\
\bottomrule
\end{tabularx}
\end{table}
In the notation of this paper, $P_0=H_{\tau^-}$ and $P_1=H_\tau$,
so the relevant formula gives the explicit form of
$\Gamma_{\mathrm{ad}}(H_{\tau^-};\mathcal E_\tau)$. These adjustments
exclude increases in share capital resulting from conversion of the
bond itself.}
\textcolor{royalblue}{The anti-dilution mapping
$\Gamma_{\mathrm{ad}}$ is distinct from the downward reset mapping
$\Psi$ considered in the main analysis. The reset mapping is
activated by a rolling-window stock price condition, whereas an anti-dilution adjustment is generated by a corporate action and is determined by the corresponding contractual formula. If both adjustments occur on the same date, their order must be specified by the bond contract.}
\textcolor{royalblue}{Dividends may also affect the bondholder's conversion decision because conversion changes the holder's entitlement from bond coupons to stock dividends. A joint treatment of dividend payments, unrestricted early conversion, and contractual anti-dilution adjustments would be a natural extension of the present framework.}

{\color{royalblue}
\section{Greeks across evaluation times}\label{app:greeks_time}

Table~\ref{tab:greeks_time} reports the automatic-differentiation Delta and Gamma estimates under the GBM, CEV, and Heston specifications at three evaluation times around the midpoint of the bond's maturity. This choice avoids the initial and near-maturity regions, where the estimated second-order sensitivities are more strongly affected by sparse-state coverage and terminal-boundary effects.
The selected stock prices represent regions below, at, and above the conversion price $H_t=7.45$.

\begin{table}[H]
\color{royalblue}
\captionsetup{labelfont={color=royalblue},textfont={color=royalblue}}
\centering
\small
\caption{Delta and Gamma estimates across models and evaluation times}
\label{tab:greeks_time}
\begin{tabular*}{0.96\textwidth}{@{\extracolsep{\fill}}lcrrrrrr}
\toprule
& & \multicolumn{3}{c}{Delta} & \multicolumn{3}{c}{Gamma} \\
\cmidrule(lr){3-5}\cmidrule(lr){6-8}
Model & $t$ & $S_t=6.00$ & $S_t=7.45$ & $S_t=9.00$
    & $S_t=6.00$ & $S_t=7.45$ & $S_t=9.00$ \\
\midrule
GBM & 2 & 0.4963 & 4.2313 & 6.8582 & 0.6779 & 2.8952 & 0.4996 \\
    & 3 & 3.5950 & 8.7805 & 8.0942 & 2.3316 & 0.8801 & $-1.4993$ \\
    & 4 & 6.5223 & 10.4686 & 8.8957 & 4.1441 & 1.2353 & $-3.2040$ \\
\addlinespace
CEV & 2 & 0.3847 & 1.9296 & 5.3058 & 0.2936 & 1.5211 & 1.4309 \\
    & 3 & 2.0468 & 6.3439 & 6.7885 & 1.2398 & 2.0219 & $-1.0858$ \\
    & 4 & 3.0471 & 9.0630 & 7.5504 & 2.6026 & 1.5068 & $-2.7225$ \\
\addlinespace
Heston & 2 & 0.5024 & 2.8534 & 5.5170 & 0.4962 & 1.9577 & 1.3667 \\
       & 3 & 1.5651 & 2.5113 & 5.6437 & 0.3916 & 1.5137 & 0.3970 \\
       & 4 & 4.2880 & 7.8627& 8.1144 & 2.3491& 1.3036 & $-1.7125$ \\
\bottomrule
\end{tabular*}

\vspace{0.1cm}
\begin{minipage}{0.95\textwidth}
\footnotesize
\vspace{0.1cm}
\textit{Notes:} The table reports automatic-differentiation estimates under the full-contract specification. Values at the three selected stock prices are obtained by linear interpolation on the stock-price grid. The $t=3$ estimates are taken from the baseline Greek experiment
reported in Section~\ref{sec:ad_greeks}, the $t=2$ and $t=4$ estimates use the same model, contract, simulation, and network settings.
\end{minipage}
\end{table}

}
{\color{royalblue}

\section{Regularisation and out-of-sample stability}
\label{app:regularisation}

This appendix examines whether the baseline neural network estimates are sensitive to explicit regularisation and to the number of simulated training paths. The exercise is motivated by the role of regularised least-squares
methods in controlling the fit of noisy numerical approximations \citep{shirzadi2021generalized}. The main analysis retains the unregularised
specification, corresponding to an $L^2$ penalty of $\lambda=0$. Here, we compare it with
$\lambda\in\{10^{-6},10^{-5},10^{-4},10^{-3}\}$.

The experiment uses the GBM full-contract specification. Each penalty value is evaluated using 12,000 training paths, 4,000 out-of-sample paths, and three random seeds. To eliminate sampling noise across regularisation level, identical simulated paths are reused across all penalty parameters within each seed. At each backward step, 10\% of the training paths are reserved for validation. The model trains for a maximum of 20 epochs, with a patience of three epochs for early stopping and a reduction in the learning rate when the validation loss reaches a plateau. The validation criterion is the mean squared error of standardised continuation-value target.

\begin{table}[H]
\color{royalblue}
\captionsetup{labelfont={color=royalblue},textfont={color=royalblue}}
\centering
\small
\caption{Regularisation path under the GBM full-contract specification}
\label{tab:regularisation_path}
\begin{tabular*}{\textwidth}{@{\extracolsep{\fill}}cccccc}
\toprule
$L^2$ penalty $\lambda$
& Mean price & Price SD
& Validation MSE & MSE SD
& Absolute price change (\%) \\
\midrule
$0$       & 130.96 & 4.11 & 0.040 & 0.0055 & 0.00 \\
\hspace{0.5em}$10^{-6}$ & 131.16 & 4.76 & 0.041 & 0.0060 & 0.15 \\
\hspace{0.5em}$10^{-5}$ & 131.05 & 4.74 & 0.040 & 0.0052 & 0.07 \\
\hspace{0.5em}$10^{-4}$ & 131.30 & 4.32 & 0.039 & 0.0055 & 0.26 \\
\hspace{0.5em}$10^{-3}$ & 130.55 & 6.05 & 0.040 & 0.0043 & 0.32 \\
\bottomrule
\end{tabular*}

\vspace{0.1cm}
\begin{minipage}{0.95\textwidth}
\footnotesize
\vspace{0.1cm}
\textit{Notes:} Each row is based on three random seeds, 12,000 training
paths, and 4,000 out-of-sample paths. Validation MSE is averaged across the
backward steps and seeds. The final column reports the absolute percentage
difference between the mean price at the stated penalty and that obtained with
$\lambda=0$.
\end{minipage}
\end{table}

Table~\ref{tab:regularisation_path} shows that the mean validation MSE is approximately constant, and estimated prices differ by less than 0.32\%. The dispersion across runs is not uniformly reduced by the penalty and is largest at $\lambda=10^{-3}$. This experiment therefore suggests that the valuation is
insensitive to moderate $L^2$ regularisation, which vindicates the choice of unregularised $\lambda=0$ in the baseline specification.

We further assess sensitivity to the training-sample size by repeating the experiment with 3,000 and 6,000 training paths. For these smaller samples, we compare the baseline with $\lambda=10^{-4}$, the penalty producing the lowest mean validation MSE in the 12,000-path experiment.

\begin{table}[H]
\color{royalblue}
\captionsetup{labelfont={color=royalblue},textfont={color=royalblue}}
\centering
\small
\caption{Regularisation results across training-sample sizes}
\label{tab:regularisation_samples}
\begin{tabular*}{\textwidth}{@{\extracolsep{\fill}}cccccc}
\toprule
Training paths & $\lambda$ & Mean price & Price SD
& Validation MSE & Absolute price change (\%) \\
\midrule
\hspace{0.5em}3,000  & $0$       & 128.89 & 8.32 & 0.1780 & 0.00 \\
\hspace{0.5em}3,000  & \hspace{0.5em}$10^{-4}$ & 127.42 & 6.10 & 0.1820 & 1.14 \\
\hspace{0.5em}6,000  & $0$       & 129.93 & 1.81 & 0.1805 & 0.00 \\
\hspace{0.5em}6,000  & \hspace{0.5em}$10^{-4}$ & 129.48 & 2.48 & 0.1760 & 0.35 \\
12,000 & $0$       & 130.96 & 4.11 & 0.0401 & 0.00 \\
12,000 & \hspace{0.5em}$10^{-4}$ & 131.30 & 4.32 & 0.0394 & 0.26 \\
\bottomrule
\end{tabular*}

\vspace{0.1cm}
\begin{minipage}{0.95\textwidth}
\footnotesize
\vspace{0.1cm}
\textit{Notes:} Each row is based on three random seeds and 4,000
out-of-sample paths. For each training-sample size, the final column is measured
relative to the corresponding unregularised estimate.
\end{minipage}
\end{table}

The unregularised mean price increases from 128.89 with 3,000 training paths to 130.96 with 12,000 paths. At the two larger sample sizes, imposing $\lambda=10^{-4}$ changes the mean price by less than 0.35\%; the larger 1.14\% difference at 3,000 paths is consistent with greater sampling variability in the smaller training set. Because the reported standard deviations are based on only three seeds and reflect both simulated-path and network-initialisation variation, they should be interpreted as descriptive rather than as evidence of a monotonic variance effect. The regularisation-path and sample-size results support the numerical stability of the baseline valuation while leaving the main specification unchanged.

}

\clearpage

\begin{landscape}
{\color{royalblue}
\section{Assessment of alternative neural network architectures}
\label{app:architecture_assessment}

Table~\ref{tab:architecture_assessment} reports the initial price estimates, variability across random seeds, and computational performance of alternative network depths, widths, and activation functions under the GBM, CEV, and Heston models.

\begin{table}[H]
\color{royalblue}
\captionsetup{labelfont={color=royalblue},textfont={color=royalblue}}
\centering
\normalsize
\setlength{\tabcolsep}{6.8pt}
\caption{Assessment of alternative neural network architectures across underlying-dynamics specifications}
\label{tab:architecture_assessment}
\begin{tabular}{llrrrrrrr}
\toprule
Model & Measure
& $3\times32$, ReLU
& $2\times64$, ReLU
& \textbf{$3\times64$, ReLU}
& $3\times64$, Tanh
& $3\times64$, ELU
& $4\times64$, ReLU
& $3\times128$, ReLU \\
\midrule
\multirow{5}{*}{GBM}
& Price & 120.77 & 128.42 & \textbf{128.42} & 128.22 & 130.23 & 128.76 & 129.41 \\
& SD & 1.11 & 0.77 & \textbf{0.55} & 0.41 & 0.42 & 0.57 & 0.67 \\
& $\Delta$ (\%) & 5.95 & 0.00 & \textbf{0.00} & 0.15 & 1.41 & 0.27 & 0.77 \\
& Time (s) & 369.5 & 373.5 & \textbf{392.8} & 444.2 & 548.9 & 409.5 & 460.1 \\
\midrule
\multirow{5}{*}{CEV}
& Price & 120.97 & 128.16 & \textbf{128.81} & 127.52 & 129.64 & 128.62 & 129.27 \\
& SD & 1.06 & 0.69 & \textbf{0.87} & 0.63 & 0.83 & 0.61 & 0.15 \\
& $\Delta$ (\%) & 6.08 & 0.50 & \textbf{0.00} & 1.00 & 0.65 & 0.14 & 0.36 \\
& Time (s) & 361.3 & 357.9 & \textbf{464.5} & 419.3 & 427.2 & 495.3 & 541.5 \\
\midrule
\multirow{5}{*}{Heston}
& Price & 127.59 & 128.09 & \textbf{128.69} & 125.61 & 129.15 & 129.26 & 129.20 \\
& SD & 0.44 & 0.26 & \textbf{0.35} & 0.96 & 0.67 & 0.37 & 0.73 \\
& $\Delta$ (\%) & 0.85 & 0.46 & \textbf{0.00} & 2.39 & 0.36 & 0.45 & 0.40 \\
& Time (s) & 727.1 & 739.7 & \textbf{951.3} & 839.3 & 862.4 & 1000.8 & 1064.4 \\
\bottomrule
\end{tabular}%
\vspace{0.1cm}

\begin{minipage}{0.95\linewidth}
\footnotesize
\vspace{0.1cm}
\textit{Notes:}
In the column headings, each architecture is described by its number of hidden layers and its number of neurons per hidden layer. The baseline architecture is shown in bold. Each alternative changes one architectural component while holding the remaining simulation and training settings fixed. All specifications use 104 time steps per year. Price reports the mean estimated initial bond price and SD is its sample standard deviation across three paired simulation and network seeds. Within each seed, common simulated paths and contractual states are used across architectures. The reported $\Delta$ is the absolute percentage deviation from the baseline mean price for the corresponding model.
\end{minipage}
\end{table}
}

\end{landscape}

\clearpage
{\color{royalblue}\section{Notation and definitions}}
\label{app:notation}

\color{royalblue}Table~\ref{tab:notation} summarizes the primary symbols used in the paper. The baseline numerical values of the contract and model parameters are reported in Tables~\ref{tab:cb_params} and~\ref{tab:model_parameters}.
Auxiliary symbols that occur only within individual proofs are defined at their first appearance. For symbols reused across distinct contexts, contextual differences are explicitly clarified to avoid ambiguity..

{\color{royalblue}\small
\renewcommand{\arraystretch}{1.15}
\captionsetup{
  labelfont={color=royalblue},
  textfont={color=royalblue}
}
\begin{longtable}{@{}p{0.25\textwidth}p{0.69\textwidth}@{}}
\caption{Summary of notation and definitions}\label{tab:notation}\\
\toprule
Symbol & Definition \\
\midrule
\endfirsthead

\multicolumn{2}{c}{\tablename~\thetable\ (continued)}\\
\toprule
Symbol & Definition \\
\midrule
\endhead

\midrule
\multicolumn{2}{r}{Continued on next page}\\
\endfoot

\bottomrule
\endlastfoot

\multicolumn{2}{@{}l}{\textit{Contract, path, and PPDE notation}}\\
\addlinespace[2pt]
$t$; $T$ & Current time; contractual maturity.\\
$t_i$; $N$ & The $i$-th monitoring or numerical-grid date; total number of grid intervals.\\
$h$ & Mesh size of the relevant uniform grid. In Theorem~\ref{thm:ppde_cb_viscosity}, it is the spacing between contractual monitoring dates; in Theorem~\ref{thm:conver}, it is the numerical time-step size.\\
$S_t$; $S_i$; $S_0$ & Underlying stock price at time $t$; stock price at $t_i$; initial stock price.\\
$H_t$; $H_i$; $H_T$ & Prevailing conversion price at time $t$; conversion price at $t_i$; conversion price prevailing at maturity.\\
$B$ & Face value of the convertible bond.\\
$C$ & Annual coupon rate; the continuous coupon flow in the PPDE is $BC$.\\
$K_t$; $K_i$ & Contractual redemption price when the issuer call is exercised at time $t$ or $t_i$.\\
$a$; $b$ & Downward-reset threshold ratio; issuer-call threshold ratio.\\
$N_t^{\mathrm{reset}}$, $N_i^{\mathrm{reset}}$ & Number of reset-qualifying observations in the relevant rolling window.\\
$N_t^{\mathrm{call}}$, $N_i^{\mathrm{call}}$ & Number of call-qualifying observations in the relevant rolling window.\\
$N_r$; $N_c$ & Minimum numbers of qualifying observations required to activate reset and call, respectively.\\
$\omega_t$ & Stopped stock-price path $\{S_u:0\leq u\leq t\}$.\\
$\|\omega\|_t$ & Supremum norm of a stopped path, $\sup_{0\leq s\leq t}|\omega_s|$.\\
$\mathcal P_i$ & Ordered rolling-window information or path summary needed to evaluate and update the contractual triggers.\\
$X_i$ & State at $t_i$: $(S_i,H_i,\mathcal P_i)$ under GBM and CEV, and $(S_i,v_i,H_i,\mathcal P_i)$ under Heston.\\
$x$ & A generic realisation of the state $X_i$.\\
$V(t,\omega_t,H_t)$ & Path-dependent convertible-bond value functional. Under a Markovian representation it is written as $V(t,S_t,H_t)$ or $V(t,S_t,v_t,H_t)$.\\
$V^{\mathrm{cont}}$ & Continuation value obtained from the discounted conditional expected future value plus the coupon flow.\\
$g(X_N)$ & Terminal payoff function, $\max\{B,(B/H_T)S_T\}$.\\
$\mathcal C$ & Continuation region in which neither the call nor the reset condition is active.\\
$\Psi(\omega_t,H_t)$ & Contractual downward-reset mapping for the conversion price.\\
$\mathcal A$ & Pathwise infinitesimal generator of the selected stock-price dynamics (GBM, CEV).\\
$\mathcal A^H$ & Pathwise infinitesimal generator associated with the Heston dynamics.\\
$\mu$; $\nu$ & Path-dependent drift and diffusion coefficient functionals used in the interval viscosity analysis.\\
$L$ & Common constant in the pathwise Lipschitz and linear-growth conditions.\\
$\chi$ & Nonnegative Lyapunov functional used in the interval comparison argument.\\
$\bar r$; $p$ & Uniform upper bound for the risk-free rate; exponent in the conversion-ratio moment condition.\\
$C^{1,2}$ & Class of non-anticipative test functionals that are once horizontally and twice vertically differentiable.\\
$\partial_t$; $\partial_\omega$; $\partial^2_{\omega\omega}$ & Horizontal derivative; first- and second-order vertical derivatives in functional It\^o calculus.\\
$\mathbb Q$; $\mathbb E^{\mathbb Q}$ & Risk-neutral probability measure; expectation under that measure.\\
$r_t$; $r_i$; $r$ & Risk-free rate at time $t$; rate at $t_i$; constant rate used in the numerical experiments.\\
$W_t$ & Standard Brownian motion driving the one-factor stock-price dynamics.\\
$\Delta$ & Small continuation-time increment in the definition of $V^{\mathrm{cont}}$; distinct from the Delta Greek $\Delta_t$.\\

\addlinespace[5pt]
\multicolumn{2}{@{}l}{\textit{Model-specific notation}}\\
\addlinespace[2pt]
$\sigma(t,\omega_t,H_t)$ & General path-dependent volatility function.\\
$\sigma$ & Constant volatility parameter under GBM and scale parameter under CEV.\\
$\gamma$ & Elasticity coefficient in the CEV diffusion term $\sigma S_t^\gamma$.\\
$v_t$; $v_0$ & Instantaneous variance under Heston; its initial value.\\
$\kappa$ & Mean-reversion rate of the Heston variance process.\\
$\theta$ & Long-run variance level in the Heston model; distinct from the neural-network parameter vector $\theta_i$.\\
$\eta$ & Volatility-of-variance parameter in the Heston model.\\
$\rho$ & Correlation between the Brownian motions driving the Heston stock and variance processes.\\
$W_t^S$; $W_t^v$ & Brownian motions driving the stock-price and variance processes under Heston.\\
$\boldsymbol\omega_t=(\omega_t^S,\omega_t^v)$ & Augmented stopped path containing the stock-price and variance histories under Heston.\\

\addlinespace[5pt]
\multicolumn{2}{@{}l}{\textit{Monitoring-date transmission and dynamic programming}}\\
\addlinespace[2pt]
$C_i^{\mathrm{trig}}(x)$; $R_i^{\mathrm{trig}}(x)$ & Indicator of call activation; indicator of reset activation at $t_i$.\\
$\Gamma_i x$ & State update that appends the current stock price to the rolling window, removes the oldest observation, and leaves the conversion price unchanged.\\
$\Gamma_i^R x$ & Reset-state update that also replaces the conversion price by the current stock price.\\
$\mathcal J_i$ & Call-first monitoring-date transmission operator linking the post-monitoring and pre-monitoring values.\\
$t_i^-$; $t_i^+$ & Values immediately before and immediately after the contractual decision at monitoring date $t_i$.\\
$V^*$; $V_*$ & Upper- and lower-semicontinuous envelopes of the value functional.\\
$\mathcal T_i$ & Exact one-step backward dynamic-programming operator on the fixed numerical grid.\\
$\phi$, $\psi$ & Generic measurable next-step value functions or viscosity test functions, according to context.\\
$V_i^h$ & Exact value generated by the fixed-grid recursion $V_i^h=\mathcal T_iV_{i+1}^h$.\\
$\mathbb P_i$ & Probability law of the state $X_i$.\\
$L^2(\mathbb P_i)$ & Space of square-integrable functions with respect to $\mathbb P_i$.\\
$\varepsilon_i$ & Population $L^2(\mathbb P_i)$ bound for the one-step neural-network value residual.\\
$D_{t,s}$ & Discount factor $\exp(-\int_t^s r_u\,du)$ used in the viscosity proof.\\
$A_{t,s}$ & Accumulated discounted coupon flow $\int_t^sD_{t,u}BC\,du$.\\
$G_i$ & Terminal value supplied to the interval ending at monitoring date $t_i$ in the backward existence construction.\\

\addlinespace[5pt]
\multicolumn{2}{@{}l}{\textit{Neural-network approximation and numerical assessment}}\\
\addlinespace[2pt]
$\widehat V_i(\cdot;\theta_i)$ & Neural-network approximation of the conditional continuation value at $t_i$.\\
$\widehat{\mathrm{Cont}}_i$; $\mathrm{Cont}_i^{(m)}$ & Fitted continuation value; simulated one-step continuation target for training path $m$.\\
$\mathrm{Conv}_i^{(m)}$ & Conversion value $(B/H_i^{(m)})S_i^{(m)}$ on training path $m$.\\
$\Phi$; $x_i=\Phi(X_i)$ & Feature construction and standardisation map; resulting finite-dimensional network input.\\
$\theta_i$ & Collection of trainable neural-network parameters at time step $i$.\\
$W_\ell$; $b_\ell$ & Weight matrix and bias vector of neural-network layer $\ell$.\\
$h^{(\ell)}$ & Hidden-layer representation at layer $\ell$; unrelated to the time-step size $h$.\\
$\mathcal L_i(\theta_i)$ & Mean squared training loss at time step $i$.\\
$M$; $M_{\mathrm{train}}$; $M_{\mathrm{test}}$ & Number of Monte Carlo paths; numbers of training and independent test paths.\\
$\widetilde X_i^{(m)}$; $\widetilde V_i^{(m)}$ & State and fitted value for test path $m$.\\
$\lambda$ & $L^2$ weight-penalty parameter used in the regularisation analysis.\\
$D_{\mathrm{abs}}$; $D_{\mathrm{rel}}$ & Absolute and relative differences between the DNN and LSMC price estimates.\\
$\Delta_t$; $\Gamma_t$ & Delta $\partial\widehat V_t/\partial S_t$; Gamma $\partial^2\widehat V_t/\partial S_t^2$.\\
$\varepsilon_S$ & Stock-price perturbation used for central finite-difference validation of Delta and Gamma.\\

\addlinespace[5pt]
\multicolumn{2}{@{}l}{\textit{Dividend and anti-dilution extension}}\\
\addlinespace[2pt]
$q_t$ & Continuous dividend yield of the underlying stock.\\
$\tau$; $\mathcal E_\tau$ & Corporate-action date; collection of terms describing that event.\\
$\Gamma_{\mathrm{ad}}$ & Contractual anti-dilution mapping for the conversion price; distinct from the state-update maps $\Gamma_i$ and $\Gamma_i^R$.\\
$P_0$; $P_1$ & Conversion prices immediately before and after an anti-dilution adjustment.\\
$n$; $k$ & Stock-dividend or capitalisation rate; new-share issuance or rights-issue ratio.\\
$A$; $D$ & Issue or subscription price; cash dividend per share in Appendix~\ref{app:antidilution}.\\
\end{longtable}
}
\end{document}